\documentclass[runningheads]{llncs}
\usepackage[T1]{fontenc}
\usepackage{graphicx}
\usepackage{hyperref}
\usepackage{color}

\urlstyle{rm}
\usepackage{amssymb}
\usepackage{amsmath}
\usepackage{algorithm}
\usepackage{algorithmic}
\usepackage{xcolor}
\usepackage{listings}
\usepackage{mdframed}
\usepackage[inline]{enumitem}
\usepackage{microtype}
\usepackage{tikz}
\usepackage{caption}
\usepackage{comment}
\usepackage{thmtools}
\usepackage{cleveref}
\usepackage{parskip}
\usetikzlibrary{arrows, shapes.geometric, automata}
\numberwithin{theorem}{section}

\AtBeginEnvironment{lemma}{\refstepcounter{theorem}}

\AtBeginEnvironment{proposition}{\refstepcounter{theorem}}

\AtBeginEnvironment{corollary}{\refstepcounter{theorem}}

\AtBeginEnvironment{definition}{\refstepcounter{theorem}}

\AtBeginEnvironment{example}{\refstepcounter{theorem}}
\begin{document}
\clearpage{}%
\newcommand{\lcm}{\textnormal{lcm}}
\newcommand{\abc}{\mathrm{\Sigma}}
\newcommand{\trans}{\delta}
\newcommand{\grid}{\mathcal{L}}
\newcommand{\neigh}{\mathcal{V}}
\newcommand{\qinit}{q_0} %
\newcommand{\app}{\mathord{:}}
\newcommand{\conc}{\cdot}
\newcommand{\powset}{\mathcal{P}}
\newcommand{\ultperiod}{\mathcal{T}}
\newcommand{\hierarchy}{\mathfrak{D}}
\newcommand{\A}{\mathrm{A}}
\newcommand{\B}{\mathrm{B}}
\newcommand{\C}{\mathrm{C}}
\newcommand{\ariths}{\mathrm{\Sigma}}
\newcommand{\arithp}{\mathrm{\Pi}}
\newcommand{\arithd}{\mathrm{\Delta}}

\newcommand{\N}{\mathbb{N}}

\newcommand{\fstred}{\triangleright^{\mathrm{F}}}
\newcommand{\fstredl}{\triangleleft^{\mathrm{F}}}
\newcommand{\fsteq}{\diamond^{\mathrm{F}}}
\newcommand{\cared}{\triangleright^{\mathrm{C}}}
\newcommand{\caredl}{\triangleleft^{\mathrm{C}}}
\newcommand{\caeq}{\diamond^{\mathrm{C}}}
\newcommand{\cacomp}{\cared_{\mathrm{comp}}}

\newcommand{\canred}{%
    \mathrel{\begin{tikzpicture}[baseline={(X.base)}]
        \node (X) {$\cared$};
        \draw (-0.18,-0.16) -- (-0.02,-0.16); %
        \draw (-0.16,-0.22) -- (-0.08,-0.11); %
    \end{tikzpicture}}%
}
\newcommand{\canredl}{%
    \mathrel{\begin{tikzpicture}[baseline={(X.base)}]
        \node (X) {$\caredl$};
        \draw (-0.22,-0.22) -- (-0.02,-0.22); %
        \draw (-0.16,-0.28) -- (-0.08,-0.16); %
    \end{tikzpicture}}%
}

\newcommand{\fstleq}{\leq^\mathrm{F}}
\newcommand{\fstgeq}{\geq^\mathrm{F}}
\newcommand{\fstlt}{<^\mathrm{F}}
\newcommand{\fstgt}{>^\mathrm{F}}
\newcommand{\fstdegree}{\equiv^\mathrm{F}}
\newcommand{\mmleq}{\leq^\mathrm{M}}
\newcommand{\mmgeq}{\geq^\mathrm{M}}
\newcommand{\mmlt}{<^\mathrm{M}}
\newcommand{\mmgt}{>^\mathrm{M}}
\newcommand{\mmdegree}{\equiv^\mathrm{M}}
\newcommand{\caleq}{\leq^\mathrm{C}}
\newcommand{\cageq}{\geq^\mathrm{C}}
\newcommand{\calt}{<^\mathrm{C}}
\newcommand{\cagt}{>^\mathrm{C}}
\newcommand{\cadegree}{\equiv^\mathrm{C}}

\newcommand{\tmred}{\triangleright^{\mathrm{TM}}}

\newcommand{\PD}{\mathsf{PD}}
\newcommand{\TM}{\mathsf{TM}}
\newcommand{\zip}{\mathsf{zip}}
\newcommand{\zeros}{\mathsf{zeros}}
\newcommand{\ones}{\mathsf{ones}}
\newcommand{\inv}{\mathsf{inv}}

\newcommand{\orbit}{\mathrm{\Delta}}

\clearpage{}%
\title{Cellular Automaton Reducibility as a Measure of Complexity for Infinite Words}
\titlerunning{CA Reducibility as a Measure of Complexity for Infinite Words}
\author{Markel Zubia\inst{1} \and %
Herman Geuvers\inst{2}}%
\authorrunning{M. Zubia and H. Geuvers}
\institute{Ruhr University Bochum, Germany, \email{markel.zubia@ruhr-uni-bochum.de} \and
Radboud University Nijmegen, The Netherlands, \email{herman@cs.ru.nl}}
\maketitle              %
\begin{abstract}
Infinite words, also known as streams, hold significant interest in computer
science and mathematics, raising the natural question of how their complexity
should be measured. We introduce \emph{cellular automaton reducibility} as a
measure of stream complexity: $\sigma$ is at least as complex as $\tau$ when
there exists a cellular automaton mapping $\sigma$ to $\tau$. This enables the
categorization of streams into degrees of complexity, analogous to Turing
degrees in computability theory. We investigate the algebraic properties of the
hierarchy that emerges from the partial ordering of degrees, showing that it is
not well-founded and not dense, that ultimately periodic streams are ordered by
divisibility of their period, that sparse streams are atoms, that maximal
streams have maximal subword complexity, and that suprema of sets of streams do
not generally exist. We also provide a pseudo-algorithm for classifying streams
up to this reducibility.
\keywords{Cellular Automata, Stream, Transducer, Infinite Sequence}
\end{abstract}

\section{Introduction}

The study of complexity is a central theme in various branches of theoretical
computer science and mathematics. Computational complexity theory, for instance,
quantifies the difficulty of decision problems and
algorithms~\cite{arora2009computational}. Similarly, computability theory
classifies problems based on their level of
(un)decidability~\cite{DBLP:books/daglib/0016363}. Within this broad landscape,
\emph{streams}, defined as infinite one-sided sequences over a finite alphabet,
play an important role. They arise naturally and are studied extensively in many
areas, including formal language theory~\cite{DBLP:books/automatic-sequences},
automata theory~\cite{pettorossi2022automata}, number
theory~\cite{berthe2010combinatorics,fernando2020p}, and dynamical
systems~\cite{hirsch2013differential,lind2021introduction}. Given their
ubiquity, it is relevant to ask about the \emph{complexity of streams}
themselves.

An approach for measuring stream complexity is to study transformations via
computational models. To be more specific, consider an automaton $M$ that takes
stream $\sigma$ as input and returns stream $\tau$ as output. Then, one can say
that $\sigma$ has enough information to reconstruct $\tau$; or, in other words,
$\sigma$ is at least as complex as $\tau$. Turing machines, while powerful,
present difficulties in this setting as all computable streams collapse into a
single equivalence class. Consequently, weaker models have been explored, such
as Mealy machines~\cite{rayna1974degrees,DBLP:journals/ita/Belovs08} and
finite-state transducers
(FSTs)~\cite{DBLP:journals/int/EndrullisHK11,DBLP:conf/cwords/EndrullisKSW15,endrullis2018degrees,DBLP:journals/logcom/Kaufmann23}.
These approaches motivate further investigations into stream transformations via
alternative models.

The objective of this paper is to establish a framework for comparing the complexity
of streams based on the computational model of cellular automata.
Cellular automata are discrete dynamical systems where spatially arranged cells
undergo updates over time. Each cell has content, usually referred to as the
state of the cell, which is updated at each time step according to a local rule
that depends solely on the states of the cell and of its neighboring
cells~\cite{wolfram1983statistical}. In this work, we introduce a measure of
stream complexity based on one-dimensional cellular automata~(1CA), by treating
streams as infinite one-sided arrays of cells.
This provides a mathematically natural definition for comparing streams, which
aligns with classical definitions of cellular
automata~\cite{wolfram1983statistical}. 

\begin{example} 
    Let $\sigma$ and $\tau$ be binary streams as shown below. Then, $\sigma$ can
    be transformed into $\tau$ by a 1CA with the following rule: a cell with
    state 0 becomes 1 if there is a 1 to its left or both its left and right
    neighbors have state 0; whereas any other cell transitions to state 0.
    Therefore, we say that $\sigma$ is at least as complex as $\tau$, or, more
    briefly, that $\sigma$ \emph{reduces to} $\tau$.

    \begin{center}
        \begin{tikzpicture}
        \node[align=left, inner sep=0pt] (equations) at (0,0) {$
            \begin{aligned}
            \sigma &= 10100100010000100000\dots\\~\\
            \tau &=   01010011001110011110\dots
            \end{aligned}
        $};

        \node[rotate=-90, scale=1.5] at (0.15,0.25) {$\}$};
        \node[rotate=-90] at (0.15,-0.1) {$\to$};
        \end{tikzpicture}
    \end{center}
\end{example}

The use of 1CAs for stream transformations offers two advantages over Mealy
machines and FSTs:
\begin{enumerate}
\item 1CAs induce a finer-grained hierarchy, which allows more nuanced
comparisons between streams. For example, while FSTs collapse all periodic
streams into a single degree, 1CAs are able to distinguish between them.
Similarly, the well-known period-doubling (PD) and Thue-Morse (TM) sequences,
defined formally in \cref{exa.PD,exa.TM}, are equivalent for FSTs, whereas we
conjecture that PD is strictly more complex than TM for 1CA reducibility.
\item The simplicity of 1CA reductions enables efficient algorithms for stream
classification (see \cref{sec.algorithm}), which in turn motivates its use in
practical applications. By contrast, FST synthesis is computationally
intractable~\cite{DBLP:conf/icalp/FiliotJLW16,DBLP:conf/mfcs/ExibardFJ18}.
\end{enumerate}

FSTs differ fundamentally from 1CAs in their way of transforming streams. An FST
performs a sequential left-to-right pass and uses a constant (albeit arbitrarily
large) amount of memory. In contrast, a 1CA applies a uniform local rule to all
cells globally, without a sequential aspect. Consequently, although this paper
engages with structures previously studied in the hierarchies induced by other
kinds of
automata~\cite{endrullis2018degrees,AleksandrsBelovs2007,rayna1974degrees}, our
treatment of these structures is inherently different. For example, work on FSTs
often relies on pumping-lemma-like arguments, whereas such reasoning is not
meaningful for 1CAs. Instead, we introduce novel techniques, including
diagonalization arguments (à la Cantor) and streams with carefully constructed
local neighborhoods.

\paragraph{Contribution}
We define cellular automaton transformations on streams based on the concept of
\emph{cellular automaton reducibility} in \cref{sec.CAreduction}, and we compare
it with FST reducibility in \cref{sec:fst_comparison}. In \cref{sec.degrees}, we
introduce the notion of \emph{degree} as the equivalence class induced by 1CA
reducibility, that is, a class of streams that can be reduced to one another. We
prove key structural properties of the hierarchy of stream degrees in the
subsequent \cref{sec.periodic,sec.sparse,sec:maxima}, including:
\begin{itemize} 
    \item it is not dense as there exist so-called ``atom'' degrees;
    \item it is not well-founded as infinite descending chains exist; 
    \item there exist infinite ascending chains; 
    \item ultimately periodic streams are ordered by divisibility; 
    \item sparse streams are either atoms or ultimately constant;
    \item maximal streams must have maximal subword complexity; 
    \item suprema of sets of streams do not generally exist. 
\end{itemize} 
\Cref{sec.algorithm} introduces an algorithm for checking if a stream reduces to
another, with proofs of correctness. In \cref{sec.alternatives}, we consider
alternative definitions and compare them with the main method. We start with
related work in \cref{sec.related}.

\section{Related Work}
\label{sec.related}

\paragraph{Kolmogorov Complexity}
Kolmogorov complexity measures the information content of a word as the size of
the smallest Turing machine that generates it~\cite{DBLP:series/txcs/LiV19}.
While extensively studied for infinite words~\cite{DBLP:journals/tcs/Staiger07},
it is undecidable even for finite words and difficult to approximate in
practice~\cite{zenil2020review}. One alternative is to classify streams by
mutual Turing transducibility, but this collapses all computable streams into a
single class, motivating investigations of reducibility via weaker computational
models.

\paragraph{Mealy Machines and Finite-State Transducers}
Mealy machines were among the first weaker models used to study stream
reducibility, initially in~\cite{rayna1974degrees} and later on
in~\cite{DBLP:journals/ita/Belovs08}. These automata output one letter per input
letter, thereby producing a corresponding output stream. Subsequent work shifted
to finite-state transducers (FSTs), which provide more flexibility by outputting
a finite word per input
letter~\cite{DBLP:journals/int/EndrullisHK11,DBLP:conf/cwords/EndrullisKSW15,endrullis2018degrees,DBLP:journals/logcom/Kaufmann23}.
These works study the algebraic properties of the hierarchies of streams induced
by reductions via Mealy machines and FSTs, paying close attention to extrema,
chains, antichains, density, and atom degrees (minimal nontrivial degrees).

\paragraph{Subword Complexity} 
Subword complexity measures a stream's complexity based on the diversity of its
finite substrings~\cite{DBLP:journals/tcs/EhrenfeuchtLR75,AleksandrsBelovs2007}.
Since this measure does not impose computational restrictions, it can assign
relatively low complexity measures to streams with high computational
complexity, including undecidable streams, which limits its utility from a
computational standpoint.

\paragraph{$\omega$-Languages, the Wadge Preorder, and the Wagner Hierarchy}
An \mbox{$\omega$-language} is a set of streams classified by the automata that
recognize them. The most prominent class is the \mbox{$\omega$-regular}
languages, recognized by Büchi automata~\cite{thomas1990automata}. Broader
classes include context-free~\cite{thomas1990automata,finkel2003borel} and
recursive $\omega$-languages~\cite{finkel2014ambiguity}. Closely related,
the Wadge preorder classifies \mbox{$\omega$-languages} via reductions through
continuous functions~\cite{van2012wadge}, and the Wagner hierarchy classifies
\mbox{$\omega$-regular} languages according to their position in the Wadge
hierarchy~\cite{DBLP:books/omega-languages}. Unlike these approaches, which
focus on language families, our work addresses the complexity of individual
streams. 

\paragraph{Computability- and Complexity-Theoretical Reductions}
Computability theory introduces several notions of reducibility that allow the
classification of decision problems, such as many-one reductions and Turing
reductions~\cite{shapiro1956degrees,DBLP:books/daglib/0016363}. Direct analogues
of these reductions are obtained by restricting to a polynomial runtime, and are
extensively studied in complexity theory~\cite{arora2009computational}. Our
notion of 1CA reducibility shares similarities with these classical reductions
but is specifically designed for stream comparison. 

\section{Preliminaries}
\label{sec.preliminaries}

We adopt Von Neumann ordinals, with $0 := \emptyset$ and $n + 1 := n \cup
\{n\}$. This means that $1 = \{0\}$, $2 = \{0, 1\}$, and so on. The first limit
ordinal is $\omega = \{0, 1, 2, \dots\}$. We write $n \mid m$ to denote that $n$
divides $m$. Congruence modulo $m$ is written as $a \equiv b \pmod{m}$. The
greatest common divisor and least common multiple of $n$ and $m$ are
respectively denoted by $\gcd(n, m)$ and $\text{lcm}(n, m)$.
Given sets $X$ and $Y$, a function with domain $X$ and codomain $Y$ is written
as $f : X \to Y$, and we denote the set of such functions by $Y^X$. Function
composition is given by $(g \circ f)(x) = g(f(x))$. A partial function $f : X
\rightharpoonup Y$ is a function $f : X \to Y \cup \{\bot\}$, where $\bot$
represents undefined values. We use underscores, ``$\_$'', as placeholders or
wildcards in function definitions; for instance, $f(\_, \_, x) = x$ defines the
function that ignores its first two arguments and returns the third.

\subsection{Finite and Infinite Words}

An alphabet is a finite set with at least two elements, and its members are
referred to as letters. Given an alphabet $\abc$, the set of words of length $n
\in \omega$ is $\abc^n$. A finite word $w \in \abc^n$ has length $|w| = n$. The
empty word is $\varepsilon$, with $\abc^0 = \{\varepsilon\}$. The set of all
finite words over $\abc$ is given by the Kleene closure $\abc^* = \bigcup_{n \in
\omega} \abc^n$. We also define $\abc^{\leq n} = \{w \in \abc^* \mid |w| \leq
n\}$. For $w \in \abc^*$ and $a \in \abc$, we denote word extension by $a\app
w$, meaning $a$ is prepended to $w$. Concatenation of words $w_1$ and $w_2$ is
denoted $w_1 \cdot w_2$, recursively defined by $\varepsilon \cdot w = w$ and
$(a\app w_1) \cdot w_2 = a\app (w_1 \cdot w_2)$. Repeated concatenation is
denoted by $w^n$, where $w^0 = \varepsilon$ and $w^{n+1} = w \cdot w^n$. For
binary words $w \in 2^{=n}$, we define the bitwise complement $\overline{w} \in
2^{=n}$ by $w(i) = 1 \Leftrightarrow \overline{w}(i) = 0$. We let $w(n:m) = w(n)
\cdot w(n+1) \cdots w(m)$. We will frequently refer to ``shifts'' of words: for
$w \in \abc^n$ and $i < n$, the $i$-shift is the cyclic permutation $w(i+1,
\dots, n-1) \cdot w(0, \dots, i)$.

An infinite stream over $\abc$ is a function $\sigma : \omega \to \abc$, also
seen as an element $\sigma \in \abc^\omega$. Streams are deconstructible as
$\sigma = a\app\sigma'$, where $a \in \abc$ is the head and $\sigma' \in
\abc^\omega$ is the tail. We use $2 = \{0, 1\}$ to denote the two-letter
alphabet, and the set of binary streams is $2^\omega$, known as the Cantor set.
Streams can also be constructed by concatenating infinitely many finite words,
denoted by $\prod_{i = 1}^\infty w_i$ for a sequence $\{w_i\}_{i \in \omega}$ of
words over $\abc$.

\subsection{Stream construction}\label{sec:stream_construction}

There are several methods for constructing streams, including automatic
sequences~\cite{DBLP:books/automatic-sequences}, morphic
words~\cite{DBLP:books/daglib/0025614}, Lindenmayer
systems~\cite{lindenmayer1968mathematical}, and Toeplitz
words~\cite{jacobs19690,cassaigne1997toeplitz}. We define streams coinductively
by the means of recurrent equations, as it is an expressive yet compact method
for constructing these.

\begin{example}\label{exa.TM} The \emph{Thue-Morse sequence}, with
code~[\href{https://oeis.org/A010060}{A010060}] in \emph{The On-Line
Encyclopedia of Integer Sequences} (OEIS)~\cite{oeis}, is defined
as $\text{TM} = 0 \app X$, where $X = 1\app\zip(X, Y)$, $Y = 0\app\zip(Y, X)$, and
$\zip(x\app\sigma, \tau) = x\app\zip(\tau, \sigma)$. Unfolding this
coinductive definition, 
\begin{align*}
\TM &= 0 \app X \\
    &= 0 \app 1 \app \zip(X, Y) \\
    &= 0 \app 1 \app \zip(1 \app \zip(X, Y), Y)\\
    &= 0 \app 1 \app 1 \app \zip(Y, \zip(X, Y))\\
    &= 0110100110010110\dots
\end{align*}

\begin{example}\label{exa.PD}
The \emph{period doubling sequence}, with
code~[\href{https://oeis.org/A096268}{A096268}] in OEIS~\cite{oeis}, is $\PD =
\zip(\ones, \inv(\PD))$, where $\ones = 1\app\ones$, and $\inv(x\app\sigma) =
1\app\inv(\sigma)$ if $x = 0$ and $0\app\inv(\sigma)$ otherwise. Unfolding this
definition yields $\PD = 1011101010111011\dots$
\end{example}

\end{example}

\subsection{Finite state transducers}\label{sec:prelim_fst}

Finite state transducers (FSTs) are finite automata with the ability to
transform one stream into another. They achieve this by processing each letter
from the input stream in a left-right pass, and producing a finite word as
output for each letter. For a detailed introduction to FSTs, we refer
to~\cite{DBLP:books/automatic-sequences}.

\begin{definition}
    A \emph{finite state transducer} (FST) is given as a tuple $(Q, \abc,
    \qinit, T, \lambda)$, where $Q$ is a finite set of \emph{states}, $\abc$ is
    a finite set with at least two elements called the \emph{alphabet}, $\qinit
    \in Q$ is the \emph{initial state}, $T : \abc \times Q \to Q$ is the
    \emph{transition function}, and $\lambda : \abc \times Q \to \abc^*$ is the
    \emph{output function}.
\end{definition}

\begin{definition}
    Given an FST $M = (Q, \abc, \qinit, T, \lambda)$, the \emph{stream
    transformation} $f_M : Q \times \abc^\omega \rightharpoonup \abc^\omega$
    corresponding to $M$ is 
    \[
    f_M(q, a\app\sigma) = \lambda(q, a) \conc f_M(T(q, a), \sigma),
    \] 
    and we use the shorthand notation $f_M(\sigma) = f_M(\qinit, \sigma)$.
\end{definition}

These transformations are modeled via partial functions, because FSTs can output
the empty word indefinitely, making the result finite and thus not a stream. 

FSTs provide a structured way to compare the complexity of
streams~\cite{DBLP:journals/int/EndrullisHK11,DBLP:journals/ita/Belovs08,rayna1974degrees}.
If a stream $\sigma$ can be transformed into another stream $\tau$ via an FST,
$\sigma$ is considered at least as complex as $\tau$. This leads to the formal
definition of FST reductions below.

\begin{definition}
    An \emph{FST reduction} is a relation $\fstred\subseteq \abc^\omega \times
    \abc^\omega$ where for all $\sigma, \tau \in \abc^\omega$, we have $\sigma
    \fstred \tau$ if and only if there exists an FST $M$ such that $f_M(\sigma)
    = \tau$. 
    
    Additionally, we let $\sigma \fstredl \tau \Leftrightarrow \tau \fstred
    \sigma$, and $\fsteq = \fstred \cap \fstredl$. 
\end{definition}

\begin{example}\label{ex:fst_zip} 
    As defined in \cref{sec:stream_construction}, the zip function interleaves two streams.
    \Cref{fig:fst_zip} shows an FST that reduces $\text{zip}(\sigma, \tau)$ to
    $\sigma$, which implies that $\text{zip}(\sigma, \tau) \fstred \sigma$. Each
    transition edge is labeled $a|w$, meaning letter $a$ is read and word $w$ is
    returned as output.
\end{example}

\begin{example}\label{ex:fst_pd_to_tm} \Cref{fig:fst_pd_to_tm} shows an FST that
    reads the period doubling sequence and outputs the Thue-Morse sequence,
    introduced in \cref{exa.TM,exa.PD}.
\end{example}

\begin{figure}[tb]
\begin{minipage}{0.45\textwidth}
    \centering
    \begin{tikzpicture}[->, >=latex,auto,node distance=3cm]
        \node[state] at (0, 0) (q0) {$q_0$};
        \node[state] at (2.5, 0) (q1) {$q_1$};

        \draw 
        (q0) edge[->, bend left] node[above] {$0|0$, $1|1$} (q1) 
        (q1) edge[->, bend left] node[below] {$0|\varepsilon$, $1|\varepsilon$} (q0) 
        ;

        \draw[->] (-1, 0) -- (q0);
    \end{tikzpicture}
    \captionof{figure}{An FST reducing $\text{zip}(\sigma, \tau)$ to $\sigma$ for any $\sigma, \tau \in \abc^\omega$.}
    \label{fig:fst_zip}
\end{minipage}
\hfill
\begin{minipage}{0.45\textwidth}
    \centering
    \begin{tikzpicture}[->, >=latex,auto,node distance=3cm]
        \node[state] at (0, 0) (q0) {$q_0$};
        \node[state] at (2.5, 1) (q1) {$q_1$};
        \node[state] at (2.5, -1) (q2) {$q_2$};

        \draw 
        (q0) edge[->] node[above] {$0|00$} (q1) 
        (q0) edge[->] node[below] {$1|01$} (q2) 
        (q1) edge[->, bend right] node[left] {$1|1$} (q2) 
        (q2) edge[->, bend right] node[right] {$1|0$} (q1) 
        (q1) edge[->, loop right] node[right] {$0|0$} (q1) 
        (q2) edge[->, loop right] node[right] {$0|1$} (q2) 
        ;

        \draw[->] (-1, 0) -- (q0);
    \end{tikzpicture}
    \captionof{figure}{An FST reducing $\PD$ to $\TM$.}
    \label{fig:fst_pd_to_tm}
\end{minipage}
\end{figure}

\section{Cellular Automaton Reducibility}
\label{sec.CAreduction}
This section introduces the notion of stream reducibility via 1D cellular
automata (1CA), and analyzes some of its basic algebraic properties. We
interpret streams as one-dimensional configurations of cells, which can be
updated by applying a local rule globally, yielding subsequent streams. The
definition introduced here aligns with the general notion of 1D cellular
automaton studied in classical work~\cite{wolfram1983statistical}, where the
local rule maps to a single letter. For alternative definitions of cellular
automaton reducibility, we refer to \cref{sec.alternatives}.

\begin{definition}
    A \emph{1D cellular automaton} (1CA) is a tuple $(\abc, N, \delta)$ where
    $\abc$ is a finite set with at least two elements called the
    \emph{alphabet}, $N \in \N$ is the \emph{neighborhood radius}, and
    $\delta : (\abc \cup \{\bot\})^{2N+1} \to \abc$ is the \emph{local rule}.
\end{definition}

\begin{definition}
    Given a stream $\sigma \in \abc^\omega$ and a 1CA $M = (\abc, N, \delta)$,
    the (local) \emph{neighborhood function} is the map 
    \[
    \neigh_{\sigma, N}(i) = (\sigma(i-N), \dots,\sigma(i),\dots,\sigma(i+N)),
    \] 
    where we let $\sigma(j) = \bot$ for any $j < 0$. The \emph{global update rule}
    of $M$ is a stream transformation $f_M : \abc^\omega \to \abc^\omega$ defined as 
    \[
    f_M(\sigma) = \delta \circ \neigh_{\sigma, N}.
    \]
\end{definition}

\begin{example}\label{ex:1ca_xor} 
    Fixing a binary alphabet, $\abc=2=\{0,1\}$, we can define a 1CA $M$ such that
    \[
    f_M(x\app y\app\sigma) = (x \oplus y) \app f_M(y\app\sigma),
    \] 
    where ``$\oplus$'' denotes addition modulo two (i.e.\ XOR). As each cell
    only needs to look at its immediate neighbor to the right, it is enough to
    let $N=1$ and define $\trans(\_, x, y) = x \oplus y$. Interestingly, this
    operation reduces the well-known sequences $\TM \cared
    \PD$~\cite{DBLP:journals/int/EndrullisHK11,DBLP:journals/delta-orbits}.
\end{example}

\begin{example}\label{ex:1ca_hello} 
    Given $\abc=\{a,b,c\}$, we can define a 1CA $M = (\abc, N, \delta)$ such
    that $f_M(\sigma) = ab \cdot \sigma$ by letting $N=2$ and defining 
    \[
        \delta(x, y, z, \_, \_) = \begin{cases}
            a \quad &\text{ if } x = y = \bot,\\
            b \quad &\text{ else if } x = \bot,\\
            x \quad &\text{ else.}
        \end{cases}
    \]
\end{example}

Through the following definition, we introduce the notion of 1CA reducibility as
the ability to transform one stream into another in one global update step.

\begin{definition}\label{def.cared}
    A \emph{1CA reduction} is a relation $\cared\subseteq \abc^\omega \times
    \abc^\omega$ where for all $\sigma, \tau \in \abc^\omega$, we have $\sigma
    \cared \tau$ if and only if there exists a 1CA $M$ such that $f_M(\sigma)
    = \tau$. 

    Additionally, we let $\sigma \caredl \tau \Leftrightarrow \tau \cared
    \sigma$, and $\caeq = \cared \cap \caredl$. 
\end{definition}

We show transitivity in the next proposition. Therefore, even though 1CA
reducibility represents a single update step, we can think of it as taking
finitely many steps.

\begin{proposition}\label{thm:preorder}
        Relation $\cared$ is a preorder.
\end{proposition}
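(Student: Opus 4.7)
The plan is to verify the two defining properties of a preorder: reflexivity and transitivity.

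For reflexivity, I would exhibit the trivial 1CA $M = (\Sigma, 0, \delta)$ with neighborhood radius $N = 0$ and local rule $\delta(x) = x$. Then $f_M(\sigma)(i) = \delta(\sigma(i)) = \sigma(i)$ for every $i$, so $\sigma \cared \sigma$. Note that since $N = 0$, the input to $\delta$ is never $\bot$ (as $\sigma(i)$ is defined for all $i \geq 0$), so this is well-posed.

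Transitivity is the substantive part. Given 1CAs $M_1 = (\Sigma, N_1, \delta_1)$ and $M_2 = (\Sigma, N_2, \delta_2)$ with $f_{M_1}(\sigma) = \tau$ and $f_{M_2}(\tau) = \rho$, I would construct a composite 1CA $M = (\Sigma, N_1 + N_2, \delta)$ such that $f_M(\sigma) = \rho$. The underlying observation is that cell $i$ of $\rho$ depends on cells $i - N_2, \dots, i + N_2$ of $\tau$, and each of those depends on cells within distance $N_1$ in $\sigma$, so the total dependency radius is $N_1 + N_2$. Given a window $(x_{-N_1-N_2}, \dots, x_{N_1+N_2})$ of width $2(N_1+N_2)+1$ from $\sigma$ centered at position $i$, the rule $\delta$ would first compute intermediate values $y_{-N_2}, \dots, y_{N_2}$ corresponding to entries $\tau(i-N_2), \dots, \tau(i+N_2)$, by applying $\delta_1$ to the appropriate sub-windows $(x_{k-N_1}, \dots, x_{k+N_1})$; then $\delta$ would return $\delta_2(y_{-N_2}, \dots, y_{N_2})$.

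The main obstacle is the proper handling of the $\bot$ boundary marker at the left edge, because $\delta$ cannot directly observe the absolute position $i$ and must determine locally whether an intermediate value $\tau(i+k)$ should be $\bot$ (this happens exactly when $i+k < 0$). The key observation is that the center entry $x_k$ of the sub-window equals $\sigma(i+k)$, which is $\bot$ if and only if $i+k < 0$; so $\delta$ can decide locally, for each $k \in \{-N_2, \dots, N_2\}$, whether to set $y_k = \bot$ or $y_k = \delta_1(x_{k-N_1}, \dots, x_{k+N_1})$. With this bookkeeping, a straightforward unfolding of the definitions of $\neigh$ and $f_M$ shows that $f_M(\sigma)(i) = f_{M_2}(f_{M_1}(\sigma))(i) = \rho(i)$ for every $i \in \omega$, yielding $\sigma \cared \rho$ as required.
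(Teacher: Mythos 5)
Your proof is correct and follows essentially the same route as the paper: reflexivity via the identity rule, and transitivity via a composite 1CA of radius $N_1+N_2$ whose local rule applies $\delta_1$ to the $2N_2+1$ sub-windows and then $\delta_2$ to the results. Your treatment of the left boundary (deciding locally, from the center entry of each sub-window, whether the intermediate value should be $\bot$) is in fact more careful than the paper's proof, which silently passes $\delta_2$ the letter $\delta_1(v_j)$ even at positions where the true neighborhood of $\tau$ contains $\bot$.
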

\begin{proof}
    Let $\sigma, \tau, \rho \in \abc^\omega$.
    \begin{itemize}
        \item We get $\sigma \cared \sigma$ through the identity 1CA, $\trans(x) = x$. 
        \item Suppose $\sigma \cared \tau$ via $M_1 = (\abc, N_1, \trans_1)$ and
        $\tau \cared \rho$ via $M_2 = (\abc, N_2, \trans_2)$. Then, there exists
        $M_3 = (\abc, N_1 + N_2, \trans_3)$ such that $\sigma \cared \rho$ via
        $M_3$ by letting 
        \[
        \trans_3(x_0, \dots, x_{2(N_1 + N_2)}) := \trans_2(\trans_1(v_0), \dots,\trans(v_{2 N_2})),
        \]
        where $v_i := (x_{i}, \dots, x_{i+2N_1})$. \qed
    \end{itemize}
\end{proof}

\begin{corollary}
    $\caeq$ is an equivalence relation. \qed
\end{corollary}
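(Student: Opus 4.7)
The plan is to derive the three properties of an equivalence relation (reflexivity, symmetry, transitivity) directly from \cref{thm:preorder} together with the definition $\caeq = \cared \cap \caredl$, where $\caredl$ is the converse of $\cared$. Almost nothing new needs to be proved; the work has already been done in the preceding proposition.

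First, for reflexivity, I would observe that \cref{thm:preorder} gives $\sigma \cared \sigma$ for every $\sigma \in \abc^\omega$, which immediately yields $\sigma \caredl \sigma$ as well, and hence $\sigma \caeq \sigma$. For symmetry, I would simply unfold the intersection: if $\sigma \caeq \tau$, then $\sigma \cared \tau$ and $\tau \cared \sigma$, which is exactly the statement that $\tau \cared \sigma$ and $\sigma \cared \tau$, i.e., $\tau \caeq \sigma$. For transitivity, I would take $\sigma \caeq \tau$ and $\tau \caeq \rho$, extract the four reductions $\sigma \cared \tau$, $\tau \cared \sigma$, $\tau \cared \rho$, $\rho \cared \tau$, and apply the transitivity half of \cref{thm:preorder} twice to obtain $\sigma \cared \rho$ and $\rho \cared \sigma$, hence $\sigma \caeq \rho$.

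There is no substantive obstacle here; the only thing to be careful about is presentation, namely explicitly invoking that the converse of a preorder is itself a preorder (so that transitivity passes to $\caredl$) and noting that the intersection of a relation with its converse is automatically symmetric. Given the elementary nature of the statement, I would keep the proof to just a few lines, essentially as sketched above.
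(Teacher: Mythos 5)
Your proof is correct and matches the paper's intent: the paper states this as an immediate corollary of \cref{thm:preorder} with no written proof, and your argument is exactly the standard one that the intersection of a preorder with its converse is reflexive, symmetric, and transitive. Nothing is missing.
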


\Cref{ex:1ca_hello} illustrates how a 1CA can be used to prepend the string $ab$
to a stream. This specific case motivates a more general result: 1CAs can insert
or remove finite strings, as stated in the following proposition.

\begin{proposition}\label{thm:append}
    For any $\sigma \in \abc^\omega$, $w \in \abc^*$, we have $w \cdot \sigma \caeq \sigma$.
\end{proposition}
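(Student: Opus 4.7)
The plan is to prove the two inequalities $\sigma \cared w \cdot \sigma$ and $w \cdot \sigma \cared \sigma$ separately by exhibiting explicit 1CAs. Let $k = |w|$ and in both cases take neighborhood radius $N = k$; if $w = \varepsilon$ the statement is trivial via the identity rule, so assume $k \geq 1$.

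For the prepending direction $\sigma \cared w \cdot \sigma$, the idea is to exploit the $\bot$ padding at the left boundary as positional information for the first $k$ output cells, generalizing the construction in \Cref{ex:1ca_hello}. Concretely, define the local rule on an input tuple $(x_{-k}, \dots, x_k)$ to return $w(k - j)$ whenever there are exactly $j > 0$ leading $\bot$ entries, and to return the leftmost entry $x_{-k}$ otherwise. For an output position $i < k$ the neighborhood $\neigh_{\sigma, k}(i)$ has precisely $k - i$ leading $\bot$'s, so the rule emits $w(i)$; for $i \geq k$ there are no $\bot$'s and the emitted letter is $\sigma(i - k)$. This matches $w \cdot \sigma$ position by position.

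For the removing direction $w \cdot \sigma \cared \sigma$ the 1CA is even simpler: the local rule just returns the rightmost entry of its neighborhood. Writing $\tau = w \cdot \sigma$, we have $\tau(i + k) = \sigma(i)$ for every $i \in \omega$, and the rightmost entry of $\neigh_{\tau, k}(i)$ is exactly $\tau(i + k) = \sigma(i)$, as required.

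The main obstacle is essentially bookkeeping: one must make sure that the neighborhood radius is large enough that the prepending rule can still see past the end of the $\bot$ padding at position $i = k$, and that the removing rule's shift by $k$ never reaches back into the $\bot$ region. Both concerns are resolved by the uniform choice $N = k$. As an alternative, one could give a slicker proof by induction on $|w|$ using \Cref{ex:1ca_hello} together with transitivity from \Cref{thm:preorder}, but the direct construction above handles arbitrary alphabets cleanly in one step.
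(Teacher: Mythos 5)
Your proof is correct and uses essentially the same ideas as the paper: detecting the left boundary via the $\bot$ padding to emit the prepended letters, and reading the rightmost neighbor to drop a prefix. The only difference is packaging — the paper handles one letter at a time with radius $1$ and invokes transitivity (\Cref{thm:preorder}), whereas you do the whole word $w$ in one step with radius $|w|$; both are fine, and you even note the transitivity variant yourself.
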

\begin{proof}
    Given letter $a \in \abc$ and a stream $\sigma \in \abc^\omega$, we can
    append $a$ to $\sigma$ with $M = (\abc, 1, \delta)$ where 
    \[
    \delta(x, y, \_) = \begin{cases}
        a \quad &\text{if } x = \bot,\\
        x \quad &\text{else}.
    \end{cases}
    \]
    Since $\cared$ is transitive, 1CAs can insert or prepend finite words.

    Symmetrically, since the tail of a stream can be computed with the 1CA where
    $\delta(\_, \_, x) = x$, finitely many letters can be removed by
    transitivity.
    \qed
\end{proof}

\begin{corollary}\label{thm:mutations} 
    Relation $\cared$ is closed under finite mutations; in other words, for all
    $\sigma, \tau \in \abc^\omega$, if the set $\{i \in \N \mid \sigma(i) \neq
    \tau(i)\}$ is finite, then it holds that $\sigma \cared \tau$.
    \qed
\end{corollary}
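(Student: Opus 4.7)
The plan is to reduce the statement directly to \cref{thm:append} together with the transitivity established in \cref{thm:preorder}. Let $S = \{i \in \N \mid \sigma(i) \neq \tau(i)\}$ and assume $S$ is finite. Then there exists $n \in \N$ (e.g.\ $n = \max S + 1$, or $n = 0$ if $S = \emptyset$) such that $\sigma(i) = \tau(i)$ for every $i \geq n$. Defining $\rho \in \abc^\omega$ by $\rho(i) = \sigma(n+i) = \tau(n+i)$, we obtain a decomposition $\sigma = w_\sigma \cdot \rho$ and $\tau = w_\tau \cdot \rho$, where $w_\sigma, w_\tau \in \abc^n$ are the length-$n$ prefixes of $\sigma$ and $\tau$ respectively.

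From here, I would invoke \cref{thm:append} twice. First, it yields $\sigma = w_\sigma \cdot \rho \cared \rho$, i.e.\ we strip off the prefix $w_\sigma$. Second, it yields $\rho \cared w_\tau \cdot \rho = \tau$, i.e.\ we prepend the prefix $w_\tau$. Chaining these by transitivity of $\cared$ gives $\sigma \cared \tau$, which is precisely the claim. (In fact the same argument symmetrically gives $\tau \cared \sigma$, so one really obtains $\sigma \caeq \tau$, but the stated corollary only requires one direction.)

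There is no real obstacle: the corollary is essentially an unpacking of \cref{thm:append}, whose proof already contains all the 1CA constructions needed. The only observation to make explicit is that ``finitely many mutations'' is exactly the condition that $\sigma$ and $\tau$ share a common tail after some finite prefix, which is what lets us apply \cref{thm:append} twice.
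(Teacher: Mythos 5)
Your proof is correct and matches the paper's intent exactly: the corollary is stated with no explicit proof precisely because it follows from \cref{thm:append} (applied to the two prefixes before the common tail) together with the transitivity from \cref{thm:preorder}, which is the argument you give. Nothing further is needed.
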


Next, we show that any alphabet can be encoded in binary such that 1CA
reducibility is preserved. For this proof, it is crucial to take into account
that not every encoding suffices. To illustrate, consider naively encoding the
alphabet $\abc' = \{\A, \B, \C\}$ with $\psi(\A) = 00$, $\psi(\B) = 01$, and
$\psi(\C) = 10$. We overload $\psi(\sigma) = \prod_{i=0}^\infty \psi(\sigma(i))$
for any word $\sigma$. Then, the stream $\B^\omega$ is encoded as $(01)^\omega =
0101\dots$ and $\C^\omega$ is encoded as $(10)^\omega = 1010\dots$, which
eventually look indistinguishable from one another to any 1CA rule. While this
does not result in an immediate contradiction, as $B^\omega \caeq C^\omega$ and
also $(01)^\omega \caeq (10)^\omega$, it is easy to construct an example where
this naive encoding fails. For instance, if we let 
\begin{align*}
\sigma = \prod_{i=0}^\infty B^i C^i, \quad\quad\quad
\tau = \prod_{i=0}^\infty C^i B^i, 
\end{align*}
then $\sigma \caeq \tau$, whereas $\psi(\sigma)
\not \cared \psi(\tau)$, given that for any 1CA $(\abc, \trans, N)$, local
neighborhoods of $B^{N+1}$ and $C^{N+1}$ look indistinguishable, i.e.
\[
\neigh_{B^\omega, N}(N) = \neigh_{C^\omega, N}(N+2),
\]
yet, $\delta$ must map one to a 0 and the other to a 1. In the following result,
we construct a suitable reducibility-preserving encoding that avoids the
aforementioned issue.

\begin{lemma}\label{thm:binary_alphabet} 
    Let $\abc$ be an alphabet, that is, a finite set with at least two
    elements. Then, there exists a map $\phi : \abc \to 2^*$ such that
    \[
    \forall \sigma, \tau \in \abc^\omega, \sigma \cared \tau \Leftrightarrow \phi(\sigma) \cared \phi(\tau).
    \]
\end{lemma}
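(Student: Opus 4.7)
The plan is to construct a fixed-length binary block encoding in which block boundaries are locally identifiable. I would fix an enumeration $\abc = \{a_0, \ldots, a_{k-1}\}$ with $k = |\abc| \ge 2$ and define
\[
\phi(a_i) := 0 \cdot 1^{i+1} \cdot 0^{k-i},
\]
a word of length $n := k+2$, extended to streams by $\phi(\sigma) := \prod_{j \ge 0} \phi(\sigma(j))$. The first step is to verify the combinatorial property that the substring $01$ occurs in $\phi(\sigma)$ exactly at positions divisible by $n$: within each block $\phi(a_i)$ the only $0{\to}1$ transition sits between the leading $0$ and the run of $1$s, and at the junction of two consecutive blocks both sides are $0$ (since $k - i \ge 1$ for every $i$). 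As a consequence, from any position $j$ a 1CA can recover the offset $r = j \bmod n$ and the block index $i = \lfloor j/n \rfloor$ locally by searching a window of radius $n$ for the nearest $01$, with the stream boundary handled by the presence of $\bot$. The letter encoded by a block is then read off by counting the $1$s in it.

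For the forward direction I would assume $\sigma \cared \tau$ via $M = (\abc, N, \delta)$ and build $M' = (2, N', \delta')$ with $N' = (N+1) n$. At position $j$ the rule $\delta'$ first synchronizes to determine $r$ and $i$ as above, then decodes the $2N+1$ blocks around index $i$ by counting $1$s, computes $\tau(i) = \delta(\sigma(i-N), \ldots, \sigma(i+N))$, and returns the $r$-th bit of $\phi(\tau(i))$.

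For the backward direction I would assume $\phi(\sigma) \cared \phi(\tau)$ via some $M' = (2, N', \delta')$ and build $M = (\abc, N, \delta)$ with $N = \lceil N'/n \rceil + 1$. At position $i$ of $\sigma$, the rule $\delta$ applies $\delta'$ to each of the $n$ windows of $\phi(\sigma)$ centered at positions $in, \ldots, in + n - 1$; since each such window lies inside the concatenation of the blocks $\phi(\sigma(i-N)), \ldots, \phi(\sigma(i+N))$, the simulation is local in $\sigma$. The resulting $n$ bits are exactly $\phi(\tau(i))$ (because $f_{M'}(\phi(\sigma)) = \phi(\tau)$ by hypothesis), from which $\tau(i)$ is decoded by counting $1$s.

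The main obstacle is the choice of encoding, for precisely the reason illustrated in the paragraph immediately preceding the lemma: naive fixed-length encodings let cyclic shifts of distinct letter codes coincide inside $\phi(\sigma)$, so no 1CA can reliably detect block boundaries. The encoding above sidesteps this because the marker $01$ occurs only at block starts, which is what makes every local decoding and simulation well-defined. The remaining verifications---sizing $N'$ and $N$ correctly, checking that counting $1$s actually recovers the letter, and handling the initial segment of the stream where $\bot$ intrudes into the neighborhood (here the marker $\bot$ itself pinpoints block $0$)---are routine bookkeeping and follow the same synchronize-then-simulate pattern.
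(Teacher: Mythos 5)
Your proof is correct and follows essentially the same strategy as the paper's: a fixed-length, self-synchronizing block code whose boundaries a 1CA can locate within a bounded window, followed by decode--simulate--re-encode constructions in both directions. The only difference is the concrete code word (your unary $0\cdot 1^{i+1}\cdot 0^{k-i}$ with the factor $01$ as the unique block marker, versus the paper's $1\cdot\gamma(a)\cdot 1\cdot 0^{2^d+1}$ with a long zero run as separator), which is immaterial since both avoid the shift-ambiguity illustrated before the lemma.
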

\begin{proof}
    To construct a suitable encoding, let $d = \lceil\log_2(|\abc|)\rceil$, and
    consider an injective $\gamma : \abc \to 2^d$. Then, the map 
    \[\phi(a) := 1 \cdot \gamma(a) \cdot 1 \cdot 0^{2^d + 1}\] satisfies the
    above bi-implication:

    ($\Rightarrow$) Suppose $\sigma \cared \tau$ with $M = (\abc, N, \trans)$,
    that is, for all $i \in \N$, $\delta(\neigh_{\sigma, N}(i)) = \tau(i)$. To
    construct a 1CA that reduces $\phi(\sigma) \cared \phi(\tau)$, we define
    $M' = (2, N', \trans')$ with $N' = N \cdot (2^{d+1} + 3)$. Notice that any
    substring of length $N'$ of $\phi(\sigma)$ is a shift of a word of the form
    \[
    1 \cdot w_1 \cdot 1 \cdot 0^{2^d+1} \cdot 
    1 \cdot w_2 \cdot 1 \cdot 0^{2^d+1} \cdots 
    1 \cdot w_N \cdot 1 \cdot 0^{2^d+1}.
    \]
    Refer to the shift of the above word as $w'$. Then, we obtain the desired
    reduction by defining $\delta'$ as follows: if the letter located in the
    middle of $w'$\dots
    \begin{itemize}
        \item \dots belongs to a patch of zeros of length $2^d + 1$, return 0.
        \item \dots is a 1 next to a patch of zeros of length $2^d + 1$, return 1.
        \item \dots belongs to one of the $w_i$. Since $\gamma$ is injective, we
        can consider the word $\prod_{j=1}^{N} \gamma^{-1}(w_j)$ and shift it
        such that $\gamma^{-1}(w_i)$ is centered; refer to it as $v$. Then, we
        return the letter in the middle of $\phi(\delta(v))$, shifted to align
        with $w'$, given that $\delta(v)$ yields the corresponding letter in
        $\tau$ due to $\sigma \cared \tau$ via $\delta$.
    \end{itemize}
    Therefore, by construction, we have
    \[
    f_{M'}(\phi(\sigma)) = 
    \prod_{j=0}^\infty 1 \cdot \gamma(\tau(j)) \cdot 1 \cdot 0^{2^d + 1} = \phi(\tau).
    \]

    ($\Leftarrow$) Suppose $\phi(\sigma) \cared \phi(\tau)$ with $M = (2, N,
    \delta)$, that is,
    \[
    \phi(\sigma) = \prod_{i=0}^\infty 1 \cdot \gamma(\sigma(i)) \cdot 1 \cdot 0^{2^d + 1}
    \]
    reduces to
    \[
    \phi(\tau) = \prod_{i=0}^\infty 1 \cdot \gamma(\tau(i)) \cdot 1 \cdot 0^{2^d + 1}.
    \]
    Arguing at a high-level due to the symmetry with the $(\Rightarrow)$
    direction, we let 1CA $M' = (2, N', \delta')$, where $N' = \lceil N /
    (2^{d+1} + 3) \rceil$, by extracting $\delta'$ from $\delta$. The idea is
    that if the word $w_0 \in 2^d$ in the neighborhood
    \[
    w = \prod_{i = -N'}^{N'} 1 \cdot \gamma(w_i) \cdot 1 \cdot 0^{2^d + 1}
    \]
    is mapped to some word $v \in 2^d$ (even though $\delta$ outputs individual
    bits, one can concatenate them to obtain $v$), we simply let
    \[
    \hspace{82pt}
    \delta'(\gamma^{-1}(w_{-N'}), \dots, \gamma^{-1}(w_{N'})) := \gamma^{-1}(v). 
    \hspace{82pt} \qed
    \]
\end{proof}

\Cref{thm:binary_alphabet} justifies restricting the alphabet to $2 = \{0, 1\}$,
as the algebraic structures in the hierarchy over any other alphabet will also
be present in the hierarchy over $2$. Therefore, we will mostly consider binary
streams in our examples, even though our theorems may still concern a general
alphabet denoted by $\abc$. It is worth noting that, while a similar lemma can
be proven for FSTs~\cite{DBLP:journals/int/EndrullisHK11}, methodologies vastly
differ: an FST processes a stream through a left-to-right pass, meaning that
the start of the stream can serve as a frame of reference, and thus a naive
encoding would suffice for showing that reducibility is preserved.

\subsection{Comparing FST reductions with 1CA reductions}\label{sec:fst_comparison}

This section provides a comparison between FST reducibility, as introduced in
\cref{sec:prelim_fst}, and 1CA reducibility. The key distinction lies in that
FSTs have memory, whereas 1CA do not. We exploit this difference to establish a
separation between the two families, in the spirit of classical separation
results~\cite{pettorossi2022automata}.

\begin{figure}
    \begin{center}
    \begin{tikzpicture}[->, >=latex,auto,node distance=3cm]
        \node[state] at (0, 0) (q0) {$q_0$};
        \node[state] at (3, -1.9) (q1) {$q_1$};
        \node[state] at (0, -1.9) (q2) {$q_2$};
        \node[state] at (3, 0) (q3) {$q_3$};

        \draw 
        (q0) edge[->, loop left] node[left] {$0|0$} (q0) 
        (q0) edge[->, bend left] node[right] {$1|1$} (q2) 
        (q2) edge[->, bend left] node[left] {$0|0$} (q0) 
        (q2) edge[->] node[above] {$1|0$} (q1) 
        (q1) edge[->, loop right] node[right] {$0|0$} (q1) 
        (q1) edge[->, bend left] node[left] {$1|1$} (q3) 
        (q3) edge[->] node[above] {$0|1$} (q0) 
        (q3) edge[->, bend left] node[right] {$1|1$} (q1) 
        ;
        ;

        \draw[->] (0, 1) -- (q0);
    \end{tikzpicture}
    \end{center}
    
    \caption{The transition diagram of an FST that reduces $\sigma$ to $\tau$ as
    defined in \cref{thm:fst_not_subseteq_ca}.}
    \label{fig:fst_not_ca}
\end{figure}
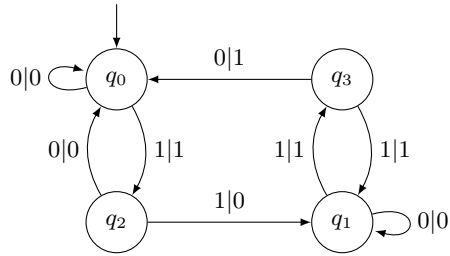

\begin{proposition}\label{thm:fst_not_subseteq_ca}
    $\fstred \not \subseteq \cared$.
\end{proposition}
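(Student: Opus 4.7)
My plan is to pick a stream $\sigma$ whose FST image $\tau := f_M(\sigma)$ under the FST $M$ of \cref{fig:fst_not_ca} contains output bits at certain positions that a 1CA cannot compute. Setting $\tau := f_M(\sigma)$ gives $\sigma \fstred \tau$ at once, so the work is entirely in showing $\sigma \not\cared \tau$. The witness I would take is
\[
\sigma \;:=\; \prod_{n=1}^{\infty} 0 \cdot 1^{n-1} \;=\; 0 \, 01 \, 011 \, 0111 \, 01111 \, \dots,
\]
so that the zeros of $\sigma$ sit at positions $T_{n-1} := (n-1)n/2$ with ever-growing gaps of $1$s between them.

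The second step is a state trace of $M$ along $\sigma$, reading the input block by block (where the $n$-th block is the factor $0 \cdot 1^{n-1}$). A direct check on blocks $1$--$4$ shows that $M$ enters block $4$ in state $q_1$, and then a short induction using that reading ones alternates $q_1 \leftrightarrow q_3$, and that from $q_3$ we first take a detour via $q_3 \xrightarrow{0|1} q_0 \xrightarrow{1|1} q_2 \xrightarrow{1|0} q_1$ before the alternation resumes, shows that for every $n \geq 4$, $M$ enters block $n$ in $q_1$ when $n$ is even and in $q_3$ when $n$ is odd. Applying $q_1 \xrightarrow{0|0} q_1$ and $q_3 \xrightarrow{0|1} q_0$ to the leading $0$ of block $n$ then gives
\[
\tau(T_{n-1}) \;=\; \begin{cases} 0 & \text{if } n \text{ is even,} \\ 1 & \text{if } n \text{ is odd,}\end{cases} \qquad n \geq 4.
\]

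For the lower bound, suppose a 1CA of radius $N$ realises $\sigma \cared \tau$. For any $n$ with $T_{n-1} \geq N$ and $n > N+1$, the nearest zeros of $\sigma$ to position $T_{n-1}$ sit at $T_{n-2}$ and $T_n$, both strictly more than $N$ cells away, so $\neigh_{\sigma,N}(T_{n-1}) = 1^N \cdot 0 \cdot 1^N$ is independent of $n$. The 1CA must therefore output the same bit at every such $T_{n-1}$, contradicting the parity dependence of $\tau$ by picking two such indices of opposite parity. The main obstacle is the finite case analysis that verifies the alternating $q_1/q_3$ pattern from block $4$ onwards; once this is in place the locality argument closes the proof immediately, and combining the two directions yields $\sigma \fstred \tau$ together with $\sigma \not\cared \tau$, hence $\fstred \not\subseteq \cared$.
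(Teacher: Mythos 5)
Your proof is correct. The state trace of the FST of \cref{fig:fst_not_ca} on $\sigma = \prod_{n\geq 1} 0\cdot 1^{n-1}$ checks out: block $4$ is entered in $q_1$, and the parity induction (odd-length runs of $1$s flip $q_1\leftrightarrow q_3$; from $q_3$ the leading $0$ emits a $1$ and the $q_0\to q_2\to q_1$ detour consumes two $1$s) gives $\tau(T_{n-1})=0$ for even $n\geq 4$ and $1$ for odd $n\geq 5$. The locality argument is then exactly the paper's mechanism: for $n>N+1$ with $T_{n-1}\geq N$ the neighborhood $\neigh_{\sigma,N}(T_{n-1})=1^N\cdot 0\cdot 1^N$ is independent of $n$, so a radius-$N$ rule cannot produce parity-dependent outputs. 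Where you differ from the paper is the witness pair: the paper builds $\sigma$ and $\tau$ from an auxiliary periodic stream $\xi=(001)^\omega$ embedded between growing blocks of padding zeros, with the FST acting as a one-step delay on the $\xi$-letters, whereas you feed the same FST a bare stream of growing runs of $1$s and let the machine's state parity supply the information that the 1CA cannot recover. The paper's decomposition makes the ``FST carries one remembered symbol across an arbitrarily long gap'' idea explicit and easy to vary (any non-constant $\xi$ would do), at the cost of a more elaborate input; yours uses a simpler input and gets $\sigma\fstred\tau$ for free from $\tau:=f_M(\sigma)$, at the cost of a concrete state-trace verification that is the fragile part of the argument (and that you rightly flag as the main obstacle). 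Both are complete proofs.
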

\begin{proof}
    Let $\xi = (001)^\omega$, and define the following two streams:
    \[
    \sigma := \prod_{i \in \N} 0^{i + 1} \cdot 1 \cdot \xi(i+1), \quad \text{ and } \quad
    \tau := \prod_{i \in \N} 0^{i + 1} \cdot 1 \cdot \xi(i).
    \]
    It is possible to construct an FST that reduces $\sigma$ to $\tau$ as
    follows:
    \begin{enumerate}
        \item Skip all the padding 0's.
        \item When a 1 is reached, the next letter belongs to $\xi$. Remember it
        using a state, and replace it with the letter read previously.
        \item Go back to step 1.
    \end{enumerate}
    The explicit transition diagram of this FST is shown in \cref{fig:fst_not_ca}.
   
    Yet, a cellular automaton cannot perform this reduction. To see this,
    suppose that there is a 1CA $M = (\abc, N, \trans)$ where $f_M(\sigma) =
    \tau$. By construction of $\sigma$, there is a $j_0 \in \N$ where for all
    $j \geq j_0$, there exists $k \in \N$, where
    \[
    \neigh_{\sigma, N}(k) = 0^{N-1} \cdot 1 \cdot \xi(j+1) \cdot 0^N
    \]
    (component $\xi(j+1)$ being in the middle of the neighborhood). Take any
    such $j, j' \geq j_0$ satisfying $j \equiv 0 \mod 3$ and $j \equiv 2 \mod
    3$, and also take the corresponding $k, k' \in \N$. Since $\xi(j+1) =
    \xi(j'+1) = 0$, we get $\neigh_{\sigma, N}(k) = \neigh_{\sigma, N}(k')$.
    However, due to the assumption that $f_M(\sigma) = \tau$, we have
    \[
    0 = \xi(j) = \delta(\neigh_{\sigma, N}(k)) = \delta(\neigh_{\sigma,
    N}(k')) = \xi(j') = 1,
    \]
    a contradiction.
    \qed
\end{proof}

\begin{proposition}\label{thm:ca_subseteq_fst}
    $\cared \subseteq \fstred$.
\end{proposition}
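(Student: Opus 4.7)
The plan is to simulate a given 1CA $M = (\abc, N, \delta)$ by an FST that buffers a sliding window of input letters in its finite state. Since the local rule $\delta$ depends only on a neighborhood of radius $N$, a constant amount of memory (at most $2N$ letters at any time) suffices, and the FST's left-to-right pass naturally produces the output letters $\tau(0), \tau(1), \ldots$ in order, each triggered by reading one more letter of $\sigma$.

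Concretely, I would take the state set $Q = \bigcup_{j=0}^{2N} \abc^j$ with $\qinit = \varepsilon$, and define the transition so that $T(w, a)$ extends $w$ by $a$ while keeping only the last $2N$ letters (dropping the oldest when $|w| = 2N$). The output $\lambda(w, a)$ is set to $\varepsilon$ whenever $|w| < N$, and otherwise is the single letter obtained by evaluating $\delta$ on the window $\bot^{2N - |w|} \cdot w \cdot a$, where the $\bot$'s left-pad exactly as in $\neigh_{\sigma, N}$ at the boundary of $\sigma$.

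The core verification is a straightforward induction on the number $k$ of input letters consumed: after reading $\sigma(0), \dots, \sigma(k-1)$, the FST sits in the state storing the last $\min(k, 2N)$ of these letters, and on reading $\sigma(k)$ it emits exactly $\tau(k - N) = \delta(\neigh_{\sigma, N}(k - N))$ when $k \ge N$, and $\varepsilon$ otherwise. Concatenating these emissions yields $f_{M'}(\sigma) = \tau(0) \cdot \tau(1) \cdot \tau(2) \cdots = \tau$, which gives $\sigma \fstred \tau$ and hence the desired inclusion.

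The primary obstacle is really just careful bookkeeping around the startup phase: for the first $N$ input letters the FST emits $\varepsilon$, so one has to confirm both that the boundary $\bot$ padding used by the FST matches that of $\neigh_{\sigma, N}$ and that $f_{M'}(\sigma)$ is total (not $\bot$-valued), the latter being immediate since every input letter past position $N$ triggers a nonempty single-letter output.
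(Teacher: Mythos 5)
Your proposal is correct and follows essentially the same sliding-window construction as the paper: buffer the recently read letters in the finite state, emit $\varepsilon$ for the first $N$ inputs, and thereafter output $\delta$ applied to the ($\bot$-padded) window, producing $\tau$ with an $N$-letter lag. The only cosmetic difference is that you store $2N$ letters and feed the just-read letter $a$ directly into $\delta$, whereas the paper stores a full $(2N{+}1)$-tuple over $\abc \cup \{\bot\}$ and evaluates $\delta$ on the state alone; both verifications go through.
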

\begin{proof}
    Suppose that $\sigma \cared \tau$ via the 1CA $M = (\abc, N, \trans)$. We
    construct an FST $M' = (Q, \abc, q_0, T, \lambda)$ for $\sigma \fstred \tau$
    by letting $Q = (\abc \cup \{\bot\})^{2N+1}$, $q_0 = (\bot, \dots, \bot)$,
    and defining $T$ and $\lambda$ as follows:
    \begin{enumerate}
        \item Start in $q_0 = (\bot, \dots, \bot)$, and, at any given time,
        record the next read letter in the current state and forget the earliest
        read letter (in a first-in-first-out manner):
        \[
        T(q, a) = (q(1), \dots, q(2N), a).
        \]
        \item When reading the first $N$ letters, the full neighborhood of the
        first letter in $\sigma$ still has not been read. Thus, we let
        $\lambda(q, a) = \varepsilon$, for any $q$ where $\bot$ appears in at
        least the first $N+1$ positions.
        \item Once the $N+1$-th letter is read, the entire neighborhood of the
        first cell is encoded in $q$, and we can thus define $\lambda(q, a) =
        \delta(q)$ for any $q$ with fewer than $N+1$ $\bot$'s.
    \end{enumerate}
    Therefore, it follows from the definition of $M'$ that
    \[
    f_{M'}(q_0, \sigma) = \prod_{i=0}^\infty \delta(\neigh_{\sigma, N}(i)) = f_M(\sigma) = \tau,
    \]
    and thus $\sigma \fstred \tau$.
    \qed
\end{proof}

Prior work introduces the notion of $\orbit$-orbits: the orbit obtained by
repeatedly applying the XOR
FST~\cite{DBLP:journals/delta-orbits,DBLP:journals/int/EndrullisHK11}. One can
check whether the orbits of two streams overlap, which can reveal interesting
connections between well-known streams that would otherwise be hard to discover.
Given that the XOR operation can be done via elementary cellular automata (as
constructed in \cref{ex:1ca_xor}), and a $\orbit$-orbit is nothing but
its space-time diagram, viewing this technique from the point of view of 1CA can
provide further insights, as we discuss in \cref{app:Orbits}.

\section{Degrees of Streams}
\label{sec.degrees}

In computability theory, sets of natural numbers are compared using Turing
machines. If two sets are Turing-equivalent, they belong to the same Turing
degree, and these degrees can then be compared with one another giving rise to a
hierarchy~\cite{shapiro1956degrees,DBLP:books/daglib/0016363}. This section
introduces definitions concerning cellular automaton reductions analogous to
those in computability theory. We start by formalizing the concept of
$\caeq$-degrees in the following definition.

\begin{definition}
    The \emph{hierarchy of $\caeq$-degrees} is the quotient $\hierarchy =
    \abc^\omega \slash \caeq$, and we refer to its members as
    \emph{$\caeq$-degrees}. The degree of $\sigma \in \abc^\omega$ is 
   denoted by $[\sigma]_{\caeq}$.
\end{definition}

Next, we extend the notion of 1CA reductions to compare degrees of streams.

\begin{definition}
    \emph{Stream reducibility} is a relation $\caleq\ \subseteq \hierarchy \times
    \hierarchy$ such that for all $X, Y \in \hierarchy$, we have
    $X \caleq Y \Leftrightarrow \forall \sigma \in X, \forall \tau \in Y, \ 
    \tau \cared \sigma$.
\end{definition}

The hierarchy forms a partially ordered set (poset) with $\leq^C$ by
\cref{thm:preorder}.

\begin{lemma}
    $(\hierarchy, \caleq)$ is a poset. \qed
\end{lemma}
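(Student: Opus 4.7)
The plan is to verify the three partial-order axioms — reflexivity, antisymmetry, and transitivity — each of which transfers essentially for free from the preorder structure of $\cared$ established in \cref{thm:preorder}. Before doing so, I would briefly note that the definition is consistent on the quotient: if $\sigma \caeq \sigma'$ and $\tau \caeq \tau'$, then $\tau \cared \sigma$ implies $\tau' \cared \tau \cared \sigma \cared \sigma'$ by transitivity of $\cared$, so the ``$\forall \sigma\,\forall \tau$'' quantifiers in the definition of $\caleq$ collapse to ``for some representatives'' and the relation is well-defined on equivalence classes.

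For reflexivity, fix any $X \in \hierarchy$ and any $\sigma \in X$; reflexivity of $\cared$ gives $\sigma \cared \sigma$, hence $X \caleq X$. For transitivity, suppose $X \caleq Y$ and $Y \caleq Z$, and pick any $\sigma \in X$, $\tau \in Y$, $\rho \in Z$. The hypotheses give $\tau \cared \sigma$ and $\rho \cared \tau$, so transitivity of $\cared$ yields $\rho \cared \sigma$, whence $X \caleq Z$. For antisymmetry, $X \caleq Y$ and $Y \caleq X$ together give $\tau \cared \sigma$ and $\sigma \cared \tau$ for any chosen $\sigma \in X$ and $\tau \in Y$, so $\sigma \caeq \tau$; but then $\sigma$ and $\tau$ lie in the same $\caeq$-equivalence class, forcing $X = Y$.

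There is no real obstacle here: the lemma is a standard bookkeeping step showing that a preorder descends to a partial order on its quotient by the induced equivalence relation. The only point that demands any attention at all is the well-definedness check, and even that is immediate from transitivity of $\cared$.
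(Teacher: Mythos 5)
Your proof is correct and follows exactly the route the paper intends: the paper states the lemma with no written proof, treating it as the standard fact that a preorder ($\cared$, from \cref{thm:preorder}) descends to a partial order on the quotient by its induced equivalence relation. Your spelled-out verification of well-definedness, reflexivity, transitivity, and antisymmetry is just the explicit version of that same argument.
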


Analogous to Turing degrees, $\caeq$-degrees are countable, given that every 1CA
has a finite description, which implies that there are countably many 1CAs.
Moreover, every $\caeq$-degree is infinite, as these degrees are closed under
finite mutations by \cref{thm:mutations}.

\begin{lemma}\label{thm:uncountable}
    Every $\caeq$-degree is countably infinite.
\end{lemma}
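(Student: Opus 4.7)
The plan is to establish countability and infinitude separately, corresponding to upper and lower bounds on the cardinality of each degree.

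For the upper bound, I would first observe that the collection of all 1CAs over a fixed alphabet $\abc$ is countable: a 1CA is determined by a pair $(N, \delta)$ where $N \in \N$ and $\delta : (\abc \cup \{\bot\})^{2N+1} \to \abc$; for each fixed $N$ there are only finitely many local rules (functions between finite sets), so the whole family is a countable union of finite sets. Given $\sigma \in \abc^\omega$, the set $\{\tau \in \abc^\omega \mid \sigma \cared \tau\}$ is precisely $\{f_M(\sigma) \mid M \text{ is a 1CA over } \abc\}$, which is the image of $\sigma$ under countably many deterministic maps and is therefore countable. Since $[\sigma]_{\caeq} \subseteq \{\tau \mid \sigma \cared \tau\}$, the degree is at most countable.

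For the lower bound, I would invoke \cref{thm:mutations}: any stream obtained from $\sigma$ by modifying finitely many positions is $\caeq$-equivalent to $\sigma$. Since there are countably infinitely many distinct finite mutations of $\sigma$ (for instance, flipping the $i$-th letter for each $i \in \N$ already yields infinitely many pairwise distinct streams), we have $|[\sigma]_{\caeq}| \geq \aleph_0$. Combining both bounds gives exactly $\aleph_0$.

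There is no real obstacle in this proof — both ingredients are immediate from results already established in the excerpt. The only thing to be careful about is the countability-of-1CAs argument, which relies on the fact that the neighborhood radius is a natural number and the local rule is a function between finite sets; this must be stated explicitly since \cref{def.cared} allows arbitrary $N$ per reduction.
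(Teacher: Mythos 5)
Your proposal is correct and matches the paper's argument: the paper establishes countability by fixing, for each $\tau$ in the degree, a witnessing 1CA $M$ with $f_M(\sigma)=\tau$ and noting this gives an injection into the countable set of 1CAs (equivalent to your "image under countably many maps" phrasing), while infinitude comes from closure under finite mutations via \cref{thm:mutations}, exactly as you argue. No gaps.
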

\begin{proof}
    Let $\sigma \in \abc^\omega$ and consider its degree $[\sigma]_{\caeq}$. For
    every $\tau \in [\sigma]_{\caeq}$, there exists a 1CA $M$ where $f_M(\sigma) =
    \tau$. Therefore, if we define $g(\tau) = M$, then $g$ is injective because
    $g(\tau) = g(\tau')$ implies 
    \[
    \tau = f_{g(\tau)}(\sigma) = f_{g(\tau')}(\sigma) = \tau'.
    \]
    Thus, as the set of 1CAs is countable, we have an injection from
    $[\sigma]_{\caeq}$ into $\N$.
    \qed
\end{proof}

\begin{corollary}
    There are $2^{\aleph_0}$ many $\caeq$-degrees. \qed
\end{corollary}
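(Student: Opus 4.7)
The plan is to pin down $|\hierarchy|$ by sandwiching it between $2^{\aleph_0}$ from above and below using standard cardinal arithmetic, leveraging the preceding lemma that every $\caeq$-degree is countably infinite. Since $\hierarchy$ is a quotient of $\abc^\omega$ and $\abc$ is a finite alphabet with at least two elements, we have $|\abc^\omega| = 2^{\aleph_0}$, which immediately yields the upper bound $|\hierarchy| \leq 2^{\aleph_0}$.

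For the lower bound, I would observe that $\abc^\omega$ is partitioned by the $\caeq$-degrees, so
\[
2^{\aleph_0} = |\abc^\omega| = \sum_{D \in \hierarchy} |D|.
\]
Each summand is $\aleph_0$ by the previous lemma, so the right-hand side is at most $|\hierarchy| \cdot \aleph_0$. If $|\hierarchy|$ were at most countable, this would give $2^{\aleph_0} \leq \aleph_0$, which is absurd; hence $|\hierarchy|$ is uncountable, and for any infinite cardinal $\kappa$ we have $\kappa \cdot \aleph_0 = \kappa$, so $2^{\aleph_0} \leq |\hierarchy|$. Combining with the upper bound gives $|\hierarchy| = 2^{\aleph_0}$.

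There is no real obstacle here: the result is essentially a corollary of the fact that a set of cardinality $2^{\aleph_0}$ cannot be covered by countably many countable sets. The only small care needed is to invoke the previous lemma rather than the injection into the (countable) set of 1CAs directly, since the cleanest way to argue is through the partition identity rather than rebuilding the injection for each degree.
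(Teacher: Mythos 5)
Your argument is correct and is exactly the intended one: the paper states this corollary without an explicit proof, but it follows the preceding lemma (every $\caeq$-degree is countably infinite) precisely so that the partition of $\abc^\omega$ (of cardinality $2^{\aleph_0}$) into countable classes forces $2^{\aleph_0}$ many classes, with the upper bound coming from $\hierarchy$ being a quotient of $\abc^\omega$. Your cardinal-arithmetic write-up fills in that implicit reasoning faithfully.
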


The hierarchy of $\caeq$-degrees of streams can be seen as a ``universe'' of
uncountably many degrees that can be compared via $\caleq$, each degree being
countably infinite, forming a poset structure.

\subsection{Ultimately $k$-Periodic Streams}\label{sec.periodic}

Ultimately periodic streams (those that eventually become periodic) are
interesting to examine due to their ubiquity and simplicity. Such streams belong
to the lowest degree in the hierarchies of Mealy machines and
FSTs~\cite{DBLP:journals/ita/Belovs08,DBLP:journals/int/EndrullisHK11}. This
section places these streams in the 1CA hierarchy. We start by briefly defining
stream periodicity, for completeness, which is a classical notion, see for
example~\cite{DBLP:books/automatic-sequences} for background on ultimately
periodic sequences and minimal periods.

\begin{definition}
  Let $\sigma \in \abc^\omega$. A number $p \in \N_{>0}$ is called a
  \emph{period} of $\sigma$ if there exists $n \in \N$ such that $\sigma(i+p) =
  \sigma(i)$ for all $i \ge n$. 
\end{definition}

\begin{definition}
  The \emph{set of periods} of $\sigma$ is denoted
  $\mathrm{Per}(\sigma)$. We say that $\sigma$ is \emph{ultimately periodic} if
  $\mathrm{Per}(\sigma) \neq \emptyset$.
  We call the minimal element in $\mathrm{Per}(\sigma)$ \emph{(minimal) period}
  of $\sigma$ and denote it by $\mathrm{per}(\sigma)$.
\end{definition}

\begin{definition}
  For $k \in \N^{>0}$, the set of \emph{ultimately $k$-periodic} streams is
  \[
    \ultperiod_k = \{\, \sigma \in \abc^\omega \mid \sigma \text{ ultimately periodic and } \mathrm{per}(\sigma)=k \,\}.
  \]
\end{definition}

\begin{definition}
  The set of \emph{ultimately periodic} streams is
  \[
    \ultperiod = \bigcup_{k \in \N_{>0}} \ultperiod_k.
  \]
\end{definition}

\begin{example}\label{ex:periodic}
  Consider the following binary streams:
  \[
    \sigma_2 = 1010101010\dots \quad \text{and} \quad
    \sigma_6 = 011101\,011101\,\dots.
  \]
  Then $\mathrm{per}(\sigma_2)=2$ and $\mathrm{per}(\sigma_6)=6$, hence
  $\sigma_2 \in \ultperiod_2$ and $\sigma_6 \in \ultperiod_6$.
  Although $6$ is also a period of $\sigma_2$, only the minimal period
  $\mathrm{per}(\sigma_2)$ matters for classification.
\end{example}

\begin{proposition}\label{thm:t1_low} 
    $\ultperiod_1$ is $\cared$-minimal for $\abc^\omega$, in other words,
    \[\forall \sigma \in \abc^\omega, \forall \kappa \in \ultperiod_1, \sigma
    \cared \kappa.\]
\end{proposition}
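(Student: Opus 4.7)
The plan is to prove this via a two-step decomposition of any ultimately $1$-periodic stream $\kappa$, combined with the transitivity of $\cared$ established in \cref{thm:preorder} and the ability of 1CAs to prepend finite prefixes from \cref{thm:append}.

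First, I unpack the hypothesis on $\kappa$. Since $\mathrm{per}(\kappa) = 1$, there exists a letter $a \in \abc$ and a finite word $w \in \abc^*$ such that $\kappa = w \cdot a^\omega$. My strategy is then to reduce $\sigma$ to the constant tail $a^\omega$ first, and then recover the prefix $w$ using \cref{thm:append}.

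For the first step, I would exhibit an explicit 1CA $M = (\abc, 0, \delta)$ that collapses any stream to the constant stream $a^\omega$, by letting the local rule be the constant function $\delta(x) = a$ for all $x \in \abc \cup \{\bot\}$. Then $f_M(\sigma)(i) = \delta(\sigma(i)) = a$ for every $i \in \N$, so $f_M(\sigma) = a^\omega$, and hence $\sigma \cared a^\omega$ for every $\sigma \in \abc^\omega$.

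For the second step, \cref{thm:append} gives $a^\omega \caeq w \cdot a^\omega = \kappa$, and in particular $a^\omega \cared \kappa$. Applying transitivity of $\cared$ (from \cref{thm:preorder}) to $\sigma \cared a^\omega$ and $a^\omega \cared \kappa$ yields $\sigma \cared \kappa$, as required. There is no real obstacle here: the entire content is that constant streams are trivially $\cared$-reachable from anything, and that $\cared$ already knows how to stick arbitrary finite prefixes on, which is exactly what separates $\ultperiod_1$ from a single fixed constant stream.
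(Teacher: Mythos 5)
Your proposal is correct and follows essentially the same route as the paper: the paper also writes $\kappa = x \cdot y^\omega$, reduces $\sigma$ to $y^\omega$ via the constant radius-$0$ rule $\delta(\_) = y$, handles the prefix via closure under finite mutations (\cref{thm:mutations}, which is the corollary of \cref{thm:append} you invoke), and concludes by transitivity.
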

\begin{proof}
    Let $\sigma \in \abc^\omega$ and $\kappa = x \cdot y^\omega \in
    \ultperiod_1$ with $|y| = 1$. We have that $\sigma \cared y^\omega$ through
    $\delta(\_) = y$, and we have $y^\omega \cared \kappa$ given
    \cref{thm:mutations}. Thus, we get $\sigma \cared \kappa$ by transitivity.
    \qed
\end{proof}

Furthermore, since $\cared \subseteq \fstred$ by \cref{thm:ca_subseteq_fst} and
the fact that $\ultperiod$ is closed downwards under
$\fstred$~\cite{DBLP:journals/int/EndrullisHK11}, we have that $\ultperiod$ is
closed downwards under $\cared$.

\begin{corollary}\label{thm:closed_down} $\ultperiod$ is closed downward under
    $\cared$, that is, 
    \begin{align*}
    \hspace{86pt}   
    \forall \sigma \in \ultperiod, \tau \in \abc^\omega, \text{ if }
    \sigma \cared \tau \text{ then } \tau \in \ultperiod.
    \hspace{86pt} \qed
    \end{align*}
\end{corollary}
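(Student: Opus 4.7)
The plan is to follow the hint already sketched in the paragraph preceding the corollary: since Proposition~\ref{thm:ca_subseteq_fst} establishes $\cared \subseteq \fstred$, and the closure of $\ultperiod$ under $\fstred$ is a known result from prior work on FST reductions \cite{DBLP:journals/int/EndrullisHK11}, the corollary falls out by transporting the hypothesis across the containment. Concretely, I would take $\sigma \in \ultperiod$ and $\tau \in \abc^\omega$ with $\sigma \cared \tau$, invoke Proposition~\ref{thm:ca_subseteq_fst} to obtain $\sigma \fstred \tau$, and then cite the FST closure result to conclude $\tau \in \ultperiod$. This is a one-line argument.

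A second, self-contained route avoids the FST detour entirely and, as a bonus, yields sharper quantitative information. Let $M = (\abc, N, \delta)$ be the 1CA witnessing $\sigma \cared \tau$, and let $p = \mathrm{per}(\sigma)$ with threshold $n$ chosen so that $\sigma(i+p) = \sigma(i)$ for all $i \geq n$. For every $i \geq n + N$, each coordinate of $\neigh_{\sigma, N}(i) = (\sigma(i-N), \dots, \sigma(i+N))$ has index $\geq n$, so termwise periodicity of $\sigma$ lifts coordinate by coordinate to the equality $\neigh_{\sigma, N}(i) = \neigh_{\sigma, N}(i + p)$. Applying $\delta$ to both sides gives $\tau(i) = \tau(i+p)$, so $p$ is a period of $\tau$, whence $\tau \in \ultperiod$ and in fact $\mathrm{per}(\tau) \mid \mathrm{per}(\sigma)$.

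Neither path has a genuine obstacle: the FST route is immediate, and the direct computation hinges only on the observation that once $i$ is past the transient of $\sigma$ shifted by the neighborhood radius, the boundary symbol $\bot$ no longer appears in $\neigh_{\sigma, N}(i)$ and the periodicity of $\sigma$ propagates verbatim. The only point requiring a moment's care is the off-by-one in the threshold $n + N$, ensuring that $\sigma(i - N)$ is already in the periodic tail. Given the freedom to choose, I would present the direct proof, since it is no longer than the FST-based one, avoids invoking the external closure lemma, and additionally records the divisibility refinement $\mathrm{per}(\tau) \mid \mathrm{per}(\sigma)$, which foreshadows the finer ordering of $\ultperiod$ by divisibility announced in the introduction.
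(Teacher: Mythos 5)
Your first route is verbatim the paper's own proof: the corollary is derived in the sentence immediately preceding it from $\cared \subseteq \fstred$ (\cref{thm:ca_subseteq_fst}) together with the cited downward closure of $\ultperiod$ under $\fstred$. Your second, direct route is also correct and is a genuinely different derivation of the corollary: the key computation --- for $i \geq n + N$ every coordinate of $\neigh_{\sigma,N}(i)$ lies in the periodic tail, so $\neigh_{\sigma,N}(i) = \neigh_{\sigma,N}(i+p)$ and hence $\tau(i) = \tau(i+p)$ --- is precisely the argument the paper itself deploys a little later, in the proof of \cref{thm:period_red} (with the roles of $\sigma$ and $\tau$ swapped relative to that lemma's statement). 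What your direct proof buys is self-containment (no appeal to the external FST closure result or to \cref{thm:ca_subseteq_fst}) plus the quantitative refinement that $\mathrm{per}(\sigma)$ is a period of $\tau$, hence $\mathrm{per}(\tau) \mid \mathrm{per}(\sigma)$; this refinement is not lost in the paper, though --- it is exactly what \cref{thm:period_red} and \cref{thm:periodic_divisibility} establish --- so the paper's choice to give the one-line FST argument here and defer the neighborhood computation to \cref{thm:period_red} is a matter of organization rather than content. Your off-by-one care with the threshold $n+N$ is correct, and the divisibility conclusion does follow from $p \in \mathrm{Per}(\tau)$ by the standard fact that the minimal period of an ultimately periodic stream divides every period.
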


Even though we obtain \cref{thm:closed_down} through a fact about FSTs, it is
quite intuitive that applying the same local rule to all letters in a periodic
stream results in another periodic stream. Next, we prove helper
\cref{thm:closed_equiv,thm:period_red}, to later show in
\cref{thm:periodic_divisibility} that $\ultperiod_m$ reduces to $\ultperiod_n$
if and only if $n$ divides $m$.

\begin{lemma}\label{thm:closed_equiv}
    $\forall \tau, \tau' \in \ultperiod_k$, $\tau \caeq \tau'$.
\end{lemma}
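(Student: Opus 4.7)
The plan is to reduce the claim to the purely periodic case and then construct a 1CA by exploiting a uniqueness property of length-$k$ windows in minimally $k$-periodic streams. First, writing $\tau = u \cdot w^\omega$ and $\tau' = u' \cdot (w')^\omega$ with $|w| = |w'| = k$, where both $w^\omega$ and $(w')^\omega$ have minimal period $k$, \cref{thm:append} yields $\tau \caeq w^\omega$ and $\tau' \caeq (w')^\omega$. By transitivity of $\caeq$ and the symmetry between $w$ and $w'$, it therefore suffices to exhibit a single 1CA that transforms $w^\omega$ into $(w')^\omega$.

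The key combinatorial observation is that since $w^\omega$ has minimal period $k$, the $k$ length-$k$ windows of $w^\omega$ starting at positions $0, 1, \dots, k-1$ are pairwise distinct: if windows at $i < j < k$ coincided, then $d = j - i$ would be a period of $w^\omega$ (agreement on $k$ consecutive positions combined with global $k$-periodicity forces equality everywhere), contradicting $0 < d < k$ and the minimality of $k$. With this in hand, I define $M = (\abc, k-1, \delta)$ whose local rule ignores the left half of its neighborhood and examines only $(x_0, x_1, \dots, x_{k-1})$. At any position $i \geq 0$, this tuple consists entirely of real letters (no $\bot$ appears, since the indices $i, i+1, \dots, i+k-1$ are all nonnegative), and it equals precisely the $(i \bmod k)$-th cyclic shift of $w$. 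By the observation, $\delta$ can decode $j := i \bmod k$ from this tuple and output $w'[j]$. Then $f_M(w^\omega)(i) = w'[i \bmod k] = (w')^\omega(i)$, so $w^\omega \cared (w')^\omega$.

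The symmetric construction (swapping the roles of $w$ and $w'$) gives $(w')^\omega \cared w^\omega$, hence $w^\omega \caeq (w')^\omega$ and consequently $\tau \caeq \tau'$. The main technical step is the combinatorial lemma about distinct cyclic shifts under minimal period; once this is established, the 1CA design follows mechanically, and the reduction to purely periodic streams via \cref{thm:append} is immediate. A minor point to verify carefully is the edge case $k = 1$, where $N = 0$ and the rule degenerates to the constant function $\delta(x_0) = w'[0]$, still producing $(w')^\omega$ as required.
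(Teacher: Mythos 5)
Your proof is correct and follows essentially the same route as the paper: reduce to the purely periodic case, then build a 1CA whose rule determines the position modulo $k$ from a length-$k$ window, using the fact that minimal period $k$ forces the $k$ cyclic shifts of $w$ to be pairwise distinct. The only (immaterial) differences are that you invoke \cref{thm:append} rather than \cref{thm:mutations} for the reduction to pure periodicity, and you use a right-looking window of radius $k-1$ (neatly avoiding $\bot$ entries) where the paper uses a centered one; you also make explicit the combinatorial fact about distinct shifts that the paper only asserts.
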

\begin{proof}
    Let $\tau, \tau' \in \ultperiod_k$. We may assume that both $\tau$ and
    $\tau'$ are periodic, that is, $\tau = w^\omega$ and $\tau' = w'^\omega$,
    given that $\caeq$ is closed under finite mutations by \cref{thm:mutations}.
    Then, the map
    \[
        (w(i - k \mod k), \dots, w(i), \dots, w(i + k \mod k)) \mapsto w'(i)
    \]
    can be performed via a 1CA, because $\tau$ and $\tau'$ have no subperiods,
    i.e.\ $\tau, \tau' \not \in \ultperiod_{k'}$ for $k' < k$, thus local
    neighborhoods do not overlap with their own (nontrivial) shifts.
    \qed
\end{proof}

\begin{example}\label{ex:periodic_cared} Let's recall $\sigma_2 = 10101010\dots
    \in \ultperiod_2$ and $\sigma_6 = 011101\dots \in \ultperiod_6$ from
    \cref{ex:periodic}. Observe that the following rule with $N = 5$
    reduces $\sigma_6 \cared \sigma_2$:
    \begin{align*}
        \delta(\_, \_, \_, \_, \_, 0, 1, 1, 1, 0, 1) = 1, \quad\quad
        \delta(\_, \_, \_, \_, \_, 1, 1, 1, 0, 1, 0) = 0, \\
        \delta(\_, \_, \_, \_, \_, 1, 1, 0, 1, 0, 1) = 1, \quad\quad
        \delta(\_, \_, \_, \_, \_, 1, 0, 1, 0, 1, 1) = 0, \\
        \delta(\_, \_, \_, \_, \_, 0, 1, 0, 1, 1, 1) = 1, \quad\quad
        \delta(\_, \_, \_, \_, \_, 1, 0, 1, 1, 1, 0) = 0.
    \end{align*}
    However, the converse is not possible, that is, $\sigma_2 \not \cared
    \sigma_6$. This idea is formally captured in \cref{thm:period_red}
    and \cref{thm:periodic_divisibility} below.
\end{example}

\begin{lemma}\label{thm:period_red}
    For $\tau \in \ultperiod_m$, if $\tau \cared \sigma$, then $\sigma$ has an
    ultimate period of size $m$. 
\end{lemma}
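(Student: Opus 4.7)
The plan is to show that $m \in \mathrm{Per}(\sigma)$, i.e.\ that $m$ is an ultimate period of $\sigma$ (though not necessarily the minimal one, which may be a proper divisor; the point that $\mathrm{per}(\sigma) \mid m$ is presumably what feeds into \cref{thm:periodic_divisibility}). The intuition is simply that a 1CA applies the same local rule uniformly, so periodicity of the input is inherited by the output as soon as the neighborhoods lie entirely inside the periodic tail.

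Concretely, I would fix a 1CA $M = (\abc, N, \delta)$ witnessing $\tau \cared \sigma$, and use that $\tau \in \ultperiod_m$ to obtain some $n_0 \in \N$ with $\tau(i + m) = \tau(i)$ for every $i \ge n_0$. Then, for every $i \ge n_0 + N$, the entire neighborhood window $[i-N, i+N]$ sits inside the ultimately periodic tail, so a direct component-wise comparison gives
\[
\neigh_{\tau, N}(i+m) = (\tau(i+m-N), \dots, \tau(i+m+N)) = (\tau(i-N), \dots, \tau(i+N)) = \neigh_{\tau, N}(i).
\]
Applying the local rule to both sides yields $\sigma(i+m) = \delta(\neigh_{\tau,N}(i+m)) = \delta(\neigh_{\tau,N}(i)) = \sigma(i)$ for all $i \ge n_0 + N$, which by definition means $m \in \mathrm{Per}(\sigma)$, i.e.\ $\sigma$ has $m$ as an ultimate period. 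In particular $\sigma$ is itself ultimately periodic, recovering also \cref{thm:closed_down} as a byproduct.

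There is not really a hard part here: the argument is essentially the observation that neighborhoods are translation-invariant in the periodic region, and the only subtle point is bookkeeping the offset $N$ in the starting index $n_0 + N$ so that $\bot$-padding at the left of $\tau$ never enters the neighborhood. I would be careful, however, to state the conclusion as ``$m$ is \emph{some} ultimate period of $\sigma$'' rather than ``the minimal period of $\sigma$ is $m$'', since the local rule can very well collapse the period (for instance a constant rule sends any $\tau \in \ultperiod_m$ into $\ultperiod_1$); this is exactly the flexibility that \cref{thm:periodic_divisibility} will exploit to phrase the final classification in terms of divisibility.
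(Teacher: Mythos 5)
Your proof is correct and uses essentially the same argument as the paper: neighborhoods in the periodic tail of $\tau$ repeat with period $m$, so applying the uniform local rule yields $\sigma(i+m)=\sigma(i)$ for all sufficiently large $i$. The only difference is that the paper first normalizes to the canonical representative $(1\cdot 0^{m-1})^\omega$ via \cref{thm:closed_equiv} before running this argument, whereas you handle an arbitrary $\tau \in \ultperiod_m$ directly with the $n_0+N$ offset — a harmless simplification that drops one dependency; your reading of the conclusion as ``$m$ is \emph{some} period of $\sigma$'' also matches the paper's intent.
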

\begin{proof}
    It is enough to show that the property holds when $\tau = (1 \cdot
    0^{m-1})^\omega$, because, for any other $\tau' \in \ultperiod_m$, if $\tau'
    \cared \sigma$, then $\tau \cared \sigma$ by \cref{thm:closed_equiv}.
    Thus, suppose $\tau \cared \sigma$ via $M = (2, N, \delta)$. Since for
    all $j \geq N + 1$, we have that $\neigh_{\tau, N}(j) = \neigh_{\tau, N}(j +
    m)$, we get
    \[
    \sigma(j) = \delta(\neigh_{\tau, N}(j)) =
    \delta(\neigh_{\tau, N}(j+m)) = \sigma(j+m),
    \] 
    in other words, $\sigma$ has an ultimate period of $m$. 
    \qed
\end{proof}

\begin{theorem}\label{thm:periodic_divisibility} $\ultperiod_n \caleq
    \ultperiod_m$ if and only if $n \mid m$.
\end{theorem}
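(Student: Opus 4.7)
The plan is to prove the two directions separately. In both, I will use Lemma~\ref{thm:closed_equiv} to reduce everything to purely periodic representatives $\tau = w^\omega$ with $|w| = m$ and $\sigma = v^\omega$ with $|v| = n$, since all members of $\ultperiod_k$ are $\caeq$-equivalent and $\cared$ is transitive, so showing $\tau \cared \sigma$ for one such pair lifts to the universally quantified statement required by the definition of $\caleq$.

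For the ``if'' direction, assuming $n \mid m$, I would construct a 1CA $M = (\abc, N, \delta)$ of radius $N = m$ whose rule outputs, on neighborhood $\neigh_{\tau, N}(i)$, the letter $v(i \bmod n)$. The key point is well-definedness: if $\neigh_{\tau, N}(i) = \neigh_{\tau, N}(j)$ for $i, j \geq N$, then $w$, viewed as a cyclic word, agrees with its $(j-i)$-shift on at least $2m + 1$ consecutive positions, which forces $j - i \equiv 0 \pmod{m}$, because otherwise $\gcd(j-i,\, m)$ would be a strictly smaller period of $w$, contradicting $\mathrm{per}(w^\omega) = m$. Hence $i \equiv j \pmod{m}$, and since $n \mid m$, also $i \equiv j \pmod{n}$, so $v(i \bmod n) = v(j \bmod n)$. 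For the finitely many positions $i < N$, the neighborhood contains a distinctive pattern of $\bot$ symbols that pins down $i$ exactly, so $\delta$ can be defined directly there; alternatively, Corollary~\ref{thm:mutations} lets one absorb any boundary mismatch.

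For the ``only if'' direction, assume $\ultperiod_n \caleq \ultperiod_m$. By definition, for any $\tau \in \ultperiod_m$ and $\sigma \in \ultperiod_n$ we have $\tau \cared \sigma$, and Lemma~\ref{thm:period_red} then yields that $m$ is an ultimate period of $\sigma$. Since $\mathrm{per}(\sigma) = n$, it remains to invoke the classical fact that the minimal period of an ultimately periodic stream divides every one of its ultimate periods. This follows from the Fine--Wilf theorem applied to a sufficiently long periodic suffix, or equivalently from the observation that the set of ultimate periods of $\sigma$ is closed under $\gcd$; thus $\gcd(n, m)$ is itself an ultimate period of $\sigma$, and minimality of $n$ forces $\gcd(n, m) = n$, i.e.\ $n \mid m$.

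The main obstacle is the well-definedness of $\delta$ in the ``if'' direction: the equality $\delta(\neigh_{\tau,N}(i)) = \delta(\neigh_{\tau,N}(j))$ whenever the two neighborhoods coincide must be forced by something intrinsic, which here is the primitivity of $w$ encoded in $\mathrm{per}(w^\omega) = m$. Once $N$ is chosen large enough for the cyclic-shift argument to go through, the remainder of the proof is routine bookkeeping via Lemma~\ref{thm:closed_equiv}, Corollary~\ref{thm:mutations}, Lemma~\ref{thm:period_red}, and Fine--Wilf.
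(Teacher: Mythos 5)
Your proof is correct and follows essentially the same route as the paper: Lemma~\ref{thm:closed_equiv} to pass to purely periodic representatives, an explicit local rule keyed on $i \bmod n$ whose well-definedness rests on the absence of subperiods for the ``if'' direction, and Lemma~\ref{thm:period_red} for the ``only if'' direction. The only differences are that you work with an arbitrary primitive representative (hence the cyclic-shift/Fine--Wilf argument) where the paper fixes the canonical word $(1\cdot 0^{m-1})^\omega$ for which non-overlap is immediate, and that you spell out the step that the minimal period divides every ultimate period, which the paper leaves implicit.
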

\begin{proof}
    Recall that we use $n\mid m$ to express ``$n$ divides $m$''. Given that
    \cref{thm:period_red} implies that $\ultperiod_n \caleq \ultperiod_m
    \Rightarrow n \mid m$, we only need to show the $(\Leftarrow)$ direction.
    Suppose $n \mid m$, by \cref{thm:closed_equiv}, we only need to consider
    $\tau_m = (1 \cdot 0^{m-1})^\omega$ and $\tau_n = (1 \cdot 0^{n-1})^\omega$.
    Because $n \mid m$, we have $|w_m| = k \cdot |w_n|$ for some $k$. Thus, for $i =
    0, \dots, m - 1$, we can match $w_m(i)$ to $w_n(i \mod n)$ solely based on a
    finite neighborhood of size $m - 1$ similarly to the construction of the
    local rule in \cref{ex:periodic_cared}, since $w_m$ has no subperiods. 
    \qed
\end{proof}

\begin{corollary}\label{thm:closed} The following are implied by \cref{thm:periodic_divisibility}
  \begin{enumerate}
\item Every $\ultperiod_k$ is a $\caeq$-degree.
\item $\ultperiod_1$ is the lowest $\caeq$-degree.%
\item $(\{\ultperiod_k\}_{k \in \N}, \calt)$ is well-founded.%
\item  \label{thm:ascending_chain} 
    For all $k > 1$, the sequence $\ultperiod_1
    \calt \ultperiod_k \calt \ultperiod_{k^2} \calt \ultperiod_{k^3} \calt \dots$ is
    an infinite ascending chain. \qed
  \end{enumerate}
\end{corollary}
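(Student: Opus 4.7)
The four items follow almost immediately from Theorem~\ref{thm:periodic_divisibility}, together with Lemma~\ref{thm:closed_equiv}, Proposition~\ref{thm:t1_low}, Corollary~\ref{thm:closed_down}, and the well-ordering of $\N_{>0}$. The general strategy is to translate each claim about $(\{\ultperiod_k\}_{k \in \N_{>0}},\calt)$ into a claim about proper divisibility on $\N_{>0}$, where it becomes elementary.

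For item~(1), I would first apply Lemma~\ref{thm:closed_equiv} to conclude that every two streams in $\ultperiod_k$ are $\caeq$-equivalent, so $\ultperiod_k$ is contained in a single $\caeq$-degree. To prove equality, take $\sigma \in \ultperiod_k$ and any $\sigma' \in \abc^\omega$ with $\sigma \caeq \sigma'$. By Corollary~\ref{thm:closed_down}, $\sigma'$ is ultimately periodic; say $\sigma' \in \ultperiod_n$. Then Theorem~\ref{thm:periodic_divisibility} applied in both directions gives $n \mid k$ and $k \mid n$, hence $n = k$ and $\sigma' \in \ultperiod_k$. Item~(2) is then immediate: Proposition~\ref{thm:t1_low} already shows $\sigma \cared \kappa$ for every $\sigma \in \abc^\omega$ and $\kappa \in \ultperiod_1$, which combined with (1) says that the degree $\ultperiod_1$ lies below every other degree.

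For items~(3) and~(4), I would use Theorem~\ref{thm:periodic_divisibility} to restate $\calt$ on $\{\ultperiod_k\}_{k \in \N_{>0}}$ as proper divisibility: $\ultperiod_n \calt \ultperiod_m$ iff $n \mid m$ and $n \neq m$, which in particular forces $n < m$. For (3), an infinite descending chain $\ultperiod_{k_1} \cagt \ultperiod_{k_2} \cagt \cdots$ would therefore yield an infinite strictly decreasing sequence $k_1 > k_2 > \cdots$ in $\N_{>0}$, contradicting well-ordering. For (4), since $k > 1$ we have both $k^i \mid k^{i+1}$ (giving $\ultperiod_{k^i} \caleq \ultperiod_{k^{i+1}}$ via Theorem~\ref{thm:periodic_divisibility}) and $k^{i+1} \nmid k^i$ (blocking the converse), so each step of the chain is strict; the initial step $\ultperiod_1 \calt \ultperiod_k$ is the $i = 0$ case.

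The main obstacle, to the extent there is one, is item~(1): the theorem alone only identifies $\ultperiod_k$ as a subset of a degree, and one must rule out that some stream of non-minimal period, or some non-ultimately-periodic stream, sneaks into the same equivalence class. Corollary~\ref{thm:closed_down} handles the latter possibility by confining any partner into $\ultperiod$, after which the divisibility characterization pins down the minimal period uniquely. Everything else is a straightforward translation from the $\caleq$-order on $\{\ultperiod_k\}_k$ to the divisibility order on $\N_{>0}$.
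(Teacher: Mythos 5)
Your proposal is correct and matches the paper's intent: the corollary is stated there with no separate proof, precisely because each item reduces to the divisibility characterization of \cref{thm:periodic_divisibility} together with \cref{thm:closed_equiv}, \cref{thm:t1_low}, and \cref{thm:closed_down}, exactly as you spell out. Your handling of item (1) — using \cref{thm:closed_down} to confine any $\caeq$-partner to $\ultperiod$ and then pinning down the minimal period by divisibility in both directions — is the right way to close the only nontrivial gap.
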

Classical literature defines that degree $P$ is an \emph{atom} (also known as a
\emph{prime} degree) when there is no degree strictly between $P$ and the bottom
degree $\ultperiod_1$. Prior work pays close attention to
atoms~\cite{DBLP:journals/int/EndrullisHK11,DBLP:journals/logcom/Kaufmann23}, as
their existence implies the hierarchy is not dense, and they can be used to
construct interesting streams such as extrema.

\begin{definition}\label{def:atom}
    Let $P \in \hierarchy$ and $P \neq \ultperiod_1$. $P$ is an \emph{atom} if
    and only if for all $X \in \hierarchy$, we have 
    $\ultperiod_1 \caleq X \caleq P \Rightarrow X = P \text{ or } X = \ultperiod_1$.
\end{definition}

\begin{corollary}\label{thm:primes}
    Degree $\ultperiod_p$ is an atom iff $p \in \N$ is a prime number. \qed
\end{corollary}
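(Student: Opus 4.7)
The plan is to prove both directions by leveraging Theorem~\ref{thm:periodic_divisibility} together with the downward closure of $\ultperiod$ under $\cared$ (Corollary~\ref{thm:closed_down}). The key observation is that any degree $X$ satisfying $X \caleq \ultperiod_p$ must consist of ultimately periodic streams, and hence by Corollary~\ref{thm:closed}(1) must equal $\ultperiod_k$ for some $k$. Once we have this ``completeness'' of the periodic degrees among those below $\ultperiod_p$, the corollary reduces to an easy divisibility argument.

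For the $(\Leftarrow)$ direction, I would assume $p$ is prime and consider any degree $X \in \hierarchy$ with $\ultperiod_1 \caleq X \caleq \ultperiod_p$. Pick any $\sigma \in X$; since $\sigma$ is the image of an ultimately periodic stream under some 1CA, Corollary~\ref{thm:closed_down} forces $\sigma \in \ultperiod$. Hence $\sigma \in \ultperiod_k$ for some $k \in \N_{>0}$, and by Corollary~\ref{thm:closed}(1) the degree $X$ equals $\ultperiod_k$. Applying Theorem~\ref{thm:periodic_divisibility} to $X \caleq \ultperiod_p$ then yields $k \mid p$, and primality of $p$ forces $k = 1$ or $k = p$, i.e.\ $X = \ultperiod_1$ or $X = \ultperiod_p$, so $\ultperiod_p$ is an atom.

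For the $(\Rightarrow)$ direction, I would argue by contrapositive: assume $p$ is not prime. If $p = 1$ then $\ultperiod_p = \ultperiod_1$ is excluded by Definition~\ref{def:atom}. Otherwise $p$ admits a factorization $p = d \cdot e$ with $1 < d < p$. By Theorem~\ref{thm:periodic_divisibility}, the chain of divisibilities $1 \mid d \mid p$ together with the strict nondivisibilities $d \nmid 1$ and $p \nmid d$ yields
\[
\ultperiod_1 \;\calt\; \ultperiod_d \;\calt\; \ultperiod_p,
\]
so $\ultperiod_d$ is a degree strictly between $\ultperiod_1$ and $\ultperiod_p$, contradicting the atom property.

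I do not anticipate a genuine obstacle: the only subtle step is ruling out non-periodic degrees below $\ultperiod_p$, but this is immediate from Corollary~\ref{thm:closed_down}. Everything else is a direct application of Theorem~\ref{thm:periodic_divisibility} together with the elementary fact that primes are exactly the integers $p > 1$ whose only positive divisors are $1$ and $p$.
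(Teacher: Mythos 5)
Your proof is correct and matches the paper's intended argument: the paper states this as an immediate consequence of \cref{thm:periodic_divisibility} (together with \cref{thm:closed_down} and \cref{thm:closed}), and you have simply made explicit the step that any degree below $\ultperiod_p$ must itself be some $\ultperiod_k$, after which the claim reduces to divisibility. No gaps.
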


The existence of an infinite descending chain in $\hierarchy$ would imply that
the hierarchy is not well-founded. While ultimately $k$-periodic streams do not
form such chains, we can use them to construct one. The idea is to encode the
information from each $\ultperiod_k$ into a single stream, which we can
abstractly represent as $\langle \ultperiod_2, \ultperiod_3, \ultperiod_4,
\ultperiod_5, \dots \rangle$. If successful, we can strictly reduce this to
$\langle \ultperiod_1, \ultperiod_3, \ultperiod_4, \ultperiod_5, \dots \rangle$,
to then reduce it to $\langle \ultperiod_1, \ultperiod_1, \ultperiod_4,
\ultperiod_5, \dots \rangle$, and so on. Although this tuple is not itself a
stream, we can get the same effect by concatenating increasingly long finite
prefixes. Let $\tau_k = (1 \cdot 0^{k-1})^\omega \in \ultperiod_k$. Define the
first stream in our descending chain as 
\[
\mu_1 = \tau_2^{\leq 1} \cdot \tau_2^{\leq 2} \cdot \tau_3^{\leq 3} \cdot 
\tau_2^{\leq 4} \cdot \tau_3^{\leq5} \cdot \tau_4^{\leq 6} \cdot 
\tau_2^{\leq 7} \cdot \tau_3^{\leq 8} \cdot\tau_4^{\leq 9} \cdot \tau_5^{\leq 10} \cdot \dots,
\]
 where $\tau_k^{\leq l}$
denotes the prefix of $\tau_k$ of length $l$. Then, we obtain an infinite
descending chain  as follows:

\begin{proposition} \label{thm:antichain} 
    There is an infinite descending chain
    \begin{align*}
        &\mu_1 = \tau_2^{\leq 1} \cdot \tau_2^{\leq 2} \cdot \tau_3^{\leq 3} 
        \cdot \tau_2^{\leq 4} \cdot \tau_3^{\leq 5} \cdot \tau_4^{\leq 6} 
        \cdot \tau_2^{\leq 7} \cdot \tau_3^{\leq 8} \dots\\
        \canred
        &\mu_2 = \tau_1^{\leq 1} \cdot \tau_1^{\leq 2} \cdot \tau_3^{\leq 3} 
        \cdot \tau_1^{\leq 4} \cdot \tau_3^{\leq 5} \cdot \tau_4^{\leq 6} 
        \cdot \tau_1^{\leq 7} \cdot \tau_3^{\leq 8} \dots\\
        \canred
        &\mu_3 = \tau_1^{\leq 1} \cdot \tau_1^{\leq 2} \cdot \tau_1^{\leq 3} 
        \cdot \tau_1^{\leq 4} \cdot \tau_1^{\leq 5} \cdot \tau_4^{\leq 6} 
        \cdot \tau_1^{\leq 7} \cdot \tau_1^{\leq 8} \dots\\
        \canred &\dots,
    \end{align*}
    where ``$\canred$'' denotes strict reducibility.
\end{proposition}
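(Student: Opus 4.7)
The plan is to verify, for each $n \geq 1$, both halves of $\mu_n \canred \mu_{n+1}$: namely $\mu_n \cared \mu_{n+1}$ and $\mu_{n+1} \not \cared \mu_n$. Intuitively, going from $\mu_n$ to $\mu_{n+1}$ means identifying the $\tau_{n+1}$-blocks inside the stream and overwriting them with constant $1$s, while leaving $\tau_1$-blocks and $\tau_j$-blocks for $j \geq n+2$ untouched; the reverse would force a memoryless local rule to reconstruct the periodic pattern $(1\cdot 0^n)^\omega$ out of arbitrarily long runs of $1$s, which is impossible.

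For the reducibility $\mu_n \cared \mu_{n+1}$, I would construct a 1CA $M = (2, N, \delta)$ with radius $N$ chosen large enough (say $N = n+2$) that any window of width $2N+1$ witnesses at least two periods of $\tau_{n+1}$. Because the block lengths in $\mu_n$ are $1, 2, 3, \dots$ and hence grow to infinity, only finitely many blocks are shorter than $2N+1$; by \cref{thm:mutations} the prefix they occupy can be safely ignored. Beyond that prefix every window crosses at most one block boundary. The local rule $\delta$ outputs $1$ on any window that, at its centre, is consistent with being inside a $\tau_{n+1}$-block---precisely, when the pattern of $1$s in the window appears at spacing $n+1$ on the side(s) of the centre that lie within the current block---and outputs the middle letter otherwise. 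The candidate confusions are dismissed case by case: $\tau_1$-windows are $1^{2N+1}$ whose centre is already $1$; $\tau_j$-windows with $j \geq n+2$ either contain at most one $1$ (when $j > 2N+1$) or have $1$s spaced $j \neq n+1$ apart (otherwise); and windows crossing a single boundary can be routed correctly since the boundary signature determines which side of it the centre lies on. This finite catalogue of boundary patterns can be baked into $\delta$.

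For the irreducibility $\mu_{n+1} \not \cared \mu_n$, I would argue by a short indistinguishability/pigeonhole argument. Suppose for contradiction that some 1CA $M' = (2, N, \delta')$ satisfies $f_{M'}(\mu_{n+1}) = \mu_n$. In $\mu_{n+1}$, every former $\tau_{n+1}^{\leq l}$-block has been replaced by $1^l$, and these lengths $l$ are unbounded. Pick such a block with $l > 2N + n + 1$; its interior then contains positions whose $\mu_{n+1}$-neighborhoods all equal $1^{2N+1}$ verbatim. Hence $\delta'(1^{2N+1})$ is a single fixed bit. But in $\mu_n$ those same interior positions lie inside a $\tau_{n+1}^{\leq l}$-block and therefore take value $0$ at positions not congruent to the block's start modulo $n+1$ and value $1$ otherwise, and the interior is long enough to include both kinds---a contradiction.

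The main obstacle is the forward direction: engineering a local rule $\delta$ that simultaneously detects the $\tau_{n+1}$-pattern, distinguishes it from every other $\tau_j$-pattern, and behaves correctly at the concatenation seams between blocks. The irreducibility direction, by contrast, is clean once one observes that $\mu_{n+1}$ contains all-$1$ windows originating from two distinct kinds of blocks ($\tau_1$ and the collapsed $\tau_{n+1}$) while their preimages in $\mu_n$ differ.
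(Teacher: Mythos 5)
Your overall strategy coincides with the paper's: build a local rule that recognises the period-$(n+1)$ pattern and overwrites its zeros, discharge the finitely many short blocks via closure under finite mutations (\cref{thm:mutations}), and refute the reverse reduction by observing that the all-$1$ windows inside a collapsed block would force $\delta'(1^{2N+1})$ to take two different values. The irreducibility half is correct and is essentially the paper's argument (the paper phrases the same collision as $f_M(\tau_1)=\tau_{n+1}$ and invokes \cref{thm:periodic_divisibility}, but the content is identical).

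The reducibility half, however, has a concrete gap: radius $N=n+2$ is too small, and your case analysis omits the configuration that defeats it, namely the \emph{truncated final $0$-patch} of a $\tau_j$-block with $j\ge n+2$. If such a block ends with $\cdots\, 1\cdot 0^{\,j-1}\cdot 1\cdot 0^{\,n}$ and the following block begins with $1\cdot 0^{\,j'-1}$ for some $j'\ge n+2$, then the last $1$ of the block and the initial $1$ of the next block are exactly $n+1$ apart, mimicking one full period of $\tau_{n+1}$. Concretely, take $n=3$, $N=5$: a zero sitting at the second position of a $0^3$-patch in the interior of a $\tau_4$-block has window $(0,0,0,1,0,0,0,1,0,0,0)$ (centre at the sixth entry) and must be mapped to $1$; a zero at the second position of the final $0^3$-patch of a long $\tau_5$-block that is followed by a $\tau_6$-block has exactly the same window and must be mapped to $0$. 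This coincidence occurs at infinitely many positions (whenever a type-$j$ block has length $\equiv n+1 \pmod{j}$), so it cannot be absorbed by a finite mutation, and your claim that ``the boundary signature determines which side of it the centre lies on'' fails at this radius: the distinguishing symbol (the second period-marker $1$ of a genuine $\tau_{n+1}$-block, at offset $-(t+n+2)$ from a centre at patch-position $t$) lies outside the window whenever $1\le t\le n-2$, which is nonempty for $n\ge 3$. The repair is to enlarge the radius to about $2n+1$ (the paper takes $N=2i+1$) so that the window always reveals, or provably lacks, a second consecutive period on one side of the centre; with that change the rest of your argument goes through.
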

\begin{proof}
    For all $i \in \N$, we need to show that $\mu_i \canred \mu_{i + 1}$. We
    start by showing that $\mu_{i + 1} \not \cared \mu_i$, by assuming that
    $\mu_{i + 1} \cared \mu_i$ via $M = (\abc, N, \delta)$, for a contradiction.
    By construction of $\mu_i$ and $\mu_{i+1}$, there are increasingly long
    substrings of $\tau_1$ in $\mu_{i+1}$ that get mapped to equally long
    substrings of $\tau_{i+1}$ in $\mu_i$. Given that said substrings exceed the
    neighborhood size, $2N + 1$, we have that $f_M(\tau_1) = \tau_{i+1}$,
    contradicting \cref{thm:periodic_divisibility}.
    
    We show $\mu_i \cared \mu_{i+1}$ by providing a suitable 1CA $M = (\abc, N,
    \delta)$. Rule $\delta$ must identify whether a 0 belongs to a substring of
    $\tau_{i+1}$, and, only if so, turn it into a 1. We define such a rule as
    follows:
    \begin{itemize}
        \item If the letter in the center is a 1, return 1.
        \item If it is a zero in a patch of $k \leq i$ zeros, return 1 if and only if
        the neighborhood has either form $\dots 1\cdot1\cdot0^k \cdot 1 \dots$
        or $\dots1 \cdot 0^i \cdot 1 \cdot 0^k \cdot 1 \dots$.
        \item In any other case, return 0.
    \end{itemize}
    This operation can be performed by a rule with $N = 2i + 1$. Notice that,
    since $\mu_i$ consists of increasingly long periodic substrings, $M$
    ``eventually correctly reduces $\mu_{i}$ to $\mu_{i+1}$'', that is, 
    \[
    \exists j \in \N, \forall j' \geq j, f_M(\mu_i)(j') = \mu_{i+1}(j').
    \] 
    Thus, $f_M(\mu_i) \caeq \mu_{i+1}$ by \cref{thm:mutations}, which means that
    $\mu_i \cared \mu_{i+1}$.
    \qed
\end{proof}

\begin{example}
    Stream $\mu_3$, as in \cref{thm:antichain}, can be reduced to $\mu_2$ via
    the 1CA with $N = 5$ with the following rule:
    \[
    \begin{tabular}{l cccccccccccc l}
    $\delta($ & \_, & \_, & \_, & \_, & \_, & 1, & \_, & \_, & \_, & \_, & \_, & \_) & $= 1$,\\
    $\delta($ & \_, & \_, & \_, & 1,  & 1,  & 0, & 1, & \_, & \_, & \_, & \_, & \_) & $= 1$,\\
    $\delta($ & \_, & \_, & \_, & 1,  & 1,  & 0, & 0, & 1, & \_, & \_, & \_, & \_) & $= 1$,\\
    $\delta($ & \_, & \_, & 1,  & 1,  & 0,  & 0, & 1, & \_, & \_, & \_, & \_, & \_) & $= 1$,\\
    $\delta($ & \_, & 1,  & 0,  & 0,  & 1,  & 0, & 1, & \_, & \_, & \_, & \_, & \_) & $= 1$,\\
    $\delta($ & \_, & 1,  & 0,  & 0,  & 1,  & 0, & 0, & 1, & \_, & \_, & \_, & \_) & $= 1$,\\
    $\delta($ & 1,  & 0,  & 0,  & 1,  & 0,  & 0, & 1, & \_, & \_, & \_, & \_, & \_) & $= 1$,
    \end{tabular}
    \]
    and $\delta$ returns $0$ for any other input.
\end{example}

\subsection{Sparse Streams}\label{sec.sparse}

Cantor's diagonal argument can be used to construct an undecidable set $K
\subseteq \N$; for every Turing machine $\varphi_i$, let $i \in K$ if
$\varphi_i(i)$ halts, and $i \not \in K$ otherwise. A similar approach can be
used to construct Turing-incomparable sets~\cite{kleene1954upper,friedberg1957two,muchnik1956unsolvability}, which serves
as our inspiration for constructing aperiodic $\cared$-incomparable streams.

Notice that the string $00$ cannot be mapped to the string $01$ by a 1CA of
radius 0, because the first 0 is mapped to a 0, while the second 0 is mapped to
a 1. Therefore, if we let $\sigma_0 = 001$ and $\tau_0 = 010$, the two strings
are incomparable via any 1CA with radius 0. Similarly, by letting $\sigma_1 =
000\ 000\ 010$ and $\tau_1 = 000\ 010\ 000$, there is no 1CA with $N = 1$ that
can compare $\sigma$ and $\tau$. It is easy to see the general pattern for
$\sigma_i$ and $\tau_i$. Thus, the streams $\sigma$ and $\tau$ obtained by
concatenating all $\sigma_i$ and $\tau_i$ respectively are
$\cared$-incomparable.

Observe that the incomparable streams just constructed are mostly composed of
0's. Intuition tells us that, in general, streams containing many 0's must
``diagonalize'' against the neighborhood radius, making them hard to
compare through 1CAs. This motivates the introduction of the notion of stream
sparsity.

\begin{definition}
    Stream $\sigma \in \abc^\omega$ is \emph{sparse} iff there exists a padding
    letter $p \in \abc$ where $\forall d \in \N, \exists n_0 \in \N,
    \forall i \geq n_0,$
    $
        \sigma(i) \neq p \Rightarrow \sigma(i + j) = p, 
        \text{ for } j = 1, \dots, d.
    $
\end{definition}

In other words, a stream is sparse if for every $d \in \N$ there exists $n_0
\in \N$ such that beyond $n_0$, each non-$p$ letter is followed by at least $d$
consecutive $p$'s.

\begin{example}
    The following stream is sparse with $p = 0$:
    \[\prod_{i = 0}^\infty 1 \cdot 0^i = 110100100010000\dots\]
\end{example}

For simplicity, we restrict our attention to $\abc = 2$ (i.e., $\{0,1\}$), and,
if we consider a sparse stream, we assume $0$ is the padding letter.
Nevertheless, the results in this section directly carry over to arbitrary
sparse streams.

It is hard to compare two sparse streams through 1CAs, since eventually almost
the entire neighborhood will consist of only the padding letter $0$. In fact, we
next show that sparse streams that are not eventually constant are atoms.

\begin{theorem}\label{thm:sparse} 
    The $\caeq$-degree of any sparse stream is either an atom or $\ultperiod_1$.
\end{theorem}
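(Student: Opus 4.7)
The plan is to reduce to the nontrivial case where $\sigma$ is sparse (with padding letter $0$, over alphabet $2$) and $\sigma \notin \ultperiod_1$, and to show that any $\tau$ with $\sigma \cared \tau$ either lies in $\ultperiod_1$ or satisfies $\tau \cared \sigma$. By \cref{def:atom} this immediately yields that $[\sigma]_{\caeq}$ is an atom.

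First I would fix a 1CA $M = (2, N, \delta)$ witnessing $\sigma \cared \tau$ and distill from $\delta$ two objects: the background value $c := \delta(0^{2N+1})$ and, for each $l \in \{0, \ldots, 2N\}$, the value $u_l := \delta(\vec{e}_l)$, where $\vec{e}_l$ denotes the $(2N+1)$-tuple with a single $1$ at position $l$. Using sparsity, I would pick a threshold $n_0$ so that beyond $n_0$ consecutive $\sigma$-ones sit at distance at least $d + 1$, for a $d$ chosen in advance; taking $d = 8N + 2$ guarantees that any window of radius $N' := 3N + 1$ in $\tau$ meets at most one $\sigma$-one-induced bump. For $i > n_0 + N$, the neighborhood $(\sigma(i - N), \ldots, \sigma(i + N))$ is either all zero, yielding $\tau(i) = c$, or contains a unique $1$ at some offset, yielding $\tau(i) = u_l$. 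Consequently, past $n_0 + N$, $\tau$ consists of a $c$-background interrupted by copies of the fixed word $v := u_{2N} u_{2N-1} \cdots u_0$, each centered at a $\sigma$-one.

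The proof then splits on whether $v = c^{2N+1}$. If it does, $\tau$ is ultimately $c^\omega$, giving $\tau \in \ultperiod_1$. Otherwise, I would build $M' = (2, N', \delta')$ whose rule $\delta'$ outputs $1$ exactly when the neighborhood on $\tau$ matches the template $c^{N'-N} \cdot v \cdot c^{N'-N}$ and outputs $0$ otherwise. One direction of correctness is straightforward: at an isolated $\sigma$-one at position $j$, the bump $v$ occupies $[j - N, j + N]$ and, by the choice of $d$, is flanked by at least $N' - N$ consecutive $c$'s on each side, so $\delta'$ outputs $1$ at position $j$.

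The delicate direction, and the main obstacle, is showing that $\delta'$ outputs $0$ at every $i$ with $\sigma(i) = 0$, i.e., that the template cannot arise spuriously. Because any $(2N' + 1)$-window contains at most one bump, a spurious match at such $i$ would require a single bump centered at some $j \neq i$ whose non-$c$ positions coincide (within the window) with those the template expects. Writing $T := \{m : v[m] \neq c\}$, the template expects non-$c$'s at $\{i - N + m : m \in T\}$ while the bump contributes non-$c$'s at $\{j - N + m : m \in T\}$ intersected with the window. When the bump lies entirely in the window, comparing minima of the two sets forces $j = i$; when it lies partially outside, the clipped set is strictly smaller and cannot match the template's $|T|$ non-$c$'s; a window containing no bump fails the template because $v \neq c^{2N+1}$. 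Hence $f_{M'}(\tau)$ agrees with $\sigma$ past some point, and \cref{thm:mutations} gives $f_{M'}(\tau) \caeq \sigma$, so $\tau \cared \sigma$. Combined with $\sigma \cared \tau$, this yields $\tau \caeq \sigma$, completing the case analysis.
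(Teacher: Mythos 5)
Your proof is correct and follows essentially the same route as the paper's: past a sparsity threshold each isolated $1$ of the sparse stream produces a fixed ``bump'' word in the image, and either that bump is all-background (so the image lies in $\ultperiod_1$) or it can be pattern-matched by a second 1CA to recover the sparse stream up to a finite mutation, whence the degree is an atom by \cref{def:atom}. The differences are cosmetic: you keep the background value $c$ general and enlarge the radius to match a background-padded template (making the no-spurious-match argument explicit), whereas the paper normalizes $\delta(0^{2N+1})=0$ and matches the bare bump word at radius $N$.
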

\begin{proof}
    Let $\zeta \in 2^\omega$ be sparse. If $\zeta \in \ultperiod_1$ we are done.
    Thus, let $\zeta \not \in \ultperiod_1$, that is, let it have the form
    \[
    \zeta = \prod_{i = 0}^\infty 1 \cdot 0^{c(i)},
    \]
    for some monotone unbounded $c : \N \to \N$. Now, suppose that $\zeta
    \cared \sigma$ via $M = (\abc, N, \delta)$. Without loss of generality, we
    may assume that $\delta(0, \dots, 0) = 0$. Eventually, every neighborhood
    will contain at most one $1$, so we can focus on these, and deal with the
    finite discrepancies via \cref{thm:mutations}. Define
    \[
    w := \prod_{i = 0}^{2N} \delta(
       0^{2N - i} \cdot 1 \cdot 0^{i});
    \]
    then, given that $\zeta \cared \sigma$ and $\zeta$ is sparse, we have that
    $\sigma$ eventually takes the form ``$w$ followed by increasingly many
    zeros, repeating''. More formally, the following stream
    \[
    \rho := \prod_{i=0}^\infty w \cdot 0^{c(i) - |w|},
    \]
    where we let $0^n = \varepsilon$ for $n < 0$, has the property that from
    some point $j \in \N$ onward, $\sigma(j) = \rho(j)$, and thus $\sigma \caeq
    \rho$ by \cref{thm:mutations}.
    There are two cases:
    \begin{enumerate}
        \item $w = 0^{2N + 1}$, in which case we get $\rho = 0^\omega$, meaning
        that $\sigma \in \ultperiod_1$ and we are done.
        \item $w \neq 0^{2N + 1}$, in which case we can retrieve $\zeta$ back
        from $\rho$ (except for finitely many letters) via the rule
        \[
        \delta(x) = 
        \begin{cases}
            1 \quad \text{if } x = w,\\
            0 \quad \text{else}.
        \end{cases}
        \]
        This implies that $\sigma \caeq \zeta$.
    \end{enumerate}

    We have shown $\forall \sigma \in 2^\omega$, $\zeta \cared
    \sigma$ $\Rightarrow$ $\sigma \in \ultperiod_1$ or $\sigma \caeq \zeta$,
    which means that $[\zeta]_{\caeq}$ is an atom by \cref{def:atom}. 
    \qed
\end{proof}

\begin{corollary}
    There are $2^{\aleph_0}$ many atom $\caeq$-degrees.
\end{corollary}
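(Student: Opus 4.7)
The plan is to combine the previous theorem (every sparse non-ultimately-constant stream has atom degree) with the cardinality bound on individual degrees (\cref{thm:uncountable}). My strategy is: exhibit $2^{\aleph_0}$ many pairwise distinct sparse streams that are not ultimately constant, conclude that each one sits in an atom degree, and then observe that these $2^{\aleph_0}$ streams cannot all be squeezed into fewer than $2^{\aleph_0}$ countable degrees.

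Concretely, I would define an injection $\hat{\cdot} : 2^\omega \to 2^\omega$ by
\[
\hat{\sigma} \;=\; \prod_{i=0}^\infty \sigma(i) \cdot 0^{i},
\]
and restrict it to those $\sigma$ containing infinitely many $1$'s, a set of cardinality $2^{\aleph_0}$. For any such $\sigma$, the stream $\hat{\sigma}$ contains infinitely many $1$'s separated by arbitrarily long blocks of $0$'s, hence is sparse (with $p=0$) and not in $\ultperiod_1$. Moreover, $\sigma \mapsto \hat{\sigma}$ is injective, since the positions of the $1$'s in $\hat{\sigma}$ uniquely determine $\sigma$. This gives a set $S$ of $2^{\aleph_0}$ many sparse, non-ultimately-constant streams.

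By \cref{thm:sparse}, each $\hat{\sigma} \in S$ satisfies $[\hat{\sigma}]_{\caeq} \neq \ultperiod_1$ and $[\hat{\sigma}]_{\caeq}$ is an atom. The quotient map $S \to \hierarchy$ sending $\hat{\sigma}$ to $[\hat{\sigma}]_{\caeq}$ thus lands entirely in the set of atom degrees. Since each fibre of this map is contained in a single $\caeq$-degree, and by \cref{thm:uncountable} every $\caeq$-degree is countable, the image must have cardinality at least $|S|/\aleph_0 = 2^{\aleph_0}$. Combined with the trivial upper bound that there are at most $|\abc^\omega| = 2^{\aleph_0}$ degrees in total, this yields exactly $2^{\aleph_0}$ atoms.

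I do not anticipate a real obstacle here: the heavy lifting is done by \cref{thm:sparse} and \cref{thm:uncountable}. The only point that needs a little care is the cardinal-arithmetic step, where one must explicitly invoke countability of each degree to pass from ``$2^{\aleph_0}$ atom-generating streams'' to ``$2^{\aleph_0}$ distinct atom degrees''. I would make this explicit by noting that a surjection from $S$ onto a set of cardinality $\kappa$ whose fibres are all countable forces $|S| \leq \aleph_0 \cdot \kappa$, so $\kappa \geq 2^{\aleph_0}$.
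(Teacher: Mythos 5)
Your proof is correct and follows essentially the same route as the paper: both arguments exhibit continuum-many sparse non-ultimately-constant streams, invoke \cref{thm:sparse} to see their degrees are atoms, and use the countability of each degree (\cref{thm:uncountable}) to conclude there are $2^{\aleph_0}$ distinct atom degrees. Your version merely makes explicit the injection witnessing that there are $2^{\aleph_0}$ sparse streams and spells out the cardinal-arithmetic step, both of which the paper leaves implicit.
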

\begin{proof}
    There are $2^{\aleph_0}$ many sparse streams, meaning that there are
    $2^{\aleph_0}$ distinct degrees of sparse streams, as each degree is
    countable by \cref{thm:uncountable}, and \cref{thm:sparse} shows they are
    all atoms. 
    \qed
\end{proof}

A natural next step is to determine the streams that sparse streams reduce to.
To address this, we introduce the notion of stream congruence.

\begin{definition}
  Streams $\sigma, \tau \!\in 2^\omega$ are \emph{congruent}\footnote{Note that
  this notion of congruence should be extended up to letter renaming (a
  bijection $\phi : \abc \to \abc$) in case of arbitrary alphabets for the proofs
  to carry over.} whenever there are
    $n, m \! \in \! \N$ such
    that $\forall i \in \N$, $\sigma(n + i) = \tau(m + i)$.
    We denote congruence as $\sigma \cong \tau$.
\end{definition}

It is clear that $\sigma \cong \tau$ implies $\sigma \caeq \tau$
(\cref{thm:mutations}).

\begin{proposition}\label{thm:congruence} For every sparse $\sigma, \tau \in \abc^\omega$,
    $\sigma \caeq \tau$ iff $\sigma \cong \tau$.
\end{proposition}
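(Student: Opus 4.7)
The ($\Leftarrow$) direction is immediate, since congruence gives $\sigma \caeq \tau$ by \cref{thm:mutations}. For the ($\Rightarrow$) direction, the plan is to adapt the trace-word analysis from the proof of \cref{thm:sparse}: suppose $\sigma \cared \tau$ via $M = (2, N, \delta)$, and exploit sparsity of $\sigma$ to ensure that beyond some threshold $n_0$, every neighborhood $\neigh_{\sigma, N}(j)$ contains at most one $1$.

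First I would argue that $\delta(0^{2N+1}) = 0$; otherwise the arbitrarily long runs of $0$'s in $\sigma$ would be mapped to long runs of $1$'s in $\tau$, violating sparsity of $\tau$. Then I would define the trace word $w := w_0 \cdot w_1 \cdots w_{2N}$ with $w_i := \delta(0^{2N-i} \cdot 1 \cdot 0^i)$, which records exactly what $\delta$ outputs as an isolated $1$ sweeps through the $2N+1$ offsets of its neighborhood. Thus every isolated $1$ at position $p \geq n_0$ in $\sigma$ contributes the block $w$ at positions $p - N, \dots, p + N$ in $\tau$. The key observation is that $w$ contains at most one $1$: otherwise, assuming $\sigma$ has infinitely many $1$'s (the remaining case is trivial), two $1$'s in $w$ would force $\tau$ to have $1$'s at bounded distance infinitely often, contradicting sparsity of $\tau$.

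The conclusion then follows by a short case split. If $\sigma$ is eventually $0^\omega$, then so is $\tau$ (because $\delta(0^{2N+1}) = 0$), and $\sigma \cong \tau$ trivially. Otherwise $\sigma$ has infinitely many $1$'s; if $w = 0^{2N+1}$, then $\tau$ is eventually $0^\omega$, and invoking $\tau \cared \sigma$ from $\caeq$ together with sparsity of $\sigma$ forces $\sigma$ itself to be eventually $0^\omega$, a contradiction. Hence $w$ has a unique $1$ at some position $a$, and letting $s := a - N \in [-N, N]$, the trace analysis gives $\tau(j) = \sigma(j - s)$ for all sufficiently large $j$, which is exactly $\sigma \cong \tau$ after choosing appropriate offsets $n, m$.

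The main obstacle is keeping the index bookkeeping straight in the trace-word argument, in particular correctly identifying where the unique $1$ in each trace block lands in $\tau$ and then packaging the shift $s$ into the formal congruence $\sigma(n + k) = \tau(m + k)$. The sparsity-based rule-outs (multiple $1$'s in $w$; constant-$0$ $w$ together with infinitely many $1$'s in $\sigma$) are then routine once this setup is in place.
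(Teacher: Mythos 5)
Your proposal is correct and follows essentially the same route as the paper's proof: show $\delta(0^{2N+1})=0$, trace an isolated $1$ through the $2N+1$ neighborhood offsets to obtain the word $w$, use sparsity of $\tau$ to conclude $w$ has at most one $1$, and read off the fixed shift giving congruence. You are in fact somewhat more careful than the paper, which asserts that \emph{exactly} one single-$1$ neighborhood outputs $1$ without separating out the degenerate cases ($\sigma$ eventually $0^\omega$, or $w=0^{2N+1}$) that you handle explicitly via $\tau \cared \sigma$.
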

\begin{proof} 
It is enough to show the ($\Rightarrow$) direction. Let $\sigma \cared \tau$
through the 1CA $M \! = \! (\abc, N, \delta)$. Then, the following two hold
\begin{itemize}
    \item $\delta(0, \dots, 0) = 0$.
    \item Out of all neighborhoods containing exactly
    one $1$, exactly one of them must output $1$, whereas all the
    others output $0$.
\end{itemize}
Therefore, after some initial part where $\tau(i)$ is determined by the value of
$\delta$ on neighborhoods containing a $\bot$ or multiple $1$'s, we find copies
of $1$ in $\tau$ at a fixed distance from the corresponding $1$ in $\sigma$. 
Thus, we obtain the one-to-one mapping $\phi : \abc \to \abc$ where
$\phi(p_1) = p_2$ and $\phi(x) = y$. Moreover, $\phi^{-1}$ is
given through symmetrical reasoning given $\tau \cared \sigma$,
so $\phi$ is a bijection and $\sigma \cong \tau$. 
\qed
\end{proof}

Sparsity, as defined in this paper, is a quite restrictive property compared to
weaker notions of sparsity introduced in the
literature~\cite{DBLP:books/automatic-sequences}. For instance, streams such as
$\prod_{i \in \N} 11 \cdot 0^i$ are considered sparse in alternative settings,
whereas it is not sparse in the sense studied so far here. Motivated by this, we
next introduce and study a weaker notion.

\begin{definition}
    A stream $\sigma \in \abc^\omega$ is \emph{weakly sparse} if and only if
    there is a padding letter $p \in \abc$ such that 
    \[
        \lim_{n \to \infty} \frac{|\{i \leq n \mid \sigma(i) = p\}|}{n} = 1.
    \]
\end{definition}

\begin{example}\label{ex:almost_sparse}
    For any finite word $w \in 2^*$, the stream 
        $\prod_{i \in \N} w \cdot 0^i$
    is weakly sparse, but it is not sparse unless $w$ has at most one 1.
\end{example}

\begin{example}\label{ex:prime_gaps}
    The \emph{stream of primes} is $\pi \in 2^\omega$ where $\pi(i) = 1$ if and
    only if $i$ is a prime number. Stream $\pi$ is weakly sparse due to the
    prime number theorem, while Zhang's theorem states that there are infinitely
    many prime gaps of size $k$ for some constant $k \in
    \N$~\cite{zhang2014bounded}, which implies that $\pi$ is not sparse.
\end{example}

We will again restrict to $\abc = 2 = \{0,1\}$, and assume that $0$ is the
padding letter. Next, we prove that being sparse is stricter than being weakly
sparse, that is, every sparse stream is weakly sparse, while the converse does
not hold as was shown in \cref{ex:almost_sparse,ex:prime_gaps}.

\begin{lemma}
    $\forall \sigma \in \abc^\omega$, if $\sigma$ is sparse then $\sigma$ is also
    weakly sparse.
\end{lemma}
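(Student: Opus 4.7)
The plan is to use the same padding letter $p$ for both notions and to show that the density of non-$p$ symbols must vanish. Fix a sparse stream $\sigma$ with padding $p$, and let $d \in \mathbb{N}$ be arbitrary. By sparsity there exists $n_0 \in \mathbb{N}$ such that for every $i \geq n_0$ with $\sigma(i) \neq p$, the next $d$ positions are all $p$. This means that past $n_0$, any two non-$p$ occurrences are separated by at least $d+1$ indices.

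From this spacing, I would bound the number of non-$p$ letters in any prefix. For $n \geq n_0$, the non-$p$ positions in the window $[n_0, n]$ lie at least $d+1$ apart, so
\[
|\{\, i \leq n \mid \sigma(i) \neq p \,\}| \;\leq\; n_0 + \left\lceil \frac{n - n_0}{d+1} \right\rceil.
\]
Dividing by $n$ and letting $n \to \infty$ gives
\[
\limsup_{n \to \infty} \frac{|\{i \leq n \mid \sigma(i) \neq p\}|}{n} \;\leq\; \frac{1}{d+1}.
\]
Because $d$ was arbitrary, the limsup is in fact $0$, so $|\{i \leq n \mid \sigma(i) \neq p\}|/n \to 0$, which is equivalent to $|\{i \leq n \mid \sigma(i) = p\}|/n \to 1$. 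Hence $\sigma$ is weakly sparse with the same padding letter $p$.

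There is no real obstacle here: the argument is a straightforward density estimate, and the only minor care point is to split the prefix at the threshold $n_0$ guaranteed by sparsity so that the crude bound $n_0$ on the initial segment is absorbed in the limit. The ceiling term above can alternatively be written as $(n-n_0)/(d+1) + 1$ to make the computation fully elementary.
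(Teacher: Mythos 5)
Your proof is correct and follows essentially the same route as the paper's: both arguments rest on the observation that beyond the sparsity threshold at most one position in any $d+1$ consecutive indices can be non-$p$, yielding the bound $\tfrac{1}{d+1}$ on the asymptotic density of non-$p$ letters, with the initial segment absorbed in the limit. Your phrasing via a $\limsup$ over arbitrary $d$ is just a cosmetic repackaging of the paper's explicit $\gamma$-and-$n_0$ bookkeeping.
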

\begin{proof}
    Let stream $\sigma$ be sparse. Unfolding the limit in the definition of weak
    sparsity, the property that we need to prove can be reformulated as follows: 
    \[ \forall \gamma \in
    \mathbb{R}^{>0}, \exists n_0 \in \N, \forall n \geq n_0,\ \frac{|\{i \leq n
    \mid \sigma(i) = p\}|}{n} \geq 1 - \gamma. 
    \] 
    This reformulation has clear overlap with the definition of sparsity. 
    To get an explicit proof, let $\gamma > 0$ and take any $d \in \N$. Then,
    by sparsity, there exists $N \in \N$ such that for all $i \ge N$, if
    $\sigma(i) \neq p$ then $\sigma(i+1),\dots,\sigma(i+d)$ are all equal to
    $p$. Therefore, among any $d{+}1$ consecutive positions after $N$, at most
    one is not $p$.
    
    Hence for all $n \ge N$,
    \[
        |\{N < i \le n \mid \sigma(i) \neq p\}| \le \frac{n-N}{d+1},
    \]
    and thus $|\{i \le n \mid \sigma(i) \neq p\}| \le \frac{n-N}{d+1} + N$.
    Dividing by $n$ yields
    \[
        \frac{|\{i \le n \mid \sigma(i) \neq p\}|}{n}
        \le \frac{1}{d+1} + \frac{N}{n}.
    \]
    Taking complements,
    \[
        \frac{|\{i \le n \mid \sigma(i)=p\}|}{n}
        \ge 1 - \frac{1}{d+1} - \frac{N}{n}.
    \]
    To conclude the proof, choose $d$ such that $\frac{1}{d+1} \leq
    \frac{\gamma}{2}$, and also choose $n_0 \ge N$ such that $\tfrac{N}{n} \le
    \frac{\gamma}{2}$ for all $n \ge n_0$. 
    \qed
\end{proof}

\begin{lemma}
    If $\sigma$ is sparse, then every $\tau \in [\sigma]_{\caeq}$ is weakly
    sparse.
\end{lemma}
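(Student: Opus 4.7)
The plan is to transport the sparsity of $\sigma$ through any witnessing 1CA and obtain weak sparsity of $\tau$. Since $\tau \in [\sigma]_{\caeq}$, there exists a 1CA $M = (\abc, N, \delta)$ with $f_M(\sigma) = \tau$. Let $c := \delta(0^{2N+1})$, the value the local rule assigns to an all-padding neighborhood; I aim to show that $c$ is a padding letter witnessing weak sparsity of $\tau$.

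The key observation is that, for every $i \geq N$, if $\neigh_{\sigma, N}(i) = 0^{2N+1}$ then $\tau(i) = c$. Consequently, every position $i \geq N$ with $\tau(i) \neq c$ must lie within distance $N$ of a non-padding position of $\sigma$. Writing $\nu_\sigma(k)$ for the number of non-zero positions in $\sigma(0), \dots, \sigma(k)$, this yields the bound
\[
|\{i \leq n : \tau(i) \neq c\}| \leq N + (2N+1)\cdot \nu_\sigma(n+N),
\]
where the leading $N$ absorbs the finitely many boundary positions $i < N$ where $\neigh_{\sigma, N}(i)$ contains a $\bot$ and the factor $(2N+1)$ counts the positions covered by each non-zero letter of $\sigma$.

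To conclude, I would invoke the preceding lemma, which guarantees $\nu_\sigma(k)/k \to 0$ as $k \to \infty$. Dividing the displayed bound by $n$ and letting $n \to \infty$ gives $|\{i \leq n : \tau(i) \neq c\}|/n \to 0$, equivalently $|\{i \leq n : \tau(i) = c\}|/n \to 1$, which is precisely weak sparsity of $\tau$ with padding letter $c$. The only subtlety worth flagging is that $c$ need not coincide with $\sigma$'s padding letter: if $\delta$ maps the all-zero neighborhood to $1$, then $\tau$ is weakly sparse with padding letter $1$ rather than $0$. Since the definition of weak sparsity only requires \emph{some} padding letter to exist, this is not an obstacle, and indeed the entire argument is quantitative and leaves no hard step.
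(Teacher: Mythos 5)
Your proof is correct, but it takes a genuinely different route from the paper's. The paper derives the lemma as a corollary of the structural analysis in \cref{thm:sparse}: there, any $\tau$ with $\sigma \cared \tau$ was shown to agree, up to finitely many positions, with a stream of the form $\prod_{i} w \cdot 0^{c(i)-|w|}$ for the monotone unbounded gap function $c$ of $\sigma$, and weak sparsity of $\tau$ is then read off from the unboundedness of $c$. You instead give a direct counting argument: every position of $\tau$ (beyond the boundary) whose neighborhood in $\sigma$ is all-padding must output the fixed letter $\delta(0^{2N+1})$, so the non-padding positions of $\tau$ are confined to the radius-$N$ halos of the non-padding positions of $\sigma$, giving the bound $|\{i \le n : \tau(i) \neq c\}| \le N + (2N+1)\,\nu_\sigma(n+N)$, which vanishes in density. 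Your argument is quantitative and self-contained, and it actually establishes a strictly stronger fact: it uses only $\sigma \cared \tau$ (one direction of $\caeq$) and only \emph{weak} sparsity of $\sigma$ (via the preceding lemma), so it shows that any stream a weakly sparse stream reduces to is again weakly sparse. What the paper's route buys in exchange is an exact description of the eventual shape of $\tau$, which is more information than mere weak sparsity; what your route buys is independence from the full sparsity hypothesis and from the case analysis of \cref{thm:sparse}. Your closing remark that the padding letter of $\tau$ may differ from that of $\sigma$ is a correct and worthwhile observation that the paper glosses over.
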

\begin{proof}
    Recall cases 1.\ and 2.\ from \cref{thm:sparse}. In case 1.\ we have
    $\sigma, \tau \in \ultperiod_1$, so it is weakly sparse. In case 2.\
    excluding finitely many letters, $\tau$ eventually has the form
    \[
    \prod_{i=0}^\infty w \cdot 0^{c(i) - |w|},
    \]
    with $0^n = \varepsilon$ if $n < 0$. Recall that $c : \N \to \N$ is
    monotone unbounded, which implies that $\tau$ is weakly sparse.
    \qed
\end{proof}

Given a stream $\sigma \in \abc^\omega$, we can construct a weakly sparse stream
containing the information from $\sigma$ by interleaving it with increasingly
many padding letters. The following result uses this principle to encode the
well-founded structure of periodic streams into a family of weakly sparse
streams.

\begin{proposition}
    Define the following weakly sparse stream:
    \[
        \varsigma_i = \prod_{j = 0}^\infty (1 \cdot 0^{i-1})^j \cdot 0^{2^j},
    \]
    Then, for any $n, m \in \N$, we have $\varsigma_m \cared
    \varsigma_n$ if and only if $n \mid m$.
\end{proposition}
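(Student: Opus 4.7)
The plan mirrors Theorem~\ref{thm:periodic_divisibility}, lifting its argument from purely periodic streams to the weakly sparse family $\{\varsigma_i\}$. The key observation is that the $j$-th block of $\varsigma_i$ carries a periodic chunk $(1\cdot 0^{i-1})^j$ of length $ij$ that grows with $j$, so for any fixed 1CA radius $N$, once $j$ is large enough every interior neighborhood of this chunk coincides with a neighborhood of the fully periodic stream $(1\cdot 0^{i-1})^\omega \in \ultperiod_i$, and the exponentially growing padding $0^{2^j}$ insulates consecutive chunks from one another.

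For the $(\Rightarrow)$ direction, I would suppose $\varsigma_m \cared \varsigma_n$ via a 1CA $M = (2, N, \delta)$. For each sufficiently large $j$ I would restrict attention to the central part of the $j$-th periodic chunk of $\varsigma_m$, whose neighborhoods are pure shifts of $(1\cdot 0^{m-1})^\omega$; the rigidity argument of Lemma~\ref{thm:period_red} then forces $\delta$ to produce a periodic output with period dividing $m$ on this region. Identifying this output with the values $\varsigma_n$ takes on the same positions, and using the divisibility characterization of Theorem~\ref{thm:periodic_divisibility} together with the minimal-period analysis of Lemma~\ref{thm:closed_equiv}, I would extract $n \mid m$.

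For the $(\Leftarrow)$ direction, given $n \mid m$ I would construct a 1CA $M'$ of radius $N > m$ whose rule acts in two regimes: on neighborhoods that are internal to a sufficiently long period-$m$ chunk, it emulates the position-preserving rule from the proof of Theorem~\ref{thm:periodic_divisibility} that converts $(1\cdot 0^{m-1})^\omega$ into the desired period-$n$ output; on neighborhoods dominated by zero padding, it returns $0$. Block by block, for $j$ beyond a finite threshold, I would show that $f_{M'}(\varsigma_m)$ coincides with $\varsigma_n$ at every position of that block, and then appeal to Corollary~\ref{thm:mutations} to absorb any initial finite discrepancy.

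The hard part is the interface between periodic chunks and their zero paddings. Since the block lengths in $\varsigma_m$ and $\varsigma_n$ are $mj + 2^j$ and $nj + 2^j$ respectively, the positions where $\delta'$ must switch modes do not line up naturally with the structure of $\varsigma_n$. Carefully engineering the rule, or the choice of representative in $[\varsigma_n]_{\caeq}$, so that the output actually equals $\varsigma_n$ up to finitely many letters, rather than some nearby but inequivalent weakly sparse stream, is where I expect the bulk of the technical work to live.
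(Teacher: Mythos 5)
Your $(\Rightarrow)$ direction is the paper's argument (arbitrarily long periodic factors of $\varsigma_m$ force a reduction of $(1\cdot 0^{m-1})^\omega$, then invoke \cref{thm:periodic_divisibility}), and your $(\Leftarrow)$ direction is the paper's one-line construction ``map every $1 \cdot 0^{m-1}$ to $(1\cdot 0^{n-1})^k$ positionwise, output $0$ on padding''. The problem is the very point you flag in your last paragraph and then defer: the misalignment of block boundaries is not residual technical work but a genuine obstruction that no choice of rule, radius, or representative can overcome. Concretely, the $J$-th block of $\varsigma_i$ begins at position $S_i(J) = i\,J(J-1)/2 + 2^J - 1$, so the ones of $\varsigma_n$ belonging to its $J$-th block occupy positions in $[S_n(J), S_n(J)+nJ)$, whereas the periodic chunk of $\varsigma_m$'s $J$-th block only begins at $S_m(J) = S_n(J) + (m-n)J(J-1)/2$. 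For $n < m$ this quadratic drift eventually exceeds $nJ$ plus any fixed radius $N$, while remaining far smaller than the padding length $2^{J-1}$; hence for all large $J$ every position $p$ with $\varsigma_n(p) = 1$ in block $J$ satisfies $\neigh_{\varsigma_m,N}(p) = 0^{2N+1}$. Since there are also positions $p'$ with all-zero neighborhood in $\varsigma_m$ and $\varsigma_n(p') = 0$, the local rule would need $\delta(0^{2N+1})$ to equal both $0$ and $1$. The positionwise rule you (and the paper) describe actually outputs $\prod_j (1\cdot 0^{n-1})^{jk}\cdot 0^{2^j}$, which disagrees with $\varsigma_n$ at infinitely many positions, so \cref{thm:mutations} cannot absorb the discrepancy, and replacing $\varsigma_n$ by another member of $[\varsigma_n]_{\caeq}$ reproduces the same drift.

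The same misalignment also undercuts the stated premise of the $(\Rightarrow)$ argument: for $n \neq m$ the long periodic chunks of $\varsigma_m$ are aligned with the all-zero padding of $\varsigma_n$, not with its periodic chunks, so the induced periodic reduction is onto $0^\omega$ and yields only $1 \mid m$. (The implication itself survives, but vacuously, via the obstruction above.) So your proposal correctly isolates the step on which everything hinges but leaves it open; closing it requires changing the construction of the $\varsigma_i$ --- for instance, choosing the padding lengths so that the block boundaries of $\varsigma_m$ and $\varsigma_n$ coincide at every $j$ --- rather than engineering $\delta$.
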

\begin{proof}
    It is clear that $\varsigma_i$ is weakly sparse, as $O(2^j) \gg O(j \cdot
    i)$ for constant $i$. We therefore focus on proving the bi-implication.

    $(\Rightarrow)$ Suppose that $\varsigma_m \cared \varsigma_n$ with machine
    $M = (\abc, N, \delta)$. By construction of $\varsigma_m$ and $\varsigma_n$,
    arbitrarily large substrings (even larger than the local neighborhood of
    $M$) of $(1 \cdot 0^{m - 1})^\omega \in \ultperiod_m$ appear in
    $\varsigma_m$, which are mapped to substrings of $(1 \cdot 0^{n - 1}) \in
    \ultperiod_n$ in $\varsigma_n$ by $M$ due to $f_M(\varsigma_m) =
    \varsigma_n$. Therefore, we have that $f_m(\tau_m) \caeq \tau_n$, which
    implies that $n \mid m$ by \cref{thm:periodic_divisibility}.

    $(\Leftarrow)$ Suppose that $n \mid m$, i.e.\ $\exists k \in \N, m = n
    \cdot k$. Then, we construct a 1CA $M = (\abc, N, \delta)$ to reduce
    $\varsigma_m \cared \varsigma_n$ by mapping every $1 \cdot 0^m$ in the input
    stream to $(1 \cdot 0^n)^k$ in the output.
    \qed
\end{proof}

\subsection{Maximal degrees and suprema}\label{sec:maxima}

So far, we have established the existence of a minimal (bottom) degree. A
natural next question is whether maximal degrees exist. This question is
notoriously difficult: it remains open for finite-state
transducers~\cite{DBLP:journals/int/EndrullisHK11}, and it also remains
unresolved in our setting. Nevertheless, we can obtain partial results in this
direction, beginning with the following theorem:

\begin{theorem}\label{thm:maximal}
    For any $\sigma \in \abc^\omega$, if there exists a finite word $v \in
    \abc^*$ that appears only finitely many times in $\sigma$, then $\sigma$ is not
    maximal.
\end{theorem}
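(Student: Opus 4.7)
The plan is to exhibit a stream $\tau$ with $\tau \cared \sigma$ and $\sigma \not\cared \tau$, so that $[\sigma]_\caeq \calt [\tau]_\caeq$ witnesses non-maximality. The strategy is a Cantor-style argument: build an uncountable family of streams that all reduce to $\sigma$, and conclude via the countability of each $\caeq$-degree (\cref{thm:uncountable}) that not all of them can lie in $[\sigma]_\caeq$.

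By pigeonhole on $\abc^{|v|}$, since $v$ has only finitely many occurrences in $\sigma$, some word $w \in \abc^{|v|}$ with $w \neq v$ must occur infinitely often in $\sigma$. I would then select a strictly increasing sequence $p_0 < p_1 < \dots$ of positions past the last occurrence of $v$ in $\sigma$, each satisfying $\sigma(p_j), \dots, \sigma(p_j + |v| - 1) = w$, spaced at least $2|v|$ apart, and chosen so that overwriting the block $[p_j, p_j + |v| - 1]$ by $v$ does not produce any boundary-induced occurrence of $v$ in the surrounding context of $\sigma$.

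For each $\alpha \in 2^\omega$, define $\tau_\alpha$ to agree with $\sigma$ everywhere except that the block $[p_j, p_j + |v| - 1]$ is overwritten by $v$ when $\alpha(j) = 1$ (and left as $w$ when $\alpha(j) = 0$). Distinct $\alpha$ yield distinct $\tau_\alpha$ (they must disagree at the first block where $\alpha$ differs), so the family has cardinality $2^{\aleph_0}$. I would then construct a single 1CA $R$ of radius $|v|$ that uniformly detects $v$-occurrences in its input and replaces them by the corresponding letters of $w$. By the spacing and boundary-freeness conditions from the previous paragraph, the only $v$-occurrences in $\tau_\alpha$ past a fixed finite prefix are the planted ones, so $f_R(\tau_\alpha)$ coincides with $\sigma$ past that prefix. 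Combining with \cref{thm:mutations} and transitivity of $\cared$ (\cref{thm:preorder}), this gives $\tau_\alpha \cared \sigma$ for every $\alpha$.

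Finally, \cref{thm:uncountable} tells us that each $\caeq$-degree is countable, while $\{\tau_\alpha : \alpha \in 2^\omega\}$ is uncountable and lies entirely in the $\caleq$-upset of $[\sigma]_\caeq$. If $\sigma$ were maximal this upset would equal $[\sigma]_\caeq$, which cannot contain an uncountable family; contradiction. Hence some $\tau_\alpha$ satisfies $[\sigma]_\caeq \calt [\tau_\alpha]_\caeq$. The main obstacle is the ``no spurious $v$'' condition: one must guarantee that planting $v$ at each $p_j$ creates no additional $v$-occurrences straddling the block boundary, so that $R$ faithfully recovers $\sigma$. Depending on the self-overlap structure of $v$ and on which contexts of $w$ are available in $\sigma$, this may require restricting the $p_j$ to positions with a specific surrounding buffer, or replacing $v$ by a slightly longer marker word containing $v$ (which still appears only finitely many times in $\sigma$, hence satisfies the hypothesis).
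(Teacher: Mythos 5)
Your proposal follows essentially the same route as the paper: use pigeonhole to find a block $w$ occurring infinitely often, plant a marker at an arbitrary subset of those occurrences to obtain $2^{\aleph_0}$ streams that all reduce back to $\sigma$, and conclude from the countability of each $\caeq$-degree (\cref{thm:uncountable}) that $\sigma$ cannot lie above all of them. The one point of divergence is the step you yourself flag as the main obstacle, and in the version you primarily describe it is a genuine gap rather than a routine detail: planting $v$ itself fails when $v$ is bordered (e.g.\ $v = 00$), because a position of $\tau_\alpha$ can then lie in several overlapping occurrences of $v$, so the ``replace $v$ by the corresponding letter of $w$'' rule is ill-defined; moreover there is no a priori guarantee that infinitely many occurrences of $w$ come with contexts that avoid boundary-induced spurious copies of $v$. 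The paper closes exactly this gap with the fallback you sketch at the end, made concrete: it plants $\hat v := 1 \conc v \conc 1 \conc 0^{|v|+1}$, a word containing $v$ (hence still occurring only finitely often in $\sigma$) with no nontrivial self-overlap, at occurrences of some length-$|\hat v|$ word $w$ occurring infinitely often. Unborderedness forces every occurrence of $\hat v$ in $\tau_\alpha$ to be either a planted block or one of the finitely many original ones, so the recovering 1CA is unambiguous and the residual discrepancies are absorbed by \cref{thm:mutations}. With that substitution your argument is complete and coincides with the paper's.
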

\begin{proof}
    Suppose $v \in \abc^*$ occurs only finitely many times in $\sigma$.
    Construct a word $\hat v$ containing $v$ that does not overlap
    (nontrivially) with its own shifts. For example, one may take $\hat v := 1
    \cdot v \cdot 1 \cdot 0^{|v|+1}$. Since $v$ occurs only finitely often in
    $\sigma$, so does $\hat v$. By the pigeonhole principle, there exists some
    word $w \in \abc^*$ of length $|\hat v|$ that occurs infinitely often in
    $\sigma$. Consider the process of modifying $\sigma$ by replacing an
    arbitrary subset of these infinitely many occurrences of $w$ with $\hat v$.
    Each such modification yields a new infinite stream $\tau$. Two facts are
    immediate: 1.~Every such $\tau$ can be reduced back to $\sigma$ by replacing
    all $\hat v$'s by $w$ (as $\hat v$ does not overlap its own shifts), and
    then compensating for the finitely many original $\hat v$'s that were
    already in $\sigma$ via a finite mutation. 2.~There are $2^{\aleph_0}$
    distinct ways to choose which copies of $w$ to replace. Hence, $\sigma$
    cannot reduce to all of these streams, which means there exist streams
    strictly above $\sigma$. Therefore, $\sigma$ is not maximal. 
    \qed
\end{proof}

By \cref{thm:maximal}, a maximal stream must contain every finite subword
infinitely often, i.e., it must have maximal subword
complexity~\cite{DBLP:journals/tcs/EhrenfeuchtLR75}. Let $\mathcal{S} \subseteq
\abc^\omega$ be the set of such streams. We now show that not all streams in
$\mathcal{S}$ are maximal for 1CA reducibility.

\begin{lemma}\label{thm:non_maximal}
    There exist non-maximal streams in $\mathcal{S}$.
\end{lemma}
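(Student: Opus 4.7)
The plan is to prove the lemma via a counting argument. By Lemma~\ref{thm:uncountable}, each $\caeq$-degree is countable, so it suffices to exhibit some $\sigma \in \mathcal{S}$ for which the set $\{\tau \in \abc^\omega : \tau \cared \sigma\}$ is uncountable: then, since $[\sigma]_\caeq$ is countable, some preimage $\tau$ must fall outside $[\sigma]_\caeq$, yielding $\tau \cared \sigma$ with $\sigma \not\cared \tau$, i.e.\ $\tau \canred \sigma$, and thus $\sigma$ is not maximal.

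To construct such a $\sigma$, I would take a concatenation of enumerated content blocks separated by long padding. Let $(v_n)_{n \in \N}$ enumerate all finite binary words so that every finite word appears infinitely often, and let $(L_n)_{n \in \N}$ be a sequence growing much faster than $|v_n|$. Define
\[
\sigma \;=\; \prod_{n \in \N} v_n \cdot P_n,
\]
where $P_n = 1 \cdot 0^{L_n} \cdot 1$ is a padding block whose flanking $1$'s act as local guards separating content from padding. Then $\sigma \in \mathcal{S}$ because every finite word $w$ occurs as a subword of infinitely many $v_n$, and hence appears infinitely often in $\sigma$.

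For each $\kappa \in 2^\omega$, I would define a preimage $\tau_\kappa$ by encoding $\kappa(n)$ into the $n$-th padding block, replacing $P_n$ by $1 \cdot 0^{\lfloor L_n/2 \rfloor} \cdot \kappa(n) \cdot 0^{L_n - \lfloor L_n/2 \rfloor - 1} \cdot 1$. This gives $2^{\aleph_0}$ pairwise distinct streams. Each $\tau_\kappa$ should reduce to $\sigma$ via a 1CA whose local rule outputs $0$ whenever the center cell is a $1$ surrounded by a large neighborhood of $0$'s, and otherwise copies the center cell unchanged; this erases precisely the inserted middle bits. Finitely many boundary effects can be absorbed by Corollary~\ref{thm:mutations}. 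Since $[\sigma]_\caeq$ is countable while $\{\tau_\kappa : \kappa \in 2^\omega\}$ is uncountable, some $\tau_\kappa$ satisfies $\tau_\kappa \canred \sigma$, proving that $\sigma \in \mathcal{S}$ is not maximal.

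The main obstacle is that the erase-isolated-$1$ rule must not also eliminate $1$'s occurring naturally inside the $v_n$: because $(v_n)$ enumerates every finite word, arbitrarily long content blocks eventually contain the subword $0^N 1 0^N$ for any fixed 1CA radius $N$, which a naive rule would misfire on and thereby corrupt $\sigma$. To overcome this I would refine the enumeration $(v_n)$ so that each $v_n$ avoids a small, carefully chosen set of forbidden patterns matching the padding's exact local signature (while still realizing every desired subword, either internally to some $v_n$ or by spanning a $v_n$/$P_n$ boundary so that $\sigma \in \mathcal{S}$ is preserved), and allow the 1CA radius $N_\kappa$ used for the reduction $\tau_\kappa \cared \sigma$ to depend on $\kappa$, taking it small enough that $L_n > 2 N_\kappa + 1$ eventually and the scaffold $1$'s flanking $P_n$ lie within the window. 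Balancing ``enumerate richly enough for $\mathcal{S}$'' against ``keep the padding signature locally distinguishable from content'' is the technically delicate heart of the argument.
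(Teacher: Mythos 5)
Your meta-argument is sound and mirrors the paper's own proof of \cref{thm:maximal}: if uncountably many streams reduce to some $\sigma \in \mathcal{S}$, then by \cref{thm:uncountable} one of them lies strictly above $\sigma$, so $\sigma$ is not maximal. The gap is in the construction of those uncountably many predecessors, and it is not a repairable technicality. For the reduction $\tau_\kappa \cared \sigma$ with radius $N$, each inserted marker position has neighborhood $0^N \cdot 1 \cdot 0^N$ in $\tau_\kappa$ (the padding blocks dwarf any fixed radius), and the rule must send this neighborhood to $0$, since $\sigma$ carries a $0$ there. But $\sigma \in \mathcal{S}$ forces the word $0^N \cdot 1 \cdot 0^N$ to occur infinitely often in $\sigma$ itself; each such occurrence is centered on a genuine $1$ of $\sigma$, lies far from the padding middles (whose $\sigma$-values are $0$), and is therefore a position where $\tau_\kappa$ and $\sigma$ agree and the rule sees the very same neighborhood yet must output $1$. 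This holds for every radius, so no $\tau_\kappa$ with infinitely many inserted $1$'s reduces to $\sigma$ at all. Your proposed fix --- forbidding the padding's local signature from the content blocks --- directly contradicts $\sigma \in \mathcal{S}$, which by definition forbids nothing; the same obstruction is exactly what confines the replace-$w$-by-$\hat v$ trick of \cref{thm:maximal} to streams outside $\mathcal{S}$.

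The paper's proof sidesteps this by arguing in the opposite direction: take any $\sigma \in \mathcal{S}$, apply the XOR automaton of \cref{ex:1ca_xor}, and show that $\tau = f_M(\sigma)$ is still in $\mathcal{S}$ (every word has a preimage under XOR occurring infinitely often in $\sigma$) but cannot be reduced back to $\sigma$ (non-injectivity: both $0 \cdot w$ and $1 \cdot w$ occur in $\tau$ and would force a single neighborhood to two different outputs). Thus $\tau$ itself is the non-maximal witness, with $\sigma$ strictly above it; no counting argument and no marker construction are needed.
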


\begin{proof}
    Take $\sigma \in \mathcal{S}$, and let $M$ be the XOR 1CA from
    \cref{ex:1ca_xor}. Letting $f_M(\sigma) = \tau$, we claim that $\tau \in
    \mathcal{S}$, yet $f_M(\sigma)$ is not maximal.

    \begin{itemize}
        \item 
        Due to surjectivity, every finite word appears infinitely often in
        $f_M(\sigma)$: for instance, if one asks whether $011$ occurs infinitely
        often in $f_M(\sigma)$, the answer is yes because its preimage $1101$
        occurs infinitely often in $\sigma$. Hence $f_M(\sigma) \in
        \mathcal{S}$. 
        \item  
        On the other hand, due to the non-injectivity of the XOR automaton,
        $f_M(\sigma)$ cannot be reduced back to $\sigma$. To show this, suppose
        for a contradiction that $f_M(\sigma) \cared \sigma$ via $M' = (2, N,
        \delta)$. For any word $w \in 2^{2N+1}$, both words $0 \cdot w$ and $1
        \cdot w$ must appear in $f_M(\sigma)$, given that $f_M(\sigma) \in
        \mathcal{S}$. Concretely, there exist indices $i$ and $j$ in
        $f_M(\sigma)$ where 
        \[
        \neigh_{f_M(\sigma), N}(i) = w = \neigh_{f_M(\sigma), N}(j),
        \]
        as well as having the letter to the left of $\neigh_{f_M(\sigma), N}(i)$
        be a 0 (i.e.\ $f_M(\sigma)(i-N-1) = 0$) and to the left of
        $\neigh_{f_M(\sigma), N}(j)$ be a 1 (i.e.\ $f_M(\sigma)(j-N-1) = 1$).
        Then, since there is a 0 to the left of $w$ in one case, and a 1 in the
        other case, the parity of the output of $M'$ is flipped (as it undoes
        the XOR operation). Thus,
        \[
        \delta(\neigh_{f_M(\sigma), N}(i)) \neq
        \delta(\neigh_{f_M(\sigma), N}(j)),
        \] a contradiction. \qed
    \end{itemize}
   \end{proof}

The hierarchies of Mealy machines and FSTs form semi-lattices, with the zip
operator (see \cref{ex:fst_zip}) providing the least upper bound of two
streams~\cite{DBLP:journals/ita/Belovs08,DBLP:journals/int/EndrullisHK11}. For
1CA reductions, it is possible to construct suprema when constrained to
ultimately periodic streams or sparse streams. However, as shown later in
\cref{thm:no_upper_bound}, upper bounds of pairs of streams do not exist in
general.

\begin{corollary}\label{thm:period_suprema}
    $\ultperiod_{\lcm(n, m)}$ is a supremum of $\{\ultperiod_n, \ultperiod_m\}$.
\end{corollary}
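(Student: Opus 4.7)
The plan is to verify that $\ultperiod_{\lcm(n,m)}$ satisfies both conditions defining a supremum in the poset $(\hierarchy, \caleq)$. The upper-bound direction, $\ultperiod_n, \ultperiod_m \caleq \ultperiod_{\lcm(n,m)}$, is immediate from \cref{thm:periodic_divisibility}, since both $n$ and $m$ divide $\lcm(n,m)$.

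For the least-upper-bound direction, I would fix any upper bound $X \in \hierarchy$ of $\{\ultperiod_n, \ultperiod_m\}$ and any $\tau \in X$, and exhibit a 1CA $M$ with $f_M(\tau) \in \ultperiod_L$, where $L = \lcm(n,m)$. The trick is to pick the canonical representatives $\alpha = (0^{n-1} \cdot 1)^\omega \in \ultperiod_n$ and $\beta = (0^{m-1} \cdot 1)^\omega \in \ultperiod_m$: the assumption that $X$ is an upper bound then supplies 1CAs $M_\alpha = (\abc, N_\alpha, \delta_\alpha)$ and $M_\beta = (\abc, N_\beta, \delta_\beta)$ with $f_{M_\alpha}(\tau) = \alpha$ and $f_{M_\beta}(\tau) = \beta$.

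The key construction is to combine $M_\alpha$ and $M_\beta$ into a single 1CA $M = (\abc, N, \delta)$ with common radius $N = \max(N_\alpha, N_\beta)$, setting $\delta(x) = 1$ iff $\delta_\alpha$ and $\delta_\beta$ both return $1$ on the appropriate sub-neighborhoods of $x$, and $\delta(x) = 0$ otherwise. Then $f_M(\tau)(i) = 1$ iff $\alpha(i) = 1$ and $\beta(i) = 1$, i.e., iff $i \equiv -1 \pmod n$ and $i \equiv -1 \pmod m$. By the Chinese Remainder Theorem, these two congruences collapse to $i \equiv -1 \pmod L$, so $f_M(\tau) = (0^{L-1} \cdot 1)^\omega \in \ultperiod_L$. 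Invoking \cref{thm:closed_equiv} to lift the reduction to every stream in $\ultperiod_L$, and noting that $\tau \in X$ was arbitrary, then yields $\ultperiod_L \caleq X$.

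The main subtlety will be the choice of combination: naive interleaving or zipping would either enlarge the alphabet or give a stream whose minimal period is a proper divisor of $L$. Pinning the unique $1$ in each canonical representative at residue $-1$ modulo its period is exactly what makes the Chinese Remainder Theorem produce support on the residue class $-1 \pmod L$, thus securing that the combined output has minimal period \emph{exactly} $L$ rather than some smaller divisor.
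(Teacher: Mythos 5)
Your proof is correct, but it takes a genuinely different route from the paper's. The paper argues indirectly: it takes an upper bound $X$ that is assumed to lie below $\ultperiod_{\lcm(n,m)}$, invokes the downward closure of $\ultperiod$ under $\cared$ (\cref{thm:closed_down}, itself imported from FST theory) to conclude $X$ is some $\ultperiod_k$, and then applies \cref{thm:periodic_divisibility} to force $X \caeq \ultperiod_{\lcm(n,m)}$ --- in effect establishing that no upper bound sits strictly between the pair and $\ultperiod_{\lcm(n,m)}$, i.e.\ minimality among comparable upper bounds. You instead argue directly: given an \emph{arbitrary} upper bound $X$ and $\tau \in X$, you combine the two witnessing 1CAs into a single rule computing the pointwise conjunction of their outputs, which lands exactly on $(0^{L-1}\cdot 1)^\omega$ because $n \mid i+1$ and $m \mid i+1$ iff $\lcm(n,m) \mid i+1$ (you cite CRT, but this is just the defining property of the lcm and needs no coprimality). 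This buys you the literal least-upper-bound property --- $\ultperiod_{\lcm(n,m)} \caleq X$ for every upper bound $X$, including ones not a priori comparable to $\ultperiod_{\lcm(n,m)}$ or ultimately periodic --- and it is more self-contained, relying only on \cref{thm:closed_equiv} and the routine radius-padding construction from \cref{thm:preorder}. Your observation that anchoring the single $1$ at residue $-1$ in both canonical representatives is what guarantees the output has minimal period exactly $L$ (rather than a proper divisor) is the right subtlety to flag.
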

\begin{proof}
    $\ultperiod_{\lcm(n, m)}$ is an upper bound of $\ultperiod_n, \ultperiod_m$,
    by \cref{thm:periodic_divisibility}. To show that it is a supremum, let
    $X \in \hierarchy$ such that $X \cagt \ultperiod_n, \ultperiod_m$ and
    $\ultperiod_{\lcm(n, m)} \cagt X$. Then, we get $X \in \ultperiod$, as
    $\ultperiod$ is closed under $\cared$ by \cref{thm:closed_down}, and thus $X
    \caeq \ultperiod_{\lcm(n, m)}$ by \cref{thm:periodic_divisibility}.
    \qed
\end{proof}

Constructing suprema of sparse streams is a bit more involved, but still
possible. 

\begin{lemma}\label{thm:sparse_suprema}
    Let $\sigma, \tau$ be sparse. Then, $\{\sigma, \tau\}$ has a supremum.
\end{lemma}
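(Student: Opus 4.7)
The plan is to handle the degenerate cases separately and then exhibit an explicit supremum in the main case. If $\sigma \caeq \tau$, the supremum is trivially their common degree. If $\sigma \in \ultperiod_1$ (or $\tau \in \ultperiod_1$), then by \cref{thm:t1_low} every stream reduces to it, so $[\tau]$ (resp.\ $[\sigma]$) is already an upper bound of both and, being one of the two elements themselves, it must lie below every upper bound, hence it is the supremum. The interesting case is when both are non-$\ultperiod_1$ and $\caeq$-incomparable, so by \cref{thm:sparse} the two degrees are distinct atoms.

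For the interesting case, I would define $\rho \in 2^\omega$ pointwise by $\rho(n) := \sigma(n) \vee \tau(n-1) \vee \tau(n+1)$, taking $\tau(-1):=0$. The intuition is that each $\sigma$-$1$ shows up in $\rho$ as an isolated $1$, while each $\tau$-$1$ at position $q$ shows up as a pair ``$1 \cdot \_ \cdot 1$'' at positions $q{-}1$ and $q{+}1$. Since both streams are sparse, the 1-positions in each grow unboundedly apart, so beyond some finite index the two pattern shapes never overlap.

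Next I would verify that $\rho$ is an upper bound. For $\rho \cared \sigma$, use the 1CA of radius $2$ whose rule outputs $1$ at $n$ exactly when $\rho(n)=1$, $\rho(n-2)=0$, and $\rho(n+2)=0$; eventually this selects exactly the isolated $1$'s, i.e.\ the $\sigma$-positions. For $\rho \cared \tau$, use the rule that outputs $1$ at $n$ exactly when $\rho(n-1)=1$ and $\rho(n+1)=1$, eventually detecting the centers of the encoded pairs. In both cases, sparsity guarantees the rule is correct past some index, and the finitely many early discrepancies are absorbed by \cref{thm:mutations}.

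Minimality is where the construction pays off. Given any upper bound $\rho'$ witnessed by 1CAs $M_1$ reducing $\rho'\cared\sigma$ and $M_2$ reducing $\rho'\cared\tau$ with radii $N_1,N_2$, I would combine them into a single 1CA $M$ of radius $\max(N_1,N_2)+1$: at position $n$, $M$ simulates $M_1$ at $n$ to recover $\sigma(n)$ and $M_2$ at $n{-}1$ and $n{+}1$ to recover $\tau(n-1)$ and $\tau(n+1)$, then outputs the disjunction $\sigma(n)\vee\tau(n-1)\vee\tau(n+1) = \rho(n)$. This yields $\rho' \cared \rho$, so $[\rho]$ sits below every upper bound and is therefore the supremum. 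The main bookkeeping difficulty I expect is controlling the finitely many early positions where the $\sigma$- and $\tau$-patterns in the encoding can interact (for example when $\sigma$ and $\tau$ share a 1-position, producing a local ``$1\cdot 1\cdot 1$'' pattern), but as in the upper-bound step these issues are uniformly absorbed by \cref{thm:mutations}.
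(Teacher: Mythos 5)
Your overall architecture (an explicit encoding of the pair, an upper bound via decoding rules, minimality by combining the two machines of an arbitrary upper bound) matches the paper's, and your minimality step is sound in spirit. The gap is in the encoding itself, and it is fatal rather than a bookkeeping issue. You claim that since both streams are sparse, ``beyond some finite index the two pattern shapes never overlap,'' and later that the interactions between $\sigma$-patterns and $\tau$-patterns are confined to finitely many early positions. Sparsity constrains the gaps \emph{within} each stream separately; it says nothing about the relative offsets between the $1$'s of $\sigma$ and the $1$'s of $\tau$, which can remain bounded forever (e.g.\ $\tau$ a one-position shift of $\sigma$). Hence the overlaps you defer to \cref{thm:mutations} can occur infinitely often and cannot be absorbed as finite mutations.

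Concretely, the disjunction $\rho(n) = \sigma(n) \vee \tau(n-1) \vee \tau(n+1)$ is lossy, and $\rho$ need not be an upper bound at all. Take positions $p_1 < p_2 < \dots$ growing far apart, set $\sigma(p_i)=1$ for all $i$, and set $\tau(p_i+1)=1$ for even $i$ and $\tau(p_i-1)=1$ for odd $i$ (all other entries $0$); both streams are sparse. For even $i$, the stream $\rho$ carries the pattern $1\cdot 0\cdot 1$ at positions $p_i,\,p_i+1,\,p_i+2$, while for odd $i$ it carries the same pattern at $p_i-2,\,p_i-1,\,p_i$, in each case surrounded by arbitrarily long blocks of $0$'s. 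For any radius $N$, the neighborhood of the left $1$ of this pattern is eventually identical in the two cases, yet $\sigma$ has its $1$ at the left position for even $i$ and at the right position for odd $i$; so equal neighborhoods would have to produce different outputs, and no 1CA maps $\rho$ to $\sigma$. The paper's construction avoids exactly this by assigning a \emph{distinct} four-letter codeword to each possible relative offset between a $\sigma$-$1$ and a nearby $\tau$-$1$, so that the joint local configuration of the pair is encoded injectively; your encoding would need to be repaired in that spirit before the decoding and minimality steps can go through.
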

\begin{proof}
Given sparse $\sigma$ and $\tau$ in $2^\omega$, we shall encode $\sigma$ and
$\tau$ into a new stream $\xi$ to form a supremum. This can be done by carefully
avoiding patterns that can be mixed up if padded with 0's left-and-right. The
following is one example of such an encoding:
\begin{itemize}
    \item If $\sigma(i) = 1$ and $\tau(i) = 1$, let $\xi(i, i+1, i+2, i+3) = 1000$,
    \item if $\sigma(i) = 1$ and $\tau(i+1) = 1$, let $\xi(i, i+1, i+2, i+3) = 1100$,
    \item if $\sigma(i) = 1$ and $\tau(i+2) = 1$, let $\xi(i, i+1, i+2, i+3) = 1010$,
    \item if $\sigma(i) = 1$ and $\tau(i+3) = 1$, let $\xi(i, i+1, i+2, i+3) = 1110$,
    \item if $\tau(i) = 1$ and $\sigma(i+1) = 1$, let $\xi(i, i+1, i+2, i+3) = 1001$,
    \item if $\tau(i) = 1$ and $\sigma(i+2) = 1$, let $\xi(i, i+1, i+2, i+3) = 1101$,
    \item if $\tau(i) = 1$ and $\sigma(i+3) = 1$, let $\xi(i, i+1, i+2, i+3) = 1011$,
    \item for any yet undefined index $i$, let $\xi(i) = 0$.
\end{itemize} 
From some point $n$ on, every 1 (in both $\sigma$ and $\tau$) is followed by at
least seven 0's, which ensures that the patterns above will not overlap.
Anything occurring before point $n$ can be ignored, as it can be addressed via a
finite mutation. Therefore, $\xi \cared \sigma, \tau$. To show that this upper
bound is also a supremum, consider an arbitrary upper bound $\xi'$ of both
$\sigma$ and $\tau$, and respective machines $M = (2, N, \delta)$ and $M' = (2,
N', \delta')$. Then, at a high-level, one can obtain a reduction from $\xi'$ to
$\xi$ straightforwardly, as $\xi(i)$ is defined with respect to $\sigma(i)$ and
$\tau(i)$ in the scheme above, and these are identified by $\delta(\neigh_{\xi',
N}(i))$ and $\delta'(\neigh_{\xi', N'}(i))$ respectively.
\qed
\end{proof}

Note that $\xi$ as constructed in \cref{thm:sparse_suprema} is not sparse (but
weakly sparse), so one cannot inductively apply the scheme above to obtain
suprema of finite sets of sparse streams. Nevertheless, for any finite set of
sparse streams, a similar argument can provide a suitable encoding, albeit a
more complicated one.

The construction of suprema in \cref{thm:period_suprema,thm:sparse_suprema} was
possible since, informally, not so much information is packed in them, allowing
us to compress two streams into a single one. In contrast, we next construct
streams that pack too much information to allow an upper bound, and in fact
belong to $\mathcal{S}$.

\begin{lemma}\label{thm:no_upper_bound}
    There exist $\sigma, \tau \in \abc^\omega$ without an upper bound.
\end{lemma}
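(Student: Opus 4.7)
The plan is to construct $\sigma,\tau \in \abc^\omega$ by a diagonalization argument against all pairs of 1CAs, exploiting the cardinality mismatch between the $|\abc|$ choices per cell of an input stream and the $|\abc|^2$ outcomes needed to specify each pair $(\sigma(i),\tau(i))$. Suppose a candidate upper bound $\xi \in \abc^\omega$ satisfies $f_{M_1}(\xi) = \sigma$ and $f_{M_2}(\xi) = \tau$ with common radius $N = \max(N_1, N_2)$ (extending the shorter rule by ignoring unused arguments). Then the joint rule $\delta \colon (\abc \cup \{\bot\})^{2N+1} \to \abc \times \abc$ defined by $\delta(w) := (\delta_1(w), \delta_2(w))$ satisfies $\delta(\neigh_{\xi, N}(i)) = (\sigma(i), \tau(i))$ for every $i$. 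Hence the pair-valued sequence $(\sigma(i), \tau(i))_{i \geq N}$ must arise as the edge-label sequence of an infinite walk in a de Bruijn-style graph $G_\delta$ on $|\abc|^{2N}$ vertices with outdegree $|\abc|$, whose edges carry labels in $\abc \times \abc$ determined by $\delta$.

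Next I show that $G_\delta$ has too little branching to realize every pair-sequence. The number of distinct length-$\ell$ edge-label sequences is bounded by the number of length-$\ell$ walks, namely $|\abc|^{2N} \cdot |\abc|^\ell = |\abc|^{2N+\ell}$, whereas $(\abc \times \abc)^\ell$ has cardinality $|\abc|^{2\ell}$. For any $\ell > 2N$ the former is strictly smaller, so there exists a forbidden word $v^{(M_1, M_2)} \in (\abc \times \abc)^\ell$ that labels no walk in $G_\delta$. Crucially, every contiguous factor of a walk label is itself a walk label of a sub-walk; therefore any longer sequence that contains $v^{(M_1, M_2)}$ as a factor likewise fails to label any walk. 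Thus forcing $v^{(M_1, M_2)}$ to appear as a factor of $(\sigma, \tau)$ beyond position $N$ kills $(M_1, M_2)$ as a witness for the existence of an upper bound.

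For the diagonalization, fix an arbitrary letter $p \in \abc$ and enumerate all pairs $(M_1^n, M_2^n)_{n \in \N}$ of 1CAs over $\abc$, which is countable since each 1CA has a finite description. Build pair-valued finite prefixes iteratively with $u_0 = \varepsilon$ and $u_{n+1} := u_n \cdot (p, p)^{N^{n+1}} \cdot v^{(M_1^{n+1}, M_2^{n+1})}$, where $N^{n+1}$ is the common radius of the $(n{+}1)$-st pair. Let $\rho \in (\abc \times \abc)^\omega$ be the resulting limit and let $\sigma, \tau \in \abc^\omega$ be its two coordinate projections. For any claimed upper bound $\xi$ with witness pair $(M_1, M_2) = (M_1^n, M_2^n)$ from the enumeration, the joint output $(f_{M_1^n}(\xi)(i), f_{M_2^n}(\xi)(i))_i$ would have to equal $\rho$, hence $v^{(M_1^n, M_2^n)}$ would label a sub-walk in $G_{\delta^n}$, contradicting its forbidden status.

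The main obstacle is establishing the existence of the forbidden word $v^{(M_1, M_2)}$, which rests on the simple but structural counting argument above: no 1CA over $\abc$ can make its joint image dense among all pair-sequences over $\abc \times \abc$. A secondary subtlety concerns the $\bot$-padded boundary behavior of $f_M(\xi)$ near position $0$; this is handled by inserting the padding $(p, p)^{N^{n+1}}$ before each forbidden word, ensuring that $v^{(M_1^n, M_2^n)}$ lies at positions $\geq N^n$, safely inside the interior regime where every neighborhood consists entirely of letters in $\abc$ and the walk picture applies without modification.
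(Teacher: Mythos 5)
Your proof is correct, and its engine is the same counting argument the paper uses: a window of $\xi$ of length $2N+\ell$ determines $2\ell$ output letters (the $\ell$ aligned pairs), and $|\abc|^{2N+\ell} < |\abc|^{2\ell}$ for $\ell > 2N$, so not every pair-word of length $\ell$ is realizable. Where you diverge is in how this is turned into a pair of streams with no upper bound. The paper constructs a single \emph{universal} pair $(\sigma,\tau)$ in which, for every $k$ and every $w_1,w_2\in\abc^k$, the block $w_1$ occurs in $\sigma$ aligned with $w_2$ in $\tau$; this forces the joint map $r:\abc^{2N+\ell}\to\abc^{2\ell}$ to be surjective for every $\ell$, which the count forbids, defeating all machine pairs with one construction and no enumeration. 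You instead enumerate all pairs of 1CAs, extract for each pair a specific forbidden pair-word (a non-image of the paper's $r$, phrased as an unlabellable walk in a de Bruijn graph), and splice these into $(\sigma,\tau)$ one by one, with $(p,p)^{N}$ padding to keep each forbidden factor clear of the $\bot$-boundary. Both arguments are sound; the paper's version is shorter and yields the additional fact (used in the surrounding discussion of maximal degrees) that the witnessing streams can be taken to have maximal subword complexity, i.e.\ to lie in $\mathcal{S}$, whereas your diagonalized streams need not have that property --- though the lemma as stated does not require it. Your de Bruijn-graph framing and the explicit handling of the boundary regime are careful and correct.
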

\begin{proof}
Construct $\sigma$ and $\tau$ so that for every $k$ and every pair $w_1, w_2 \in
\Sigma^k$, block $w_1$ appears in $\sigma$ and is aligned with $w_2$ in $\tau$.
Suppose, for a contradiction, that $\xi$ reduces to $\sigma$ and $\tau$ via
machines $M_1 = (\Sigma, N_1, \delta_1)$ and $M_2 = (\Sigma, N_2, \delta_2)$,
respectively. Let $N = \max(N_1, N_2)$, and, extend $\delta_i : (\abc \cup
\{\bot\})^{2N+1} \to \abc$ with $i = 1, 2$, for convenience. Notice that a block
of $l$ letters in $\sigma$ (or $\tau$) depends only on a block of length $2N +
l$ in $\xi$. Moreover, such a block of $\xi$ needs to encode both $l$-sized
blocks of $\sigma$ and $\tau$, which is impossible for $l > 2N$ by construction.
More formally, we define maps of the form $g, h : \abc^{2N + l} \to \abc^l$ as
follows:
\begin{align*}
g(x_1, \dots, x_{2N+l}) &:= \prod_{j=1}^{l} \delta_1(x_j, \dots, x_{2N+j}),\\
h(x_1, \dots, x_{2N+l}) &:= \prod_{j=1}^{l} \delta_2(x_j, \dots, x_{2N+j}).
\end{align*}
Then, by letting $r(x) := g(x) \cdot h(x)$, it holds that 
\[
r(\xi(i-N), \dots, \xi(i+N+l-1)) = \sigma(i : i+l-1) \cdot \tau(i : i+l-1),
\]
given that $\xi \cared \sigma, \tau$ via $\delta_1$ and $\delta_2$. Function $r
: \abc^{2N + l} \to \abc^{l + l}$ must be surjective by construction of $\sigma$
and $\tau$, as they contain all pairs of subwords of length $l$. However, this
is not possible when $l > 2N$; yet another proof by diagonalization. 
\qed
\end{proof}

\section{An Algorithm for Determining Reducibility}
\label{sec.algorithm}

We can restate 1CA-reducibilty compactly in a nice slogan:

\begin{center}
\noindent\fbox{%
    \hfill
    $\sigma \cared \tau$ iff $\exists N \in \N, \forall i, i' \in \N,
    \neigh_{\sigma,N}(i) = \neigh_{\sigma,N}(i') \Rightarrow \tau(i) = \tau(i')$. 
    \hfill
}
\end{center}

The slogan suggests a brute-force algorithm for determining 1CA reducibility by
iterating over all neighborhood radii and checking for mismatches. We introduce
\cref{alg:ca_reduction}, which takes as input two streams $\sigma, \tau \in
\abc^\omega$, treated as oracles, along with a maximum query budget
$c_\text{max} \in \N$, which limits how many symbols of $\sigma$ and $\tau$ may
be accessed. It also uses a tunable hyperparameter $\alpha \in \mathbb{Q}^{>0}$
concerning the confidence required before returning ``Yes''.

The algorithm proceeds by iterating over increasing radius values $N = 0, 1, 2,
\dots$. For each $N$, it examines positions $i \geq N$ in the stream. At each
index $i$, it extracts the neighborhood $\neigh_{\sigma, N}(i)$ from $\sigma$. If
this neighborhood has not been seen before, it is added to a cache $V_N$ along
with the corresponding value $\tau(i)$. If it has been seen, the algorithm
verifies that $\tau(i)$ matches the previously stored value; a mismatch
indicates that the current radius $N$ is insufficient, prompting the algorithm
to break from the inner loop and increment $N$. Query usage is tracked
throughout, and once the query limit $c_\text{max}$ is reached, the algorithm
makes a judgment. If it has checked at least $\alpha |\abc|^{2N+1}$ distinct
indices for the current $N$, it returns ``Yes''; and otherwise returns ``No''.

\begin{algorithm}[tb]
\caption{Algorithm for approximating 1CA reducibility.}
\label{alg:ca_reduction}
\begin{algorithmic}[1]
\STATE \textbf{Input:} $\sigma, \tau \in \abc^\omega$, $c_\text{max} \in \N$, $\alpha \in \mathbb{Q}^{>0}$
\STATE \textbf{Output:} Estimate of whether $\sigma \cared \tau$
\STATE Initialize query counter $c = 0$
\FOR{$N = 0$ to $\infty$}
    \STATE Initialize cache $V_N = \{\}$
    \FOR{$i = N$ to $\infty$}
        \STATE $c \leftarrow c + 1$

        \IF{$c > c_\text{max}$}
            \IF{$i \geq \alpha |\abc|^{2N+1}$}
                \RETURN ``Yes''
            \ELSE
                \RETURN ``No''
            \ENDIF
        \ENDIF

        \IF{$\neigh_{\sigma,N}(i) \not \in V_N$}
            \STATE $V_N \leftarrow V_N \cup \{\neigh_{\sigma,N}(i) \mapsto \tau(i)\}$
        \ELSE
            \IF{$V_N(\neigh_{\sigma,N}(i)) \neq \tau(i)$}
                \STATE Break from the inner for loop.
            \ENDIF
        \ENDIF
    \ENDFOR
\ENDFOR
\end{algorithmic}
\end{algorithm}

Indeed, \cref{alg:ca_reduction} cannot decide whether a stream reduces to
another in finite time because there are infinitely many symbols to check. In
fact, if we encode streams as Turing machines, the decision problem is
unsurprisingly undecidable (see \cref{app.arithmetic}). Nevertheless, it has
``asymptotically correct'' behavior: $\sigma$ reduces to $\tau$ if and only if
the algorithm eventually always outputs ``Yes''. We formalize this in
\cref{thm:algo_correct}, where $P(\sigma, \tau, c_\text{max}, \alpha) \in
\{\text{``Yes''}$, $\text{``No''}\}$ denotes running the algorithm with
arguments $\sigma, \tau, c_\text{max}, \alpha$.

\begin{theorem}\label{thm:algo_correct} 
    For $\sigma, \tau \in \abc^\omega$, and $\alpha > 0$, $\sigma \cared \tau$
    if and only if $\exists c_0 \in \N$ where $\forall c_\text{max} \geq c_0$,
    $P(\sigma, \tau, c_\text{max}, \alpha) = \text{``Yes''}$.
\end{theorem}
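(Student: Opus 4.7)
The plan is to notice that \cref{alg:ca_reduction} simply tests the slogan at each radius $N$ in turn: at radius $N$, the inner loop maintains a partial function from $\sigma$-neighborhoods to $\tau$-values in the cache $V_N$, and it breaks exactly when this partial function would become multi-valued. Hence the relation $\sigma \cared \tau$ holds iff, for some $N$, the inner loop at radius $N$ never breaks. The proof of correctness then reduces to balancing query budgets against the threshold $\alpha|\abc|^{2N+1}$, and splits naturally into the two implications.

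For the $(\Rightarrow)$ direction, I would let $N^{*}$ be the smallest radius witnessing $\sigma \cared \tau$ via the slogan. For each $N < N^{*}$ the slogan fails, so a concrete pair of conflicting positions eventually appears, causing the inner loop at radius $N$ to break at some finite position $b(N)$; let $Q^{*}$ denote the total queries consumed by these breaks. At radius $N^{*}$ the loop never breaks, so setting $c_0 := Q^{*} + \lceil \alpha|\abc|^{2N^{*}+1}\rceil$ guarantees that, for any $c_\text{max} \geq c_0$, the cutoff is triggered while still inside radius $N^{*}$ at an index $i \geq \alpha|\abc|^{2N^{*}+1}$, yielding \emph{Yes}.

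For the $(\Leftarrow)$ direction, I would argue the contrapositive: if $\sigma \not\cared \tau$, then the slogan fails at \emph{every} radius, so each inner loop breaks at a finite $b(N)$, and the cumulative query count $Q(N)$ after finishing radius $N$ is finite, strictly increasing, and unbounded. For any target $c_0$, pick $N$ large enough that both $Q(N{-}1) \geq c_0$ and $N < \alpha|\abc|^{2N+1}$; the latter holds for all sufficiently large $N$, since the right-hand side is exponential in $N$ while the left-hand side is linear. Setting $c_\text{max} := Q(N{-}1)$ makes the algorithm exhaust radii below $N$ exactly, enter radius $N$, and trigger its cutoff at the very first position $i = N$, producing \emph{No}. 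This yields arbitrarily large $c_\text{max}$ returning \emph{No}, contradicting the hypothesis.

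The main obstacle is the contrapositive, since we must \emph{place} the cutoff at a small index compared with $\alpha|\abc|^{2N+1}$, rather than merely argue that it occurs. This is where the exponential threshold built into the algorithm does the work: no matter how large the finite break positions $b(N)$ happen to be, for any target $c_0$ there is always a later radius $N$ at which the narrow window ``just after entering radius $N$'' lies below the threshold, and by choosing $c_\text{max}$ to land in that window we force a \emph{No}. The rest of the argument is routine bookkeeping of query counts.
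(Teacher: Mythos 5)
Your proposal is correct and follows essentially the same route as the paper: both directions reduce to the slogan plus bookkeeping of query budgets against the threshold $\alpha|\abc|^{2N+1}$. Your $(\Leftarrow)$ direction, argued by contrapositive, is in fact more explicit than the paper's about the key point that a conflict alone does not produce ``No'' --- one must additionally place the budget cutoff in the window just after entering a fresh radius, where $i$ is still below the exponential threshold --- which the paper's version glosses over.
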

\begin{proof}
    $(\Rightarrow)$ 
    Let $\sigma,\tau\in\abc^\omega$, and let $N$ be the smallest radius such
    that $\sigma \cared \tau$ with a 1CA of radius $N$. Then, $\forall N'< N,
    \exists i_{N'}, i'_{N'}$, where $\neigh_{\sigma, N'}(i_{N'}) = \neigh_{\sigma,
    N'}(i'_{N'})$ and $\tau(i_{N'}) \neq \tau(i'_{N'})$. By letting $c_0 =
    \sum_{N'<N} \max(i_{N'},i'_{N'})$, we have that \cref{alg:ca_reduction} will
    rule out all radii less than $N$ if it is passed a value for $c_\text{max}$
    greater than $c_0$. So, for any $\alpha \in \mathbb{R}^{>0}$, we can let
    $c_0' = c_0 + \alpha\lceil \abc^{2N+1}\rceil$, which means that for any
    $c_\text{max} > c_0'$, we get $P(\sigma, \tau, c_\text{max}, \alpha) =
    \text{``Yes''}$.

    $(\Leftarrow)$
    Let $\sigma, \tau \in \abc^\omega$, $\alpha > 0$, and assume that
    $\exists c_0 \in \N$ where $\forall c_\text{max} \geq c_0$, $P(\sigma,
    \tau, c_\text{max}, \alpha) = \text{``Yes''}$. Then, for some $N$, there do
    not exist $i, i' \in \N$ where $\neigh_{\sigma, N}(i) = \neigh_{\sigma, N}(i')$ and
    $\tau(i) \neq \tau(i')$, as the algorithm would otherwise eventually find
    this pair and output $\text{``No''}$. Therefore, $\sigma \cared \tau$. 
    \qed
\end{proof}

We reason about the negative case by contraposition in the following
corollary.

\begin{corollary}\label{thm:algo_correct3} For any $\sigma, \tau \in
    \abc^\omega$, and $\alpha > 0$, we have $\sigma \not\cared \tau$ 
    iff there are infinitely many $c_\text{max} \in \N$ where $P(\sigma,
    \tau, c_\text{max}, \alpha) = \text{``No''}$. \qed
\end{corollary}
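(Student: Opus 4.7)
The plan is to derive this corollary directly from \cref{thm:algo_correct} by taking the contrapositive, since the biconditional stated in the corollary is the logical contrapositive of the one just proved. First I would negate the left-hand side: $\sigma \not\cared \tau$ is, by definition, the negation of $\sigma \cared \tau$. Then I would negate the right-hand side of \cref{thm:algo_correct}, pushing the quantifiers inward, to obtain
\[
\forall c_0 \in \N,\ \exists c_\text{max} \geq c_0,\ P(\sigma, \tau, c_\text{max}, \alpha) \neq \text{``Yes''}.
\]

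The next step is to observe that $P$ is two-valued (per its specification in \cref{alg:ca_reduction} it only outputs ``Yes'' or ``No''), so $P \neq \text{``Yes''}$ can be replaced by $P = \text{``No''}$. Finally, I would note that the statement ``$\forall c_0 \in \N$, $\exists c_\text{max} \geq c_0$ with $P = \text{``No''}$'' is the standard formulation of ``there are infinitely many $c_\text{max} \in \N$ where $P(\sigma, \tau, c_\text{max}, \alpha) = \text{``No''}$'', which is exactly the right-hand side of the corollary.

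No substantial mathematical obstacle arises here, since the entire argument is a routine exercise in quantifier negation that leans on the already established \cref{thm:algo_correct}. The only subtle point I would be explicit about is the two-valuedness of $P$: without it, the negation of ``eventually always Yes'' would not automatically rephrase as ``infinitely often No''.
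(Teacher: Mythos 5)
Your proposal is correct and matches the paper's own treatment: the corollary is stated immediately after \cref{thm:algo_correct} with the remark that it follows ``by contraposition,'' which is exactly the quantifier-negation argument you spell out. Your explicit note about the two-valuedness of $P$ (needed to pass from ``not Yes'' to ``No'') is a reasonable point of care that the paper leaves implicit.
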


However, \cref{thm:algo_correct3} does not imply that, if the reduction is not
possible, then the algorithm will eventually \emph{always} output ``No''. Of
course, one can construct two streams for which the algorithm outputs ``Yes''
infinitely many times (but not always) even though the answer should be ``No'',
by diagonalization. This is due to the issue of having to check infinitely many
symbols in finite time mentioned earlier. We construct such streams in
\cref{thm:algo_incorrect}.

\begin{lemma}\label{thm:algo_incorrect} $\exists \sigma, \tau \in
    \abc^\omega$, $\alpha > 0$, such that $\sigma \not \cared \tau$ even
    though there are infinitely many $c_\text{max} \in \N$ where $P(\sigma,
    \tau, c_\text{max}, \alpha) = \text{``Yes''}$.
\end{lemma}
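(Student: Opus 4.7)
The plan is to diagonalize against the algorithm's budget: I will construct a single pair $(\sigma, \tau)$ for which no radius actually witnesses a reduction, yet for infinitely many budgets the first contradiction at the current radius lies just out of reach.

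Take $\sigma = 0^\omega$. Since $\neigh_{\sigma, N}(i) = 0^{2N+1}$ for every $i \geq N$, the algorithm's cache at radius $N$ records only $\tau(N)$ and aborts at the first later index where $\tau$ disagrees; in particular $\sigma \cared \tau$ if and only if $\tau$ is eventually constant. I choose $\tau$ to contain isolated $1$'s at positions $s_1 < s_2 < \cdots$ satisfying the growth condition $s_{k+1} \geq \alpha |\abc|^{2 s_k + 3} + 1$, with $\tau(i) = 0$ elsewhere. Then $\tau$ has infinitely many $0$'s and $1$'s, so it is not eventually constant, hence $\sigma \not\cared \tau$.

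Fix $k$ and set $N_k = s_k + 1$. At this radius the inner loop sees $\tau(i) = 0$ for every $N_k \leq i < s_{k+1}$ and first discovers a contradiction at $i = s_{k+1}$, giving a consistent run of length $s_{k+1} - N_k$. Letting $T_k$ denote the total iteration count already accrued by the time the outer loop enters $N = N_k$, a bookkeeping calculation shows that the algorithm returns ``Yes'' exactly when
\[
c_\text{max} \in \bigl[\,T_k + \alpha |\abc|^{2 N_k + 1} - N_k,\; T_k + s_{k+1} - N_k - 1\,\bigr].
\]
The growth condition on $s_k$ ensures this interval is nonempty for every $k$, and since $T_k$ is strictly increasing the intervals are pairwise disjoint and unbounded. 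This produces infinitely many $c_\text{max}$ for which $P(\sigma, \tau, c_\text{max}, \alpha) = \text{``Yes''}$, despite $\sigma \not\cared \tau$.

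The main subtlety is calibrating the growth of $s_k$ so that the consistent run at each radius $N_k$ exceeds the threshold $\alpha |\abc|^{2 N_k + 1}$, which itself grows doubly exponentially in $k$. The super-exponential recurrence above is tuned to precisely defeat this threshold; everything else is straightforward accounting of when the counter $c$ crosses $c_\text{max}$.
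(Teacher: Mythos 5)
Your proof is correct and follows the same strategy as the paper: take $\sigma = 0^\omega$ and a non-eventually-constant $\tau$ whose all-zero stretches are so long that, for infinitely many budgets, the counter is exhausted while the current radius still looks consistent and the running index has passed the threshold $\alpha|\abc|^{2N+1}$. The one substantive difference is your growth condition, and it matters. The paper takes $\tau=\prod_{i}0^{\lceil\alpha|\abc|^{2i+1}\rceil}\cdot 1$, keying the length of the $i$-th gap to the \emph{block index} $i$; but by the time the outer loop's radius $N$ reaches that gap, $N$ is already on the order of the position of the previous $1$, roughly $\alpha|\abc|^{2i-1}$, so the threshold $\alpha|\abc|^{2N+1}$ the algorithm demands is doubly exponential in $i$ and vastly exceeds the gap length $\lceil\alpha|\abc|^{2i+1}\rceil$. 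Under a literal reading of \cref{alg:ca_reduction}, the paper's $\tau$ produces a ``Yes'' only for the very first block (e.g.\ for $\abc=2$, $\alpha=1$, only $c_\text{max}=2$ works), not for infinitely many budgets. Your recurrence $s_{k+1}\ge\alpha|\abc|^{2s_k+3}+1$, which keys each gap to the \emph{position} of the previous $1$ and hence to the radius $N_k=s_k+1$ actually in force when the algorithm scans it, is the correct calibration: it makes the ``Yes''-interval at each $k$ nonempty, and since $T_k$ is strictly increasing these intervals are unbounded, exactly as you argue. Two immaterial nits: the upper endpoint of your interval should be $T_k+s_{k+1}-N_k$ rather than $T_k+s_{k+1}-N_k-1$ (the break index itself is still processed and can trigger the return), and you should remark that $s_{k+1}>s_k+1$ so that the cached value at radius $N_k$ is $\tau(N_k)=0$; both follow from your growth condition for, say, $\alpha=1$.
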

\begin{proof}
    Let $\sigma = 0^\omega$ and $\tau := \prod_{i \in \N} 0^{\lceil \alpha
    |\abc|^{2i+1}\rceil} \cdot 1$. Then, $\sigma \not \cared \tau$ by
    \cref{thm:closed_down}, but there are infinitely many $c_\text{max} \geq
    c_0$ where $P(\sigma, \tau, c_\text{max}, \alpha) = \text{``Yes''}$.
    \qed
\end{proof}

In \cref{thm:algo_incorrect}, stream $\tau$ looks like $\sigma$ for
exponentially long (around $\alpha|\abc|^{2i+1}$ symbols), after which it looks
different to satisfy $\sigma \not \cared \tau$. In general, any counterexample
that makes the lemma work must also have these exponentially long parts where
the two streams look similar. The algorithm will thus eventually always yield
the correct output for any pair of streams without these exponential parts,
which is the case for many (if not most) streams of general interest. For
instance, ultimately periodic streams are eventually always correctly
classified.

\section{Alternative Definitions of Reducibility}\label{sec:alternatives}
\label{sec.alternatives}
This section considers extensions to the notion of 1CA reducibility introduced
so far. We provide preliminary results for hybrid and finite-word cellular
automata, and also briefly introduce alternatives. These may not be cellular
automata in the strict sense, but they are nevertheless cellular-automaton-like.

\subsection{Hybrid Cellular Automata}\label{sec:hybrid}

Some approaches to cryptography involve the use of hybrid cellular automata,
where multiple local rules are combined in an alternating manner to balance
various metrics of cryptographic
robustness~\cite{DBLP:conf/acri/ChakrabortyC12}. In the next definitions, we
introduce hybrid cellular automaton reducibility and degrees, and prove some
preliminary properties.

\begin{definition}
    A \emph{hybrid 1CA} is a tuple $M = (\abc, N, \{\delta_1, \dots,
    \delta_K\})$ defined like standard 1CAs, except for the global rule
    $f_M(\sigma)(i) = \delta_k (\neigh_{\sigma,N}(i))$ with $k \equiv i \pmod{K}$.
\end{definition}

\begin{definition}
    A \emph{hybrid 1CA reduction} is relation $\cared_h \subseteq \abc^\omega
    \times \abc^\omega$ where $\sigma \cared_h \tau$ if and only if there exists
    a h1CA $M$ such that $f_M(\sigma) = \tau$. Additionally, we let $\sigma
    \caredl_h \tau \Leftrightarrow \tau \cared_h \sigma$, and $\caeq_h =
    \cared_h \cap \caredl_h$. 
\end{definition}

\begin{definition}
    A \emph{$\caeq_h$-degree} is a member of the quotient set $\hierarchy_h =
    \abc^\omega \slash \caeq$, and we refer to $\hierarchy_h$ as the \emph{set
    of $\caeq_h$-degrees}.
    The relation $\caleq_h$ is defined for all $X, Y \in \hierarchy_h$ as
    $Y \caleq_h X \Leftrightarrow \forall \sigma \in X, \forall \tau \in Y, \ 
    \sigma \cared_h \tau$.
\end{definition}

\begin{proposition}\label{thm:hca_comp}
    Relation $\cared_h$ compares as follows with FST- and 1CA-reductions:
    \begin{enumerate}
        \item $\cared \subsetneq \cared_h$,
        \item $\cared_h \subsetneq \fstred$.
    \end{enumerate}
    This implies that $\cared_h$ is closed under finite mutations. 
\end{proposition}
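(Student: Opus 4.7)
I would establish each containment in turn and read off the closure under finite mutations as a free corollary. For claim 1, the inclusion $\cared \subseteq \cared_h$ is immediate since any 1CA is a hybrid 1CA with $K = 1$. For strictness I would take $\sigma = 0^\omega \in \ultperiod_1$ and $\tau = (01)^\omega \in \ultperiod_2$: the hybrid 1CA of radius $N = 0$ with $K = 2$ and constant rules $\delta_1 \equiv 0$ and $\delta_2 \equiv 1$ witnesses $\sigma \cared_h \tau$, whereas Theorem~\ref{thm:periodic_divisibility} rules out $\sigma \cared \tau$ because $2 \nmid 1$. For the containment $\cared_h \subseteq \fstred$, I would extend the FST construction of Proposition~\ref{thm:ca_subseteq_fst}: given $M = (\abc, N, \{\delta_1, \ldots, \delta_K\})$ with $f_M(\sigma) = \tau$, the FST state pairs the sliding window of the last $2N+1$ letters with a counter modulo $K$; it outputs $\varepsilon$ while the window is being filled, and thereafter outputs the appropriate $\delta_k$ applied to the window before advancing the window and incrementing the counter modulo $K$.

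For strictness in claim 2, I would use the witness $\sigma = \prod_{i \ge 0} 1 \cdot 0^i$ and $\tau = \prod_{i \ge 0} \beta(i) \cdot 0^i$ with $\beta(i) = (i+1) \bmod 2$. An FST tracking the running parity of 1's read (outputting the current parity on each $1$ and a $0$ on each $0$) produces $\tau$ from $\sigma$, so $\sigma \fstred \tau$. To rule out $\sigma \cared_h \tau$, I would suppose a hybrid 1CA of radius $N$ with $K$ rules performs the reduction. The positions of the $1$'s in $\sigma$ are the triangular numbers $p_j = \tfrac{j(j+1)}{2}$, and for $j > N$ the neighborhood at $p_j$ is the fixed word $0^{N-1} \cdot 1 \cdot 0^N$; hence $\tau(p_j)$ depends only on $p_j \bmod K$, via some map $g$ with $g(p_j \bmod K) = \beta(j)$ for all sufficiently large $j$. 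I would then derive a contradiction by exhibiting $j, j'$ large with $p_j \equiv p_{j'} \pmod K$ but $j \not\equiv j' \pmod 2$: take $j = mK - 1$ for large $m$ and $j' = j + (2K+1)$, whence
\[
    p_{j'} - p_j \;=\; (2K+1)(j + K + 1) \;\equiv\; j + 1 \;\equiv\; 0 \pmod K,
\]
while $j' - j = 2K + 1$ is odd, forcing $\beta(j) \ne \beta(j')$ and contradicting $g(p_j \bmod K) = g(p_{j'} \bmod K)$.

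Closure under finite mutations is then immediate: if $\sigma, \tau$ agree outside a finite set then $\sigma \cared \tau$ by Corollary~\ref{thm:mutations}, and hence $\sigma \cared_h \tau$ by claim 1. The main obstacle is the strict containment in claim 2: the FST-versus-1CA witness from Proposition~\ref{thm:fst_not_subseteq_ca} is tuned against standard 1CAs and does not obviously resist arbitrary hybrid variants, since its pigeonhole step need not align with the extra positional information available modulo $K$. This motivates the triangular-spacing witness above, whose contradiction relies on the identity $p_{j+2K+1} - p_j \equiv j + 1 \pmod K$.
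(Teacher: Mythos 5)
Your proof is correct, and for item 1 and the inclusion part of item 2 it coincides with the paper's argument: the same $K=1$ embedding, the same witness $0^\omega \cared_h (01)^\omega$ via two constant rules alternating (with non-reducibility under $\cared$ following from \cref{thm:periodic_divisibility}), and the same FST simulation with state space $(\abc \cup \{\bot\})^{2N+1} \times \{1,\dots,K\}$. Where you genuinely diverge is the strictness of $\cared_h \subsetneq \fstred$: the paper's proof of item 2 only exhibits the inclusion and never supplies a pair $\sigma \fstred \tau$ with $\sigma \not\cared_h \tau$, whereas you construct one explicitly. Your triangular-spacing witness is sound: for $j > N$ every $1$ of $\sigma$ sits in the fixed neighborhood $0^N \cdot 1 \cdot 0^N$ (note the minor typo $0^{N-1}$ in your write-up), so $\tau(p_j)$ can only depend on $p_j \bmod K$, and the congruence $p_{j+2K+1} - p_j = (2K+1)(j+K+1) \equiv j+1 \pmod K$ with $j = mK-1$ produces two indices of opposite parity mapped to the same residue, contradicting $g(p_j \bmod K) = \beta(j)$. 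Your observation that the witness of \cref{thm:fst_not_subseteq_ca} is tuned against uniform 1CAs and need not defeat the extra positional information available modulo $K$ is exactly the right reason why a new construction is needed here; in that sense your proof buys a complete justification of the proposition as stated, at the cost of a page of arithmetic the paper elides. The closure under finite mutations is handled identically in both (via $\cared \subseteq \cared_h$ and \cref{thm:mutations}).
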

\begin{proof}\quad
    \begin{enumerate}
        \item We have $\cared \subseteq \cared_h$ as any 1CA $M = (\sigma, N,
        \delta)$ has an equivalent h1CA $M' = (\sigma, N, \{\delta\})$, whereas
        $\cared_h \not \subseteq \cared$ because $0^\omega \not \cared (01)^\omega$
        but $0^\omega \cared_h (01)^\omega$ through $M'' = (\abc, 0, \{\delta_1,
        \delta_2\})$ where $\delta_1(\_) = 0$ and $\delta_2(\_) = 1$.
        \item This can be done with an FST as in \cref{thm:ca_subseteq_fst},
        with the difference that it must simulate a different $\delta_k$ 
        each step, possible by letting 
        $Q = (\abc \cup \{\bot\})^{2N+1} \times \{1,\dots,K\}$. \qed
    \end{enumerate}
\end{proof}

\begin{proposition}\label{hyca_pre}
    $(\abc^\omega, \cared_h)$ is a preorder. 
\end{proposition}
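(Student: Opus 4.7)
The plan is to mirror the proof of Proposition \ref{thm:preorder}: establish reflexivity with a trivial hybrid machine and then construct an explicit composition witnessing transitivity. Reflexivity is immediate by taking the hybrid 1CA with $N=0$, $K=1$, and a single identity rule $\delta_1(x) = x$, so the substantive work is transitivity.

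For transitivity, suppose $\sigma \cared_h \tau$ via $M_1 = (\abc, N_1, \{\delta_1^{(1)}, \dots, \delta_{K_1}^{(1)}\})$ and $\tau \cared_h \rho$ via $M_2 = (\abc, N_2, \{\delta_1^{(2)}, \dots, \delta_{K_2}^{(2)}\})$. I will build $M_3$ with radius $N_3 = N_1 + N_2$ and period $K_3 = \lcm(K_1, K_2)$, so that residues modulo $K_3$ simultaneously determine residues modulo $K_1$ and modulo $K_2$. For each $k \in \{0, \dots, K_3 - 1\}$, I define
\[
\delta_k^{(3)}(x_0, \dots, x_{2N_3}) := \delta_{k \bmod K_2}^{(2)}\bigl(\delta_{(k - N_2) \bmod K_1}^{(1)}(v_0), \dots, \delta_{(k + N_2) \bmod K_1}^{(1)}(v_{2N_2})\bigr),
\]
where $v_t := (x_t, \dots, x_{t + 2N_1})$ for $t \in \{0, \dots, 2N_2\}$. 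The intuition is exactly the one-step composition from Proposition \ref{thm:preorder}, but each ``slot'' in the inner and outer layer is labelled with the correct residue class so that the right local rule fires at the right offset.

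The verification step is to check that, for every position $i \in \N$, applying $\delta_{i \bmod K_3}^{(3)}$ to $\neigh_{\sigma, N_3}(i)$ reproduces $\rho(i)$. Because $K_1 \mid K_3$ and $K_2 \mid K_3$, letting $k = i \bmod K_3$ gives $k \bmod K_2 = i \bmod K_2$ and $(k \pm N_2) \bmod K_1 = (i \pm N_2) \bmod K_1$, so the inner applications compute precisely the values $\tau(i - N_2), \dots, \tau(i + N_2)$ and the outer application computes $\rho(i)$. The boundary behaviour at positions $i < N_3$ is handled exactly as in the standard 1CA case, since the $\bot$ symbols simply propagate through the composition.

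The main obstacle is bookkeeping: ensuring that the residue class indexing the composed rule lines up with the residue classes demanded by $M_1$ at each of the $2N_2+1$ inner applications and by $M_2$ at the single outer application. Choosing $K_3 = \lcm(K_1, K_2)$ (rather than $K_1 \cdot K_2$) is the key move that makes all these residues simultaneously readable from $i \bmod K_3$. Once this alignment is established, the rest of the argument is a direct unfolding of the definitions, and the closure of $\cared_h$ under finite mutations (already obtained in Proposition \ref{thm:hca_comp}) is not needed for the preorder property itself.
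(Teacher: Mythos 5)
Your proof is correct and takes essentially the same route as the paper: reflexivity via a trivial machine, and transitivity via an explicit composition of radius $N_1+N_2$ whose number of local rules is a common multiple of $K_1$ and $K_2$, so that the residue of $i$ modulo that common multiple determines which inner and outer rules must fire. The only differences are that you use $\lcm(K_1,K_2)$ where the paper uses $K_1\cdot K_2$ (both work), and that your bookkeeping of the shifted residues $(k-N_2+t)\bmod K_1$ for the inner applications is actually spelled out more explicitly than in the paper's own proof.
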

\begin{proof}
    Reflexivity is immediate. Take any $\sigma, \tau, \rho \in \abc^\omega$ and
    suppose that $\sigma \cared_h \tau$ and $\tau \cared_h \rho$ via $M_i =
    (\abc, N_i, \{\delta^{(i)}_1, \dots, \delta^{(i)}_{K_i}\})$ with $i = 1, 2$.
    Then, there exists 
    \[
    M_3 = (\abc, N_1 + N_2, \{\delta^{(3)}_1, \dots, \delta^{(3)}_{K_1 \cdot K_2}\})
    \]
    such that $\sigma \cared_h \rho$, which can be constructed similarly to
    \cref{thm:preorder}. Using the shorthand notation $\delta^{(3)}_{i, j} :=
    \delta^{(3)}_{(i-1) \cdot K_2 + j}$, let
        \[
        \trans^{(3)}_{i, j}(x_0, \dots, x_{2(N_1 + N_2)}) := \trans^{(2)}_j(\trans^{(1)}_i(v_0), \dots,\trans(v_{2 N_2})),
        \]
        where $v_k := (x_{k}, \dots, x_{k+2N_1})$. \qed
\end{proof}

\begin{lemma}\label{hyca_low}
    $\ultperiod$ is $\cared_h$-minimal for $\abc^\omega$, i.e., $\forall \sigma
    \in \abc^\omega$, $\forall \tau \in \ultperiod$, $\sigma \cared_h \tau$.
\end{lemma}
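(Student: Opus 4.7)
The plan is to exploit the defining feature of hybrid 1CAs: since the rule at position $i$ depends only on $i \bmod K$, a h1CA with constant local rules (radius $0$, each $\delta_j$ ignoring its argument) can write out an arbitrary purely periodic pattern without using its input at all. So the proof reduces to handling purely periodic streams first and then extending by finite mutation.

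First I would treat the purely periodic case. Suppose $\tau = (v_0 v_1 \cdots v_{k-1})^\omega$, so $\tau \in \ultperiod_{k'}$ for some $k' \mid k$. Choose $M = (\abc, 0, \{\delta_1, \dots, \delta_k\})$ with constant rules, setting $\delta_j(\_) := v_j$ for $1 \leq j \leq k-1$ and $\delta_k(\_) := v_0$ (so that the index convention $k \equiv i \pmod K$ produces the right letter at $i = 0$). Then $f_M(\sigma)(i) = v_{i \bmod k} = \tau(i)$ for every $\sigma \in \abc^\omega$, proving $\sigma \cared_h \tau$.

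Next, I would extend to arbitrary $\tau \in \ultperiod$. By definition, $\tau$ agrees from some index $n_0$ onwards with a purely periodic stream $\tau'$ (namely the shift of $\tau$ past its pre-period, extended periodically to all of $\N$). The preceding step gives $\sigma \cared_h \tau'$. Since $\tau$ and $\tau'$ differ in only finitely many positions, \cref{thm:mutations} yields $\tau' \cared \tau$, hence $\tau' \cared_h \tau$ by the inclusion $\cared \subseteq \cared_h$ from \cref{thm:hca_comp}. Transitivity of $\cared_h$ (\cref{hyca_pre}) then gives $\sigma \cared_h \tau$, as desired.

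There is essentially no obstacle here; the one thing to be careful about is matching the paper's cyclic indexing convention (rules are labelled $\delta_1, \dots, \delta_K$ and position $i$ uses the representative of $i$ modulo $K$ in $\{1, \dots, K\}$, so position $0$ invokes $\delta_K$). After that bookkeeping the argument is essentially a one-line construction plus two citations.
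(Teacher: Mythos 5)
Your proposal is correct and follows essentially the same route as the paper: a radius-$0$ hybrid 1CA with $K=|y|$ constant rules producing the periodic part, followed by closure under finite mutations (via $\cared\subseteq\cared_h$) and transitivity. Your extra care with the cyclic indexing convention ($i=0$ invoking $\delta_K$) is a minor bookkeeping point the paper glosses over, but it does not change the argument.
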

\begin{proof}
    Let $\sigma \in \abc^\omega$ and $\tau = x \cdot y^\omega \in \ultperiod$,
    with $x, y \in \abc^*$. Then, $\sigma \cared_h y^\omega$ with $M = (\abc,
    N, \{\trans_1, \dots, \trans_K\})$ where $N = 0$, $K = |y|$, and
    $\delta_i(\_) = y(i)$. We get $y^\omega \cared_h \tau$ since the relation
    is closed under finite mutations, and $\sigma \cared_h \tau$ by
    transitivity.
    \qed
\end{proof}

Since $\cared_h \subseteq \fstred$ by \cref{thm:hca_comp}, $\ultperiod$ must be
closed under $\caeq_h$, and thus also a degree. Therefore, $\ultperiod$ is the
minimal $\cared_h$-degree due to \cref{hyca_low}.

\begin{corollary}
   $\ultperiod$ is the minimal $\caeq_h$-degree. \qed
\end{corollary}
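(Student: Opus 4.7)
The plan is to break the statement into two sub-claims: (i) $\ultperiod$ is a single $\caeq_h$-equivalence class, i.e., a degree; and (ii) this degree lies below every other $\caeq_h$-degree. Sub-claim (ii) is almost immediate from \cref{hyca_low}, so the real work is in (i).

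For (i), I would argue in two directions. For the upward direction, every pair $\tau, \tau' \in \ultperiod$ is $\caeq_h$-equivalent: by \cref{hyca_low}, $\tau \cared_h \tau'$ and symmetrically $\tau' \cared_h \tau$, so $\tau \caeq_h \tau'$. For the downward direction, I need that if $\tau \in \ultperiod$ and $\tau \caeq_h \sigma$ then $\sigma \in \ultperiod$. Here I would use \cref{thm:hca_comp}, which gives $\cared_h \subseteq \fstred$, together with the fact (recalled in the excerpt as the basis for \cref{thm:closed_down}) that $\ultperiod$ is closed downward under $\fstred$. Explicitly, from $\tau \cared_h \sigma$ we obtain $\tau \fstred \sigma$, and since ultimate periodicity is preserved by FST transformations, $\sigma$ must be ultimately periodic. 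Combined with the upward direction, this shows that the equivalence class of any $\tau \in \ultperiod$ is exactly $\ultperiod$, so $\ultperiod$ is a well-defined element of $\hierarchy_h$.

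For (ii), I would apply \cref{hyca_low} directly: for every $\sigma \in \abc^\omega$ and every $\tau \in \ultperiod$ we have $\sigma \cared_h \tau$, which by the definition of $\caleq_h$ means $[\ultperiod]_{\caeq_h} \caleq_h [\sigma]_{\caeq_h}$ for all $\sigma$. Hence $\ultperiod$ is the minimum in $(\hierarchy_h, \caleq_h)$. The only subtle point, and what I would flag as the main potential obstacle, is the implicit appeal to closure of $\ultperiod$ under $\fstred$: the excerpt cites it via \cref{thm:closed_down}, which was itself justified by an external result of \cite{DBLP:journals/int/EndrullisHK11}, so one should make clear that this closure carries over verbatim to the hybrid setting through the inclusion $\cared_h \subseteq \fstred$ and does not require a separate argument specific to hybrid 1CAs.
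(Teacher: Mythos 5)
Your proposal is correct and follows essentially the same route as the paper: the paper also derives downward closure of $\ultperiod$ from the inclusion $\cared_h \subseteq \fstred$ (\cref{thm:hca_comp}) together with FST-closure of $\ultperiod$, and then invokes \cref{hyca_low} both to collapse $\ultperiod$ into a single $\caeq_h$-class and to establish minimality. Your write-up merely makes explicit the two-directional argument that the paper states in one line.
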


In conclusion, the hierarchy of streams arising from hybrid 1CAs has interesting
algebraic properties, and their presence in practical applications such as in
cryptography provides additional motivation to further explore this notion.

\subsection{Cellular Automata with Finite Word Output}\label{sec:finite_word}

We consider 1CAs where cells can output multiple letters, which can be
interpreted as having each cell split into multiple cells.

\begin{definition}
    A \emph{finite-word 1CA} is a triple $(\abc, N, \trans)$ defined like
    standard 1CAs, except that $\trans : (\abc \cup \{\bot\})^{2N+1} \to
    \abc^*$, and the global rule $f_M : \abc^\omega \rightharpoonup \abc^\omega$
    is $f_M(\sigma) = \prod_{i\in\N} \delta(\neigh_{\sigma,N}(i))$.

\end{definition}

\begin{definition}
    A \emph{finite-word 1CA reduction\/} is a relation $\cared_* \subseteq
    \abc^\omega \times \abc^\omega$ where $\sigma \cared_* \tau$ if and only if
    there exists a f1CA $M$ such that $f_M(\sigma) = \tau$. Additionally, we let
    $\sigma \caredl_* \tau \Leftrightarrow \tau \cared_* \sigma$, and $\caeq_* =
    \cared_* \cap \caredl_*$.
\end{definition}

\begin{proposition}\label{thm:fca_comp}
    $\cared_*$ compares as follows with the other notions:
    \begin{enumerate}
        \item $\cared \subsetneq \cared_*$,
        \item $\cared_* \subsetneq \fstred$,
        \item $\cared_h \not \subseteq \cared_*$, and $\cared_* \not \subseteq \cared_h$.
    \end{enumerate}
\end{proposition}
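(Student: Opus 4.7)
The plan is to handle the three parts in order. For Part 1, the inclusion $\cared \subseteq \cared_*$ is immediate, as every 1CA with local rule $\delta$ coincides with the f1CA whose rule outputs the length-$1$ word $\delta(\cdot)$; for strictness I would point to $0^\omega \cared_* (01)^\omega$, realized by the f1CA rule $\delta(\_) = 01$, together with $0^\omega \not\cared (01)^\omega$ from \cref{thm:periodic_divisibility}. For Part 2, the inclusion $\cared_* \subseteq \fstred$ adapts the FST construction of \cref{thm:ca_subseteq_fst} verbatim, except that once the $(2N+1)$-letter sliding window stored in the state is full, the output function emits the finite word $\delta(q)$ rather than a single letter. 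For strictness I reuse the streams $\sigma, \tau$ of \cref{thm:fst_not_subseteq_ca} and argue by contradiction: an alleged f1CA $(2, N, \delta)$ with $f_M(\sigma) = \tau$ must have $\delta(0^{2N+1}) = 0$ by matching the lengths of the $0$-padding of the $i$-th block in $\sigma$ and in $\tau$, which forces the f1CA to be length-preserving per block. For $i$ large enough that the $i$-th block of $\sigma$ is isolated, all non-trivial neighborhoods inside the block depend only on $\xi(i+1)$, so the whole block's output is a word determined by $\xi(i+1)$; yet the block's last two letters in $\tau$ are $1 \cdot \xi(i)$, making $\xi(i)$ a function of $\xi(i+1)$, a conclusion contradicted by taking $i \equiv 0$ and $i' \equiv 2 \pmod 3$ with $\xi(i+1) = \xi(i'+1) = 0$ and $\xi(i) \neq \xi(i')$.

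For Part 3, I propose to witness both incomparabilities using the sparse stream $\zeta = \prod_i 1 \cdot 0^i$, whose $1$'s lie at the triangular positions $T_l = l(l+1)/2$. For $\cared_* \not\subseteq \cared_h$, let $\tau$ be the doubled $\zeta$, i.e.\ $\tau(2i) = \tau(2i+1) = \zeta(i)$: then $\zeta \cared_* \tau$ holds via $\delta(x_{-N}, \ldots, x_N) = x_0 x_0$, while for any alleged h1CA of period $K$ applied to $\zeta$ the positions of $\zeta$ further than $N$ from every $T_l$ share the all-zero neighborhood, so the output at those positions must be periodic of period dividing $K$; but $\tau$ has $1$'s at $2T_l$ and $2T_l + 1$, which for large $l$ are themselves far-in-$\zeta$ positions, so matching the residue mod $K$ of such a $1$-position in $\tau$ with any far-in-$\zeta$ position where $\tau = 0$ forces the same rule index to output both $0$ and $1$ on the all-zero neighborhood. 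For $\cared_h \not\subseteq \cared_*$, let $\tau$ be $\zeta$ with every letter at an even position flipped, so that $\zeta \cared_h \tau$ is immediate with $K = 2$ (complement at even positions, identity at odd); conversely, an alleged f1CA $(2, N, \delta)$ yielding the reduction would have $u := \delta(0^{2N+1})$ govern the output wherever $\zeta$'s neighborhood is all zeros. Then a long stretch of consecutive far-in-$\zeta$ inputs (which exist because the gaps between $1$'s of $\zeta$ grow unboundedly) contributes $u$ repeatedly, so $\tau$ must be $|u|$-periodic on the corresponding contiguous range of output positions; but $\tau$ on that range is a parity-alternating pattern with ``flipped'' exceptions precisely at the triangular numbers $T_m$ lying in the range, and for a sufficiently long stretch the $T_m$'s occupy multiple residues modulo $|u|$, forcing $u$ to disagree with itself on at least one letter.

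The hard part will be this last step: the mapping from $\zeta$-positions to $\tau$-positions via the cumulative output length is perturbed by the $O(\sqrt{n})$ near-$1$ contributions in $\zeta$, whose $\delta$-images may have length different from $|u|$ and so shift the output positions. This will require handling the trivial small cases first (e.g.\ $|u| = 0$ makes the output finite and $|u| = 1$ collapses the f1CA to a 1CA whose constant far-position output clashes with $\tau$'s alternation) and then confronting the $|u|$-periodic behaviour on each stretch with the irregular spacing of triangular numbers, cleanly enough that an unavoidable collision at some $T_m$ can still be exhibited.
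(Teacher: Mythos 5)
Your parts 1 and 2 follow the paper's route: the inclusion $\cared\subseteq\cared_*$ is trivial, strictness of part 1 uses $0^\omega$ versus $(01)^\omega$, and the inclusion $\cared_*\subseteq\fstred$ adapts the sliding-window FST of \cref{thm:ca_subseteq_fst}. (The paper is terse here, deferring to \cref{thm:hca_comp}; your length-preservation argument forcing $\delta(0^{2N+1})$ to be the single letter $0$ before rerunning the \cref{thm:fst_not_subseteq_ca} contradiction is a correct and useful filling-in of the extra step the finite-word setting requires.)

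Part 3 is where you genuinely diverge. The paper witnesses $\cared_*\not\subseteq\cared_h$ by doubling an arbitrary (e.g.\ noncomputable) stream, so that a length-preserving, information-local hybrid rule visibly cannot recover $\xi(\lfloor j/2\rfloor)$ from a window around $j$; and it witnesses $\cared_h\not\subseteq\cared_*$ with $\sigma=\prod_i 0^{2^i}\cdot 1$ versus $\tau=\prod_i 0^{3\cdot 2^{2i}+1}\cdot 1$, where the contradiction is a pure growth-rate mismatch: a finite-word rule stretches gaps by the constant factor $|\delta(0^{2N+1})|$, so gaps of order $2^i$ can never become gaps of order $4^i$. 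Your witnesses (the doubled $\zeta$ and the parity-flipped $\zeta$, with $\zeta=\prod_i 1\cdot 0^i$) are fully explicit and computable, which is aesthetically preferable to invoking a noncomputable stream, but they buy this at the cost of arithmetic of triangular numbers. Two points need attention. First, in your $\cared_*\not\subseteq\cared_h$ argument, the claim that $2T_l$ and $2T_l+1$ are ``far-in-$\zeta$'' positions for large $l$ is asserted without proof and is the crux of the contradiction; it is true (assuming $|2T_l-T_{m_l}|\le N$ for consecutive large $l$ forces $m_l\approx 2l$ or $m_l\approx l$, contradicting $m_l\approx l\sqrt{2}$), but this Pell-type step must be written out, and you should also first rule out $\delta_k(0^{2N+1})=1$ via the density-$0$ of ones in the doubled $\zeta$. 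Second, your $\cared_h\not\subseteq\cared_*$ argument correctly isolates the hard alignment issue and the $|u|\in\{0,1\}$ degenerate cases, but the remaining confrontation of $|u|$-periodicity with the flips at triangular positions still needs a Fine--Wilf-style step and a check that the all-zero output stretches are long enough to capture a flip; the paper's exponential-gap construction makes all of this unnecessary. So: correct strategy throughout, same route for parts 1--2, a different and workable but more delicate route for part 3, with one step (the ``$2T_l$ is far from every $T_m$'' claim) currently unjustified.
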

\begin{proof}
    1.\ and 2.\ are similar to the proof in \cref{thm:hca_comp}.

    To show $\cared_* \not \subseteq \cared_h$, take any $\xi \in \abc^\omega$
    and let $\xi \cared_* f_M(\xi)$ with the finite-word 1CA $M = (\abc, 0,
    \delta)$ and $\delta(x) = xx$. Hybrid 1CAs cannot perform this reduction in
    general, say, when $\xi$ is noncomputable. Therefore, we conclude $\cared_h
    \subsetneq \cared_*$.
    To show $\cared_h \not \subseteq \cared_*$, define
    \[
    \sigma := \prod_{i=0}^\infty 0^{2^i} \cdot 1 \quad \text{and} \quad \tau := \prod_{i=0}^\infty 0^{3 \cdot 2^{2i} + 1} \cdot 1.
    \] 
    We get $\sigma \cared_h \tau$ by replacing every letter 1 at an even index
    in $\sigma$ with a 0 via the h1CA $\delta_1(x) = x$ and $\delta_2(\_) = 0$,
    as $3 \cdot 2^{2i} + 1 = 2^{2i} + 1 + 2^{2i+1}$. In contrast, $\sigma \not
    \cared_* \tau$ given that, for any $N$, local neighborhoods in $\sigma$ of
    the form $0^N \cdot 1 \cdot 0^N$ must map to both 0 and 1 in $\tau$.
    \qed
\end{proof}

As a corollary of \cref{thm:mutations} and \cref{thm:fca_comp}, $\cared_*$ is
closed under finite mutations.

\begin{lemma}
   $\ultperiod$ is $\cared_*$-minimal for $\abc^\omega$, that is \[\forall \sigma \in
   \abc^\omega, \forall \tau \in \ultperiod, \sigma \cared_* \tau.\]
\end{lemma}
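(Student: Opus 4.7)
The plan is to give a direct construction of a finite-word 1CA that sends an arbitrary $\sigma \in \abc^\omega$ to a given ultimately periodic $\tau \in \ultperiod$. The key observation is that a finite-word 1CA has considerably more expressive power than a standard 1CA in the ``output'' direction: each cell can emit an entire finite word, so a single cell can single-handedly produce the entire non-periodic prefix. In contrast to the hybrid case (\cref{hyca_low}), we do not need to alternate between rules to produce different letters; instead, we use the $\bot$ boundary marker to distinguish the very first cell from all others, letting that first cell output a seed that establishes the periodic pattern.

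Concretely, I would write $\tau = x \cdot y^\omega$ with $x \in \abc^*$ and $y \in \abc^+$ the minimal period of $\tau$, and construct $M = (\abc, 1, \delta)$ by
\[
\delta(a_{-1}, a_0, a_1) =
\begin{cases}
    x \cdot y & \text{if } a_{-1} = \bot,\\
    y & \text{otherwise.}
\end{cases}
\]
Since only position $i = 0$ has $\bot$ in the leftmost slot of its neighborhood, we obtain
\[
f_M(\sigma) = \delta(\bot, \sigma(0), \sigma(1)) \cdot \prod_{i \geq 1} \delta(\sigma(i-1), \sigma(i), \sigma(i+1)) = (x \cdot y) \cdot y^\omega = x \cdot y^\omega = \tau,
\]
so $\sigma \cared_* \tau$ as required. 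Alternatively, one could first reduce $\sigma \cared_* y^\omega$ via the degenerate rule $\delta(a) = y$ at $N = 0$, and then invoke the closure of $\cared_*$ under finite mutations (a corollary of \cref{thm:mutations} and \cref{thm:fca_comp}) to pass from $y^\omega$ to $\tau$, but the direct construction above is cleaner and avoids any appeal to transitivity of $\cared_*$.

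There is essentially no difficult step here; the result is a straightforward exercise in the definition of finite-word 1CAs. The only minor subtlety is to handle the boundary correctly, which is done by the $\bot$ marker in the first cell's neighborhood. This mirrors the pattern of \cref{hyca_low} for hybrid 1CAs, but in a simpler form, since the extended output alphabet $\abc^*$ removes the need to cycle through multiple local rules to reproduce the periodic block $y$.
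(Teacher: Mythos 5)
Your direct construction is correct and is essentially the paper's own proof: the paper likewise uses a radius-$1$ rule that outputs the prefix when the leftmost neighborhood slot is $\bot$ and the period block $y$ otherwise (you emit $x\cdot y$ at the first cell instead of $x$, which is an immaterial difference). You were also right to prefer the direct construction over the two-step alternative, since $\cared_*$ is not transitive (\cref{thm:f1ca_not_transitive}) and that route would not go through.
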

\begin{proof}
   Let $\sigma \in \abc^\omega$ and $\tau = x \cdot y^\omega \in \ultperiod$,
   such that $x, y \in \abc^\omega$. Then, we get $\sigma \cared_* \tau$ via the
   1CA with rule
   \[
       \trans(c, \_, \_) = \begin{cases}
           x \quad &\text{ if } \quad c = \bot\\
           y \quad &\text{ else.} 
       \end{cases}
       \qed
   \]
\end{proof}

However, the reduction is not transitive as we show next:

\begin{proposition}\label{thm:f1ca_not_transitive}
    Relation $\cared_*$ is not transitive on $\abc^\omega$.
\end{proposition}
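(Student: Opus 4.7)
The plan is to exhibit three explicit streams $\sigma, \tau, \rho$ witnessing that $\sigma \cared_* \tau$ and $\tau \cared_* \rho$ can hold while $\sigma \not\cared_* \rho$. The mechanism I want to exploit is that a finite-word 1CA may emit the empty word, which lets us \emph{compress} a sparse stream by erasing its padding cells. A subsequent f1CA on the compressed stream can then combine information from positions that were arbitrarily far apart in the original; no bounded-radius rule on the original can reproduce this, because its neighborhood simply cannot span the padding.

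Concretely, I would work over the alphabet $\{0,1,2\}$ with $2$ as a fresh padding letter, and set
\[
    \sigma = \prod_{i \ge 0} 2^{2^i} \cdot \TM(i), \qquad \tau = \TM, \qquad \rho = \PD.
\]
Then $\sigma \cared_* \tau$ is witnessed by the f1CA with $N=0$ and local rule $\delta_1(2) = \varepsilon$, $\delta_1(0) = 0$, $\delta_1(1) = 1$, which deletes every padding cell. The reduction $\tau \cared_* \rho$ is witnessed by the XOR 1CA of \cref{ex:1ca_xor} (every 1CA is a finite-word 1CA). Hence both hypotheses required for transitivity are in place.

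The heart of the argument is showing $\sigma \not\cared_* \rho$. Assume for contradiction that some f1CA $M = (\{0,1,2\}, N, \delta)$ satisfies $f_M(\sigma) = \rho$. For any $i$ with $2^i > 2N + 1$, every position inside the $i$-th padding block of $\sigma$ has a neighborhood of only $2$'s, so $\delta$ emits a fixed word $w_2$ at all such positions. If $w_2$ were non-empty, $\rho$ would contain arbitrarily long powers $w_2^k$, contradicting the bounded critical exponent of $\PD$ (it is the fixed point of the primitive morphism $0 \mapsto 11,\ 1 \mapsto 10$, which admits no $w^r$ factor for $r$ above a fixed constant). So $w_2 = \varepsilon$, and the only nonempty contributions to $f_M(\sigma)$ come from the $2N+1$ cells surrounding each $\TM(i)$. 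For large $i$ those contributions depend only on $\TM(i)$, so up to a finite prefix, $\PD$ coincides with $W(\TM(j_0)) \cdot W(\TM(j_0+1)) \cdot W(\TM(j_0+2)) \cdots$ for some fixed $W(0), W(1) \in \{0,1\}^*$; in other words, $\PD$ must be a constant substitution image of $\TM$.

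The main obstacle is the final combinatorial step: showing $\PD$ is \emph{not} such a substitution of $\TM$. In the equal-length case $|W(0)| = |W(1)| = k$ this follows by case analysis on $k$. For $k = 1$ it would force $\PD$ to be $\TM$ or $\overline{\TM}$ up to a shift, refuted by the first few bits. For $k = 2$ the letter $W(1)(1)$ is overdetermined, since whenever $\TM(j_0 + i) = \TM(j_0 + i') = 1$ one needs $W(1)(1) = \PD(\ell_0 + 2i + 1) = \PD(\ell_0 + 2i' + 1)$, and $\TM$ supplies pairs $i, i'$ on which the two $\PD$ values disagree. Larger $k$ can be ruled out through the $2$-kernel characterization of $2$-automatic sequences together with the defining relation $\PD(i) = \TM(i) \oplus \TM(i+1)$. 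The variable-length case $|W(0)| \neq |W(1)|$ is more delicate, but it can be handled by tracking how the starting positions of the $W$-blocks depend on partial sums over $\TM$, and arguing that this pattern is incompatible with the autocorrelation inherited by $\PD$ from its XOR relation with $\TM$. Ruling out both cases completes the contradiction, establishing that $\cared_*$ is not transitive.
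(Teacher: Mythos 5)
Your construction is the same one the paper uses: pad a binary stream with a third letter in ever-growing blocks, erase the padding with an $N=0$ finite-word rule, then apply the XOR automaton, and argue that no single bounded-radius rule can bridge the gaps. The first two reductions are fine, and the first half of your impossibility argument is genuinely solid — arguably more carefully worked out than the paper's own one-line justification: you correctly force $w_2 = \varepsilon$ from the fact that $\PD$ has no unbounded powers, and you correctly conclude that $f_M(\sigma)$ must eventually have the form $\prod_i W(\TM(i))$ for two fixed words $W(0), W(1)$ depending only on the letter in the center of the neighborhood.

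The gap is in the endgame, and it is one you created for yourself by fixing $\xi = \TM$. You reduce everything to the claim that $\PD$ is not an eventual letter-to-word substitution image of $\TM$, and then you do not prove it: the case $|W(0)| = |W(1)| = k$ for $k \ge 3$ is dispatched with a pointer to ``the $2$-kernel characterization,'' and the variable-length case is explicitly deferred with ``can be handled by tracking \dots and arguing \dots'' — neither is an argument, and the variable-length case (where the position of the block $W(\TM(i))$ inside $\PD$ depends on the partial sums of $\TM$, so the parity tricks from $k=2$ no longer apply) is exactly where the difficulty lives. You should also handle $W(0) = \varepsilon$ or $W(1) = \varepsilon$ explicitly, though those cases are easy since they force eventual periodicity. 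The paper sidesteps this entire obligation by leaving $\xi$ free: once one knows that for each fixed f1CA the output is eventually $\prod_i W(\xi(i))$, one can simply choose $\xi$ adversarially (or diagonalize over the countably many f1CAs) so that $\orbit(\xi)$ disagrees with every such substitution image, rather than having to establish a nontrivial combinatorial theorem about two specific automatic sequences. Either complete the $\TM$/$\PD$ claim in full, or revert to a free $\xi$ and finish by diagonalization.
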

\begin{proof}
    Fix $\abc = 3 = \{0, 1, 2\}$ and let $\xi \in \{0, 1\}^\omega$ be arbitrary.
    Then, define stream
    \[
    \rho = \prod_{i \in \N} \xi(i)\cdot 2^i = 
    \xi(0)\cdot\xi(1)\cdot 2 \cdot\xi(2)\cdot22\cdot\xi(3)\cdot222\cdot \xi(4)\cdot 2222\cdot\dots,
    \]
    and let $M$ be the 1CA that computes pairwise XOR (see \cref{ex:1ca_xor}):
    $f_M(x\app y \app\sigma) = (x \oplus y) \cdot f_M(y \app\sigma)$. We can reduce $\rho
    \cared_* \xi$ with $\trans(x) = \varepsilon$ if $x = 2$ and $\trans(x) = x$
    otherwise. Also $\xi \cared_* f_M(\xi)$, whereas $\rho \not \cared_*
    f_M(\xi)$ for most $\xi \in \abc^\omega$, as the gaps between the $\xi(i)$
    increase.
    \qed
\end{proof}

Since $(\abc^\omega, \cared_*)$ is not a preorder, this model is less
interesting than our 1CA reductions. Still, future work could study the
transitive closure of $\cared_*$ as a measure of stream complexity.

\subsection{Other Notions of Reducibility}\label{sec:other_alternatives}

\paragraph{Limit-Reducibility}
Allowing indefinite execution of CA rules leads to \emph{limit
sets}~\cite{wolfram1984computation}, motivating \emph{limit 1CA} reducibility,
where streams are transformed through CA whose cells may run indefinitely. Since
elementary cellular automata are Turing
complete~\cite{DBLP:journals/compsys/000104}, limit 1CA reductions may share
similarities with Turing reductions.

\paragraph{Two-Sided Cellular Automata}
FST reducibility naturally extends to two-sided
streams~\cite{bosma2017ordering}, and can be done similarly with 1CA. Unlike
one-sided 1CAs, two-sided 1CAs are not closed under finite mutations and can
have finite degrees (e.g., the degree of $\dots0000\dots$ is finite), even
though results on periodic and sparse streams are likely to carry over.

\section{Conclusions and Future Work}

To summarize, $(\hierarchy, \caleq)$ is a poset where reductions are invariant
under finite changes to the streams. We fully classify ultimately periodic
degrees, by identifying that $\ultperiod_n \caleq \ultperiod_m$ if and only if
$n \mid m$. Additionally, we found infinite descending chains by strategically
compressing ultimately periodic streams into a single stream, showing that the
hierarchy is not well-founded. 
Sparse streams are generally atoms, their degrees only contain weakly sparse
streams, and weakly sparse streams can encode periodic streams within them.
Maximal streams must include every finite word infinitely many times, and
suprema of sets of streams do not generally exist. \Cref{fig:hierarchy} presents
a schematic picture of the hierarchy.

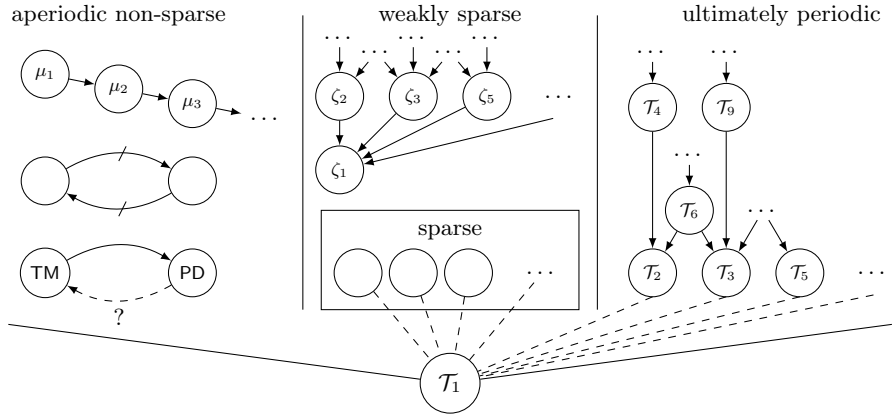
\begin{figure}[tb]
\centering
\resizebox{\columnwidth}{!}{
    {\small
      \begin{tikzpicture}[->, >=latex]
    \node[state] at (0, 0) (t1) {$\ultperiod_1$};
    \node[] at (-4.5, 5.0) {aperiodic non-sparse};
    \node[] at (0, 5.0) {weakly sparse};
    \node[] at (4.5, 5.0) {ultimately periodic};

    \draw[-] (t1) -- (6, 0.8);
    \draw[-] (t1) -- (-6, 0.8);
    \draw[-] (-2, 1) -- (-2, 5);
    \draw[-] (2, 1) -- (2, 5);

    \node[state, scale=0.8] at (-5.5, 4.2) (m0) {$\mu_1$};
    \node[state, scale=0.8] at (-4.5, 4.0) (m1) {$\mu_2$};
    \node[state, scale=0.8] at (-3.5, 3.8) (m2) {$\mu_3$};
    \node[] at (-2.5, 3.6) (mm) {$\dots$};

    \draw[->] (m0) -- (m1);
    \draw[->] (m1) -- (m2);
    \draw[->] (m2) -- (mm);

    \node[state, scale=0.8] at (-5.5, 2.75) (i0) {~};
    \node[state, scale=0.8] at (-3.5, 2.75) (i1) {~};
    \draw (i0) edge[->, bend left] (i1);
    \draw (i1) edge[->, bend left] (i0);
    \draw[-] (-4.5, 3.02) -- (-4.4, 3.22);
    \draw[-] (-4.5, 2.27) -- (-4.4, 2.47);

    \node[state, scale=0.8] at (-5.5, 1.5) (tm) {$\TM$};
    \node[state, scale=0.8] at (-3.5, 1.5) (pd) {$\PD$};
    \draw (tm) edge[->, bend left] (pd);
    \draw[dashed] (pd) edge[->, bend left] node[below] {?} (tm);

    \node[state, scale=0.8] at (-1.5, 2.9) (z1) {$\zeta_1$};
    \node[state, scale=0.8] at (-1.5, 3.9) (z2) {$\zeta_2$};
    \node[state, scale=0.8] at (-0.5, 3.9) (z3) {$\zeta_3$};
    \node[state, scale=0.8] at (0.5, 3.9) (z5) {$\zeta_5$};

    \node[] at (-1.5, 4.7) (dz2) {$\dots$};
    \node[] at (-0.5, 4.7) (dz3) {$\dots$};
    \node[] at (0.5, 4.7) (dz5) {$\dots$};
    \node[] at (1.5, 3.9) (dz7) {$\dots$};
    \node[] at (-1.0, 4.5) (dz23) {$\dots$};
    \node[] at (0.0, 4.5) (dz35) {$\dots$};

    \draw[->] (z2) -- (z1);
    \draw[->] (z3) -- (z1);
    \draw[->] (z5) -- (z1);
    \draw[->] (1.4, 3.6) -- (z1);
    \draw[->] (dz2) -- (z2);
    \draw[->] (dz3) -- (z3);
    \draw[->] (dz5) -- (z5);
    \draw[->] (dz23) -- (z2);
    \draw[->] (dz23) -- (z3);
    \draw[->] (dz35) -- (z3);
    \draw[->] (dz35) -- (z5);

    \draw[] (-1.75,2.35) rectangle (1.75,1.0);
    \node[] at (0, 2.05) {sparse};
    \node[state, scale=0.8] (s1) at (-1.25, 1.5) {};
    \node[state, scale=0.8] (s2) at (-0.5, 1.5) {};
    \node[state, scale=0.8] (s3) at (0.25, 1.5) {};
    \node[] at (1.25, 1.5) (ds) {$\dots$};

    \draw[-, dashed] (s1) -- (t1);
    \draw[-, dashed] (s2) -- (t1);
    \draw[-, dashed] (s3) -- (t1);
    \draw[-, dashed] (ds) -- (t1);

    \node[state, scale=0.8] at (2.75, 1.5) (t2) {$\ultperiod_2$};
    \node[state, scale=0.8] at (3.75, 1.5) (t3) {$\ultperiod_3$};
    \node[state, scale=0.8] at (4.75, 1.5) (t5) {$\ultperiod_5$};
    \node[state, scale=0.8] at (2.75, 3.75) (t4) {$\ultperiod_4$};
    \node[state, scale=0.8] at (3.25, 2.35) (t6) {$\ultperiod_6$};
    \node[state, scale=0.8] at (3.75, 3.75) (t9) {$\ultperiod_9$};

    \node[] at (5.75, 1.5) (dt7) {$\dots$};
    \node[] at (2.75, 4.5) (dt4) {$\dots$};
    \node[] at (3.75, 4.5) (dt9) {$\dots$};
    \node[] at (3.25, 3.1) (dt6) {$\dots$};
    \node[] at (4.25, 2.35) (dt15) {$\dots$};

    \draw[-, dashed] (2.75, 1.175) -- (t1);
    \draw[-, dashed] (3.75, 1.175) -- (t1);
    \draw[-, dashed] (4.75, 1.175) -- (t1);
    \draw[-, dashed] (5.75, 1.2) -- (t1);

    \draw[->] (t4) -- (t2);
    \draw[->] (t9) -- (t3);
    \draw[->] (t6) -- (t2);
    \draw[->] (t6) -- (t3);
    \draw[->] (dt4) -- (t4);
    \draw[->] (dt9) -- (t9);
    \draw[->] (dt6) -- (t6);
    \draw[->] (dt15) -- (t3);
    \draw[->] (dt15) -- (t5);
      \end{tikzpicture}
      }
}
\caption{An illustration of the hierarchy of 1CA degrees. Each circle represents a degree, while arrows represent reducibility. It includes infinite descending chains, incomparable degrees, $\TM \cared \PD$, sparse and weakly sparse degrees, and ultimately periodic degrees.}
\label{fig:hierarchy}
\end{figure}

\paragraph{Future work} As discussed in \cref{sec:maxima}, the existence or
non-existence of maximal degrees is an open question. Additionally, we have left
out the classification of particular, well-known streams, as we consider it
orthogonal to our work. We conjecture $\TM \not \cared \PD$, which may be
proven by showing that the XOR operation cannot be undone; similarly to the
proof in \cref{thm:non_maximal}. Lastly, \cref{sec:alternatives} introduces
alternative notions of reducibility, which could serve as potential directions
for future work.

\begin{credits}
\subsubsection{\ackname}
Partially funded by the ERC Starting Grant 101077178 (DEUCE).

\subsubsection{\discintname}
The authors have no competing interests to declare that are relevant to the
content of this article.
\end{credits}

\bibliographystyle{splncs04}
\bibliography{bibliography}

\appendix

\section{Proofs of the classifications in the arithmetic hierarchy for \Cref{sec.algorithm}}\label{app.arithmetic}

In \cref{sec.algorithm}, we introduce an algorithm that takes two streams
as inputs and estimates whether one reduces to the other. Throughout, the input
streams are treated as oracles, without specifying how they are encoded. If we
represent streams via Turing machines, we can formalize a specific decision
problem and place it in the arithmetic hierarchy. Given the hardness of
reasoning about Turing machines, the problem is indeed undecidable in this
setting.

First, we assume (as usual) an effective enumeration of the partial computable
functions $\{\varphi_i\}_{i\in\N}$. We then say that a stream $\sigma \in
\abc^\omega$ is \emph{computable} whenever $\exists i \in \N, \forall j \in
\N, \varphi_i(j) = \sigma(j)$. This allows posing cellular automaton
reducibility as an arithmetical property:
\begin{definition}\label{def:computable_red}
    \emph{Computable 1CA reducibility} is the set:
    \begin{align*}
    \cacomp = \{\langle s, t \rangle \in \N~\mid~&\varphi_s, \varphi_t \text{ are total and} exists N \! \in \N, \forall i, i' \! \in \N,\\
    &\neigh_{\varphi_s,N}(i) = \neigh_{\varphi_s,N}(i') \Rightarrow \varphi_t(i) = \varphi_t(i')\}.
    \end{align*}
\end{definition}

\begin{lemma}
    $\cacomp \in \arithd^0_3$.
\end{lemma}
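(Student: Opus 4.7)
The plan is to bound the arithmetical complexity of $\cacomp$ by writing it in prenex normal form and carefully counting quantifier alternations. The predicate $\langle s, t\rangle \in \cacomp$ naturally splits into two conjuncts: (i) both $\varphi_s$ and $\varphi_t$ are total, and (ii) the reducibility condition
\[
\exists N \, \forall i, i' \, \bigl[ \neigh_{\varphi_s, N}(i) = \neigh_{\varphi_s, N}(i') \Rightarrow \varphi_t(i) = \varphi_t(i') \bigr].
\]
I would bound each conjunct separately and then combine them using basic closure properties of the arithmetical hierarchy.

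For (i), totality of a partial computable function is the classical $\arithp^0_2$ statement $\forall n \, \exists k \, T(s, n, k)$, where $T$ is Kleene's predicate; conjoining two totality statements keeps the complexity $\arithp^0_2$. For (ii), I would rephrase the implication by contraposition as the negation of a ``bad pattern'':
\[
\exists N \, \neg \exists i, i' \, \bigl[ \neigh_{\varphi_s, N}(i) = \neigh_{\varphi_s, N}(i') \wedge \varphi_t(i) \neq \varphi_t(i') \bigr].
\]
Each atomic fact of the form $\varphi_s(j) = v$ or $\varphi_t(j) = v$ unfolds to $\exists k \, [T(\cdot, j, k) \wedge U(k) = v]$ via Kleene's $T$-predicate and output extractor $U$, and is therefore $\ariths^0_1$. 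The finite conjunction inside the brackets (with the inequality witnessed by a pair of distinct output values) is then $\ariths^0_1$, and closing it existentially in $i, i'$ remains $\ariths^0_1$. Negating gives $\arithp^0_1$, and prefixing $\exists N$ yields a $\ariths^0_2$ bound for (ii).

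Combining, $\cacomp$ is a conjunction of a $\arithp^0_2$ formula (totality) and a $\ariths^0_2$ formula (reducibility). Using the standard embeddings $\arithp^0_2, \ariths^0_2 \subseteq \ariths^0_3 \cap \arithp^0_3$ obtained by prefixing a dummy quantifier, together with the closure of each of $\ariths^0_3$ and $\arithp^0_3$ under finite conjunction, I conclude that $\cacomp \in \ariths^0_3 \cap \arithp^0_3 = \arithd^0_3$.

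The main subtlety lies in the rewriting in step (ii): a direct translation of the implication as $\ariths^0_1 \Rightarrow \ariths^0_1$ only gives a $\arithp^0_1 \vee \ariths^0_1$ formula, which is not manifestly low enough for the counting above. Moving to the contrapositive absorbs the existential halting witnesses for $\varphi_s$ and $\varphi_t$ into a single existential block, preserving the $\ariths^0_1$ complexity of the bad pattern and thus yielding the clean $\arithp^0_1$ bound that the rest of the argument relies on. Once this rewriting is in place, the remainder is a routine quantifier-counting exercise.
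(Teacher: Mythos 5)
Your proposal is correct and is essentially the argument the paper intends: the paper's proof simply asserts that the bound follows from the defining formula of $\cacomp$, leaving the quantifier counting implicit, and your write-up supplies exactly that counting (totality as $\arithp^0_2$, the reducibility clause as $\ariths^0_2$ after passing to the contrapositive, and their conjunction landing in $\ariths^0_3 \cap \arithp^0_3$). The contrapositive rewriting you flag as the main subtlety is indeed the right way to keep the inner matrix $\arithp^0_1$, and the rest is routine.
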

\begin{proof}
This is a corollary of \cref{def:computable_red}.
\qed
\end{proof}

\begin{lemma}
    $\cacomp\in\ariths^0_2$-Hard.
\end{lemma}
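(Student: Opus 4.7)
My plan is to exhibit a many-one reduction from the classical $\ariths^0_2$-complete set $\mathrm{Fin} = \{e \mid W_e \text{ is finite}\}$ to $\cacomp$. Given an index $e$, the reduction will effectively produce a pair $\langle s, t(e)\rangle$, where $s$ is a fixed index for the computable stream $0^\omega$ and $t(e)$ is an index produced via the $s$-$m$-$n$ theorem so that $\varphi_{t(e)}$ computes a carefully designed total stream $\tau_e$ with the property that $\tau_e$ is ultimately constant if and only if $W_e$ is finite.

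The explicit definition will be as follows. Let $W_e^{j} = \{k \le j \mid e \text{ halts on } k \text{ within } j \text{ steps}\}$. Set $\tau_e(2j+1) = 0$ for every $j$, and $\tau_e(2j) = 1$ iff $|W_e^{j}| > |W_e^{j-1}|$. This is a total computable function of $\langle e, i\rangle$, so the $s$-$m$-$n$ theorem yields a computable $t$ with $\varphi_{t(e)} = \tau_e$; both $\varphi_s$ and $\varphi_{t(e)}$ are total, meeting the totality clause of $\cacomp$. Using \cref{thm:t1_low} and \cref{thm:mutations}, I will show that $0^\omega \cared \tau_e$ iff $\tau_e$ is ultimately constant: in one direction, any 1CA of radius $N$ applied to $0^\omega$ sees the constant neighborhood $0^{2N+1}$ at each $i \ge N$, forcing $\tau_e$ to be constant beyond $N$; in the other direction, a finite mutation from some $c^\omega \in \ultperiod_1$ is enough.

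It then suffices to verify the equivalence ``$\tau_e$ is ultimately constant iff $W_e$ is finite''. If $W_e$ is finite, the enumeration stabilizes after finitely many stages, so $\tau_e$ is eventually identically $0$, hence ultimately constant. If $W_e$ is infinite, infinitely many stages produce fresh discoveries, placing infinitely many $1$s at even positions of $\tau_e$, while the $0$s hard-wired at every odd position guarantee infinitely many $0$s; thus $\tau_e$ is not ultimately constant. The main obstacle (and the reason for the odd-position padding) is precisely this infinite case: without the interleaved $0$s, a naive parity-based definition like $\tau_e(j) = |W_e^j| \bmod 2$ could stabilize even when $W_e$ is infinite (if discoveries eventually occur in even batches, the parity freezes), breaking the reduction. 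The padding trivially fixes this, and chaining the three equivalences gives $e \in \mathrm{Fin}$ iff $\langle s, t(e)\rangle \in \cacomp$, establishing $\ariths^0_2$-hardness.
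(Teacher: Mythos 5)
Your proposal is correct and follows essentially the same route as the paper: both reduce from $\mathrm{Fin}$ by building a total computable stream that is ultimately constant exactly when the domain of $\varphi_e$ is finite, and both exploit the fact that a constant stream 1CA-reduces precisely to the ultimately constant streams. The only differences are cosmetic — the paper dovetails over pairs $(k,n)$ and marks halting events as $0$'s against a background of $1$'s paired with $1^\omega$, while you mark fresh enumeration discoveries as $1$'s with interleaved $0$-padding paired with $0^\omega$.
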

\begin{proof}
    It is enough to show that the set $\text{Fin} = \{e \in \N \mid
    \text{dom}(\varphi_e) \text{ is finite}\}$, a well-known
    $\ariths^0_2$-complete problem~\cite{soare1987}, many-one reduces to $\cacomp$.
    First, we introduce the next function:
    \[
    \varphi_e(k)_n = 
    \begin{cases}
        0 \quad \text{ if } \varphi_e(k) \text{ has halted precisely at step } n,\\
        1 \quad \text{ otherwise.}
    \end{cases}
    \]
    Let $\psi_e : \N \to \{0, 1\}$ be the total computable function obtained
    by dovetailing through all $\varphi_e(k)_n$. That is, given a bijection $\pi
    : \mathbb{N} \to \mathbb{N}^2$, we let $\psi_e(i) =
    \varphi_e(x)_y$ with $\pi(i) = (x, y)$. Then, we have $\varphi_e \in \text{Fin}
    \Leftrightarrow 1^\omega \cacomp \psi_e$, meaning that $\text{Fin} \leq_m \cacomp$,
    so $\cacomp$ is $\ariths^0_2$-Hard.
    \qed
\end{proof}

\begin{lemma}
    $\cacomp\in\arithp^0_2$-Hard.
\end{lemma}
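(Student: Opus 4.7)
The plan is to reduce the classical $\arithp^0_2$-complete problem
$\mathrm{Tot} = \{e \in \N \mid \varphi_e \text{ is total}\}$ directly to $\cacomp$, following the pattern of the preceding lemma but exploiting the trivial self-reduction of any stream. Concretely, I would define the total computable function $f : \N \to \N$ by $f(e) = \langle e, e \rangle$, where $\langle \cdot, \cdot \rangle$ is the standard pairing. This is evidently total computable, so the only thing to verify is the bi-implication $e \in \mathrm{Tot} \iff \langle e, e \rangle \in \cacomp$.

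For the $(\Rightarrow)$ direction, assume $\varphi_e$ is total. Then both coordinates of the pair, $\varphi_s = \varphi_t = \varphi_e$, are total, satisfying the first conjunct in \cref{def:computable_red}. For the reducibility conjunct, take $N = 0$: then $\neigh_{\varphi_e,0}(i) = (\varphi_e(i))$, so $\neigh_{\varphi_e,0}(i) = \neigh_{\varphi_e,0}(i')$ forces $\varphi_e(i) = \varphi_e(i') = \varphi_t(i) = \varphi_t(i')$. Equivalently, this is just the identity 1CA witnessing $\sigma \cared \sigma$ from \cref{thm:preorder}. For the $(\Leftarrow)$ direction, if $\varphi_e$ is not total then $\varphi_s$ is not total, so the very first clause in the definition of $\cacomp$ fails and $\langle e, e \rangle \notin \cacomp$.

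This yields $\mathrm{Tot} \leq_m \cacomp$, and since $\mathrm{Tot}$ is $\arithp^0_2$-complete, the lemma follows. There is no real obstacle here beyond recognising that the totality requirement built into \cref{def:computable_red} already does the work: pairing $e$ with itself lets the totality of $\varphi_e$ carry all of the $\arithp^0_2$ hardness, while the reducibility clause is discharged by the identity 1CA. Combined with the previous lemma, this will also allow concluding that $\cacomp$ lies strictly above both $\ariths^0_2$ and $\arithp^0_2$ in the arithmetic hierarchy, matching the $\arithd^0_3$ upper bound.
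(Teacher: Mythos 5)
Your proof is correct and follows exactly the paper's approach: the paper also reduces $\mathrm{Tot}$ to $\cacomp$ via $f(n) = \langle n, n\rangle$, leaving the verification implicit. You have simply spelled out the bi-implication (totality of both coordinates plus the identity 1CA for one direction, failure of the totality clause for the other), which the paper omits.
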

\begin{proof}
    The set $\text{Tot} = \{e \in \N \mid \varphi_e \text{ is total}\}$ is a
    well-known $\arithp^0_2$-complete problem~\cite{soare1987}, and $\text{Tot}$ many-one reduces
    to $\cacomp$ via $f(n) = \langle n, n\rangle$.
    \qed
\end{proof}

\section{$\orbit$-Orbits and Elementary Cellular Automata}\label{app:Orbits}

The \emph{first difference operator} is defined as
$
    \orbit(x:y:\sigma) = (x \oplus y) : \orbit(y : \sigma),
$
where $\oplus$ denotes addition modulo 2 (XOR). It is well known that
$\orbit(\text{TM}) = \text{PD}$, as mentioned in
\cref{ex:1ca_xor}. Motivated by this result,
\cite{DBLP:journals/delta-orbits} ask an essential question regarding the first
difference operator: what do we encounter by iterating $\orbit$ over and over?
To answer this question, they introduce the following concept: Given a stream
$\sigma$, its \textit{$\orbit$-orbit} is the sequence 
$
    \sigma, \orbit(\sigma), \orbit^2(\sigma), \orbit^3(\sigma), \dots
$
$\orbit$-orbits enable the ``graphical'' analysis of streams, where each stream
in the sequence of orbits is displayed as a row of pixels in an image. That is,
the first row of pixels in the image corresponds to $\sigma$, the second row to
$\orbit(\sigma)$, and so on. By employing this technique of ``graphical''
evaluation,~\cite{DBLP:journals/delta-orbits} visualize the $\orbit$-orbits of
the Sierpiński sequence, $\mathsf{S} = 00111100011000011\dots$, and the Mephisto
Waltz sequence, $\mathsf{W} = 0010011100010011\dots$, to make the curious
discovery that $\orbit^2(\mathsf{S}) = \orbit^3(\mathsf{W})$. These $\orbit$-orbits
are shown in \cref{fig:orbit}. Since the definitions of these two
sequences are seemingly unrelated to each other, this striking fact would be
hard to notice without the use of $\orbit$-orbits. 

As discussed in \cref{ex:1ca_xor}, 1CA can perform function $\orbit$.
This cellular automaton is an elementary cellular automaton (ECA), namely, Rule
102, which is classified in Class 3 of ECA together with Rule~30
\cite{wolfram1983statistical}. Moreover, a $\orbit$-orbit of a sequence is
precisely its space-time diagram under Rule~102. Interestingly, natural
structures and processes are often compared with the behavior of ECA in Class
3~\cite{coombes2009geometry}.
A natural next step would involve the analysis of the orbits produced by other
ECA, to look for connections between other well-known sequences. One may even
automate the process with an algorithm similar to
\cref{alg:ca_reduction}.

\begin{figure}[tb]
    \begin{center}
    \includegraphics[width=0.45\textwidth]{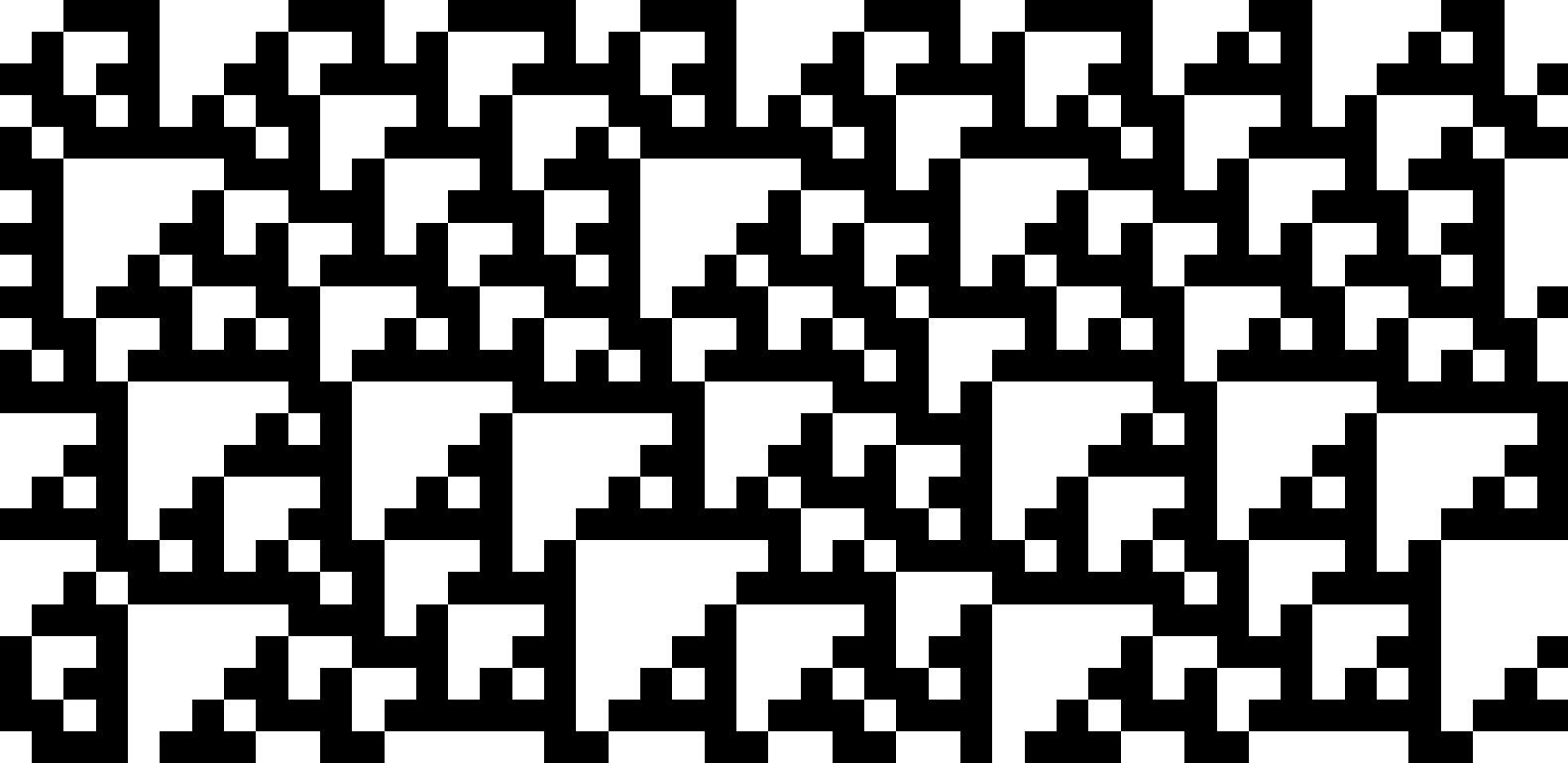}
    \hfill
    \includegraphics[width=0.45\textwidth]{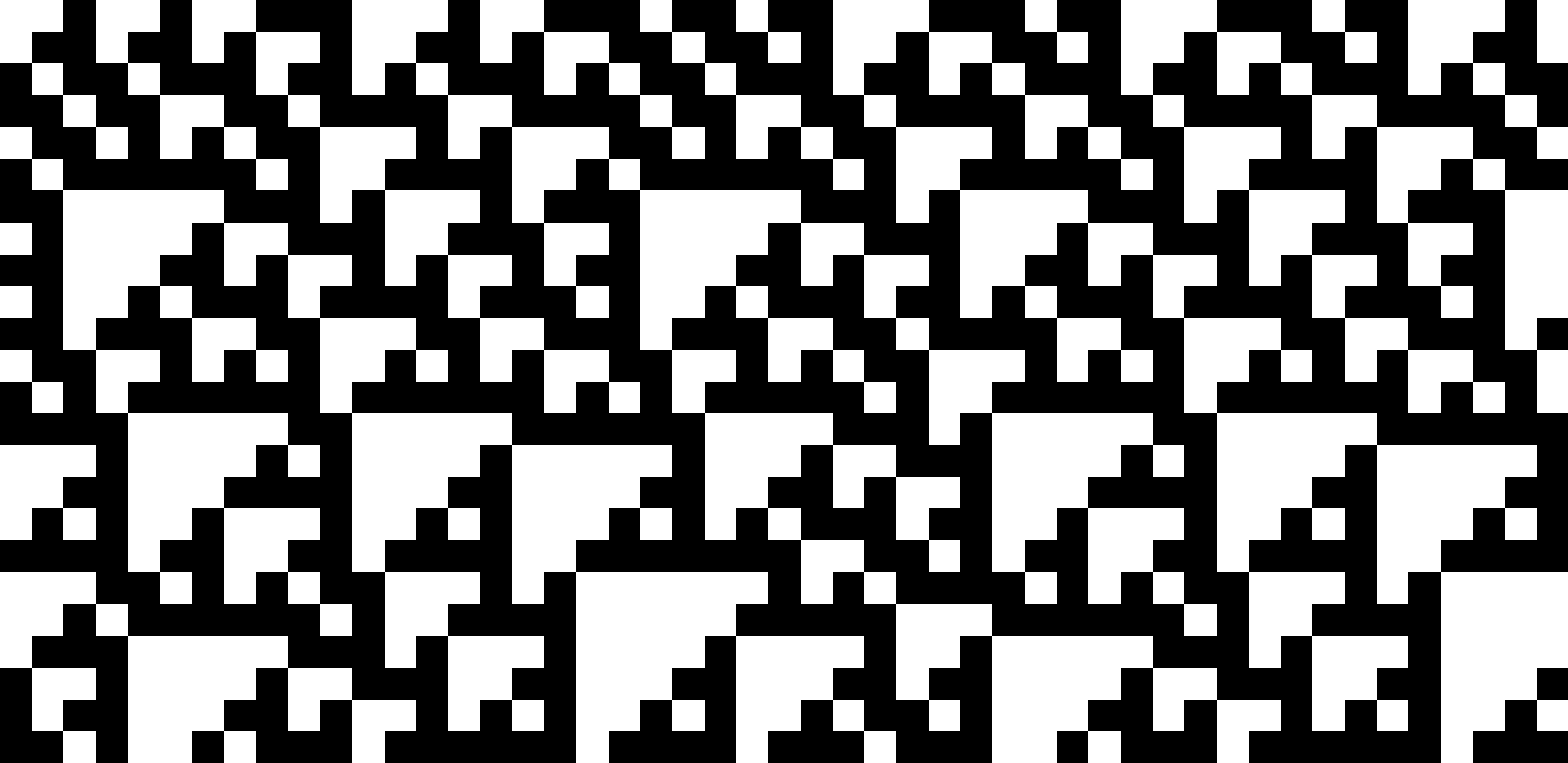}
    \end{center}
    \caption{The $\orbit$-orbits of the Sierpiński (left) and Mephisto-Waltz (right) sequences.}
    \label{fig:orbit}
\end{figure}

\end{document}